\documentclass[11pt]{article}

\usepackage[utf8]{inputenc}
\usepackage[T1]{fontenc}
\usepackage[UKenglish]{babel}

\usepackage{amsmath}
\usepackage{amssymb}
\usepackage{amsthm}
\usepackage[a4paper,top=25mm,left=23mm,%
right=23mm,bottom=25mm]{geometry}
\usepackage{graphicx}
\usepackage[dvipsnames,svgnames]{xcolor}
\usepackage{hyperref}
\usepackage{ebgaramond}
\usepackage{thm-restate}
\usepackage{authblk}
\usepackage{enumitem}
\usepackage{mathtools}
\usepackage{colortbl}
\usepackage{textgreek}
\usepackage{tikz}
\usepackage{xspace}

\bibliographystyle{plainurl}

\author{Vincent Jugé}
\affil{LIGM, CNRS, Univ Gustave Eiffel, F77454 Marne-la-Vallée, France}

\title{Reduction ratio of the IS-algorithm:
worst and random cases}

\date{}
\usepackage{url}
\usepackage{bm}

\newcommand{\A}{\mathcal{A}}
\newcommand{\B}{\mathcal{B}}

\renewcommand{\O}{\mathcal{O}}
\renewcommand{\S}{\mathcal{S}}
\newcommand{\U}{\mathcal{U}}
\newcommand{\V}{\mathcal{V}}
\newcommand{\X}{\mathcal{X}}
\newcommand{\Y}{\mathcal{Y}}
\newcommand{\bE}{\bm{E}}
\newcommand{\bG}{\bm{G}}
\newcommand{\bM}{\bm{M}}
\newcommand{\bS}{\bm{\S}}
\newcommand{\bU}{\bm{\U}}
\newcommand{\bV}{\bm{\V}}
\newcommand{\bX}{\bm{\X}}
\newcommand{\bY}{\bm{Y}}

\newcommand{\bmm}{\bm{m}}
\newcommand{\bmu}{\bm{\mu}}
\newcommand{\bnu}{\bm{\nu}}
\newcommand{\bpi}{\bm{\pi}}
\newcommand{\bbE}{\mathbb{E}}
\newcommand{\bbP}{\mathbb{P}}
\newcommand{\bbR}{\mathbb{R}}
\newcommand{\bbU}{\mathbb{U}}

\newcommand{\is}{\mathsf{is}}
\newcommand{\eis}{\mathsf{eis}}
\newcommand{\src}{\mathsf{src}}
\newcommand{\lex}{<_{\text{lex}}}
\newcommand{\ovl}[1]{\mathring{#1}}
\newcommand{\eqdef}{\smash{\stackrel{\small\text{def}}{\,=\!=\,}}}
\newcommand{\medslash}{{%
\begin{tikzpicture}
\draw(0,0)--(0.065,0.12);
\end{tikzpicture}}}
\newcommand{\medbackslash}{{%
\begin{tikzpicture}
\draw(0.065,0.12)--(0.13,0);
\end{tikzpicture}}}

\newcommand{\epri}{\textsf{EPRI}\xspace}

\newenvironment{myprop}[1]
  {\innerprop}
  {\endinnerprop}

\newenvironment{mycor}[1]
  {\innercor}
  {\endinnercor}

\newenvironment{mylem}[1]
  {\innerlem}
  {\endinnerlem}

\newtheorem{theorem}{Theorem}
\newtheorem{lemma}[theorem]{Lemma}
\newtheorem{corollary}[theorem]{Corollary}
\newtheorem{proposition}[theorem]{Proposition}
\theoremstyle{definition}

\sloppy
\clubpenalty=9999
\widowpenalty=9999
\binoppenalty=10000
\relpenalty=10000
\hyphenpenalty=9999

\begin{document}


\maketitle

\begin{abstract}
We study the IS-algorithm,
a well-known linear-time algorithm for
computing the suffix array of a word.
This algorithm relies on transforming the input
word~$w$ into another word, called the \emph{reduced}
word of~$w$, that will be at least
twice shorter; then, the algorithm recursively
computes the suffix array of the reduced word.
In this article,
we study the \emph{reduction ratio} of the
IS-algorithm, i.e., the ratio between the lengths
of the input word and 
the word obtained after reducing~$k$
times the input word.
We investigate both worst cases, in which we
find precise results, and random cases,
where we prove some strong convergence phenomena.
Finally, we prove that, if the input word is a
randomly chosen word of length~$n$, we should not
expect much more than~$\log(\log(n))$ recursive
function calls.
\end{abstract}

\section{Introduction}
\label{sec:introdution}

The suffix array of a word
is the permutation of its suffixes that orders them
for the lexicographic order.
Suffix arrays were introduced in 1990 by
Manber and Meyers~\cite{manber1993suffix}
as a space-efficient alternative to suffix trees.
Like suffix trees, they have been used
since then in many applications~\cite{abouelhoda2004replacing,
crochemore2008simple,mozgovoy2005fast}:
data compression, pattern matching,
plagiarism detection,~\ldots

Suffix arrays were first constructed \emph{via}
the construction of suffix trees. Then, various
algorithms were proposed to construct suffix arrays
directly~\cite{karkkainen2017engineering,karkkainen2003simple,
kim2003linear, ko2005space}. A more comprehensive
list of approaches towards constructing suffix
trees can be found in~\cite{puglisi2007taxonomy}.
In 2010, a new algorithm, called the
\emph{IS-algorithm}, was proposed
for constructing suffix arrays~\cite{nong2010two}.
This algorithm, which is extremely efficient in
practice, is recursive:
except if the letters of its input word~$w$ are pairwise distinct,
in which case the suffix array of~$w$ is easy to compute directly,
the algorithm transforms~$w$ into
a shorter word~$w'$ and deduces
the suffix array of~$w$ from
the suffix array of~$w'$.

Thus, the question of knowing the
\emph{reduction ratio}~$|w'| / |w|$ between the
lengths of the words~$w'$ and~$w$, as well
as the number of recursive calls, is critical
to evaluating the efficiency of the algorithm.
More generally, denoting by~$\is^k(w)$ the word
obtained after~$k$ recursive calls (with~$\is^0(w) = w$),
we wish to evaluate the ratio~$|\is^k(w)| / |w|$ for all~$k$, as well as computing the number of recursive
calls that the algorithm will make, i.e.,
the maximal value of~$k$.

In this article, we focus on these two questions
in two different contexts.
In Section~\ref{sec:worst-case}, we consider
worst cases, and prove that
there exist arbitrarily long words~$w$ such that~$|\is^k(w)| \approx 2^{-k} |w|$ for all~$k \leqslant \log_2(|w|) - 3$.

Then, in Section~\ref{sec:markov-case},
we refine the work of~\cite{nicaud2015probabilistic}
and consider words whose letters are generated by
a Markov chain of order~1. In this context, and under mild conditions
about the Markov chain, we prove,
for each integer~$k \geqslant 0$, that the ratio~$|\is^k(w)| / |w|$ almost surely tends to a given
constant~$\gamma_k$ when~$|w| \to +\infty$.
Finally, in Section~\ref{sec:independent-case},
we study the constant~$\gamma_1$ (and, in some cases,~$\gamma_2$) when the letters
of~$w$ are identically and independently generated
and, in Section~\ref{sec:number-of-calls}, we
propose upper bounds on the number of recursive steps
on the IS-algorithm when the letters of~$w$ are
given by a finite Markov chain.

\section{Preliminaries}
\label{sec:preliminaries}

\subsection{Definitions and notations}
\label{sec:definitions}

Let~$\A$ be a non-empty alphabet,
endowed with a linear order~$\leqslant$.
For every integer~$n \geqslant 0$, we denote by~$\A^n$ the set of
words of length~$n$ over~$\A$, i.e., the set of
sequences of~$n$ letters in~$\A$.
We also denote by~$\A^\ast$ the set of all
finite words over~$\A$, i.e., the union~$\bigcup_{n \geqslant 0} \A^n$,
and by~$\varepsilon$ the empty word.

Let~$w$ be a finite word over~$\A$.
We denote by~$|w|$ the length of~$w$, and by~$w_0,w_1,\ldots,w_{|w|-1}$ the letters of~$w$.
We may abusively denote by~$w_{-k}$ the letter~$w_{|w|-k}$, i.e., the~$k$\textsuperscript{th}
rightmost letter of~$w$.
For all integers~$i$ and~$j$ such that~$0 \leqslant i \leqslant j \leqslant |w|-1$, we also
denote by~$w_{i \cdots j}$ the word~$w_i w_{i+1} \cdots w_j$.
Every such word is called a \emph{factor} of~$w$.
If~$j = |w|-1$,
this word is a suffix of~$w$, and we
also denote it by~$w_{i \cdots}$.
Finally, given two words~$u$ and~$v$, we denote
by~${u \cdot v}$ their concatenation, i.e., the word~$u_0 u_1 \cdots u_{|u|-1} v_0 v_1 \cdots v_{|v|-1}$.

The \emph{suffix array}~\cite{manber1993suffix}
of a word~$w \in \A^\ast$
is the unique permutation~$\sigma$
of~$\{0,1,\ldots,|w|-1\}$ such that~$w_{\sigma(0)\cdots} \lex w_{\sigma(1)\cdots} \lex
\ldots \lex w_{\sigma(|w|-1)\cdots}$, where~$\lex$ denotes the lexicographic ordering.
The IS-algorithm~\cite{nong2010two} aims at
computing the suffix array of its input word~$w$
in time linear in~$|w|$,
when the alphabet~$\A$ is either a given finite set
or a subset of~$\{0,1,\ldots,|w|-1\}$.

\subsection{Unimodal factors and one-step reduction}
\label{sec:unimodal}

Let~$w$ be a finite word over~$\A$,
and let~$\$$ be a fictitious letter, called the
\emph{sentinel}, that is defined to be smaller
than all letters in~$\A$. Below, we simply
denote by~$\A_\$$ the set~$\A \cup \{\$\}$.

An integer~$i \leqslant |w|-1$
is said to be \emph{$w$-non-decreasing}
if there exists an integer~$j$ such that~$i+1 \leqslant j \leqslant |w|-1$ and~$w_i = w_{i+1} = \ldots = w_{j-1} < w_j$.
If, in addition,~$i \geqslant 1$ and~$w_{i-1} > w_i$, we say that~$i$ is
\emph{$w$-locally minimal}.

Then, let~$i_0 < i_1 < \ldots < i_{k-1}$ be the~$w$-locally minimal integers (with~$k \geqslant 0$).
We also set~$i_k = |w|$, and we abusively
set~$w_{|w|} = \$$. This amounts
to replacing~$w$ by the word~$w \cdot \$$,
whose suffix array is the same as the one of~$w$,
except that we appended the letter~$\$$ to every
suffix and that~$\$$ is now the least non-empty
suffix of~$w \cdot \$$.

We define the \emph{unimodal
factors} of~$w$, also called
\emph{LMS factors}~\cite{nicaud2015probabilistic,%
nong2010two},
as the~$k$ words~$w_{i_0 \cdots i_1}$, $w_{i_1 \cdots i_2}$,~\ldots,
$w_{i_{k-1} \cdots i_k}$, which belong to~$\A^+ \cdot (\varepsilon + \$)$.
We call these factors \emph{unimodal} because
each sequence~$w_{i_\ell},w_{i_\ell+1},
\ldots,w_{i_{\ell+1}}$ consists of a non-decreasing prefix
followed by a non-increasing suffix,
and we denote by~$\eis(w)$ --- for \emph{expanded
IS-reduction} of~$w$ --- the word over the infinite
alphabet~$\A^+ \cdot (\varepsilon + \$)$
whose letters are the unimodal factors of~$w$.

For instance, if~$w$ is the word \texttt{COMBINATORIAL}
over the latin alphabet~$\A$,
its unimodal factors are \texttt{BINA}, \texttt{ATO},
\texttt{ORIA} and \texttt{AL$\$$},
and thus~$\eis(w)$ is the four-letter word~$\texttt{BINA} \cdot \texttt{ATO} \cdot \texttt{ORIA} \cdot \texttt{AL\$}$
over the alphabet~$\A^+ \cdot (\varepsilon + \$)$.

In subsequent sections, we may extend to infinite words~$w$
(to which we append the letter~$\$$ if~$w$ is left-infinite,
but not if~$w$ is right-infinite)
the notions of~$w$-locally minimal integer, of unimodal factor,
and of expanded IS-reduction.

The IS-algorithm roughly works as follows:
\begin{enumerate}[itemsep=0pt]
\item\label{step:1}
compute~$w$-locally minimal integers and
the associated unimodal factors,
which form the letters of~$\eis(w)$;
\item\label{step:2}
sort these factors;
\item\label{step:3}
if~$w$ has~$\ell$ distinct unimodal factors,
identify each factor with an integer~$i \in \{0,1,\ldots,\ell-1\}$: factors~$f$ and~$f'$ such that~$f \lex f'$ are identified
with integers~$i$ and~$i'$ such that~$i < i'$;
\item\label{step:4}
identify the word~$\eis(w)$ with
a word~$\is(w)$ over the alphabet~${\{0,1,\ldots,\ell-1\}}$;
\item\label{step:5}
compute the suffix array of~$\is(w)$, either
directly (if the letters of~$\is(w)$ are pairwise distinct)
or recursively (if at least two letters of~$\is(w)$ coincide
with each other);
\item\label{step:6}
based on that array, sort all suffixes of~$w$.
\end{enumerate}

As mentioned by its authors~\cite{nong2010two},
steps \ref{step:1}, \ref{step:3} and \ref{step:4} of the algorithm
can clearly be performed in time~$\O(|w|)$. If~$\A$ is a given finite set,
or a subset of~$\{0,1,\ldots,|w|-1\}$,
bucket sorts allow sorting in linear time unimodal words
whose rightmost letters are already sorted,
thereby performing
steps \ref{step:2} and \ref{step:6}
in time~$\O(|w|)$. Finally,
no two consecutive integers~$i \leqslant |w|-1$
are~$w$-locally minimal, and therefore~$|\is(w)| \leqslant |w|/2$, thereby proving that
the IS-algorithm works in time~$\O(|w|)$.

Thus, a natural question would be that of evaluating
the constant hidden in this~$\O(|w|)$ running time.
To that end, we could focus closely on how each of the
steps \ref{step:1} to \ref{step:4} and \ref{step:6}
is performed. However,
several variants might be considered for performing
each of these steps. Consequently, we focus
on the step~\ref{step:5} and study
the behaviour of the ratio~$|\is(w)| / |w|$ or, more generally,~$|\is^k(w)| / |w|$.

\subsection{Markov chains and ergodicity}
\label{sec:probabilities}

In Sections~\ref{sec:markov-case}
to~\ref{sec:number-of-calls}, we consider
\emph{random} words, whose letters result from
a probabilistic process, and are random variables
that form a (homogeneous)
\emph{Markov chain} of order 1. Below, we focus exclusively
on such Markov chains, and thus abandon
the words ``homogeneous'' and ``of order 1''.

Let~$\S$ be a countable set,
let~$\mu : \S \mapsto \bbR$ be a probability
distribution, and let~$M : \S \times \S \mapsto \bbR$ be a
function such that~$\sum_{t \in \S} M(s,t) = 1$
for all~$s \in \S$.
A homogeneous Markov chain of order 1
with \emph{set of states}~$\S$, \emph{initial distribution}~$\mu$
and \emph{transition matrix}~$M$ is
a sequence of random variables~$(X_n)_{n \geqslant 0}$
with values in~$\S$ such that~$\bbP(X_0 = x) = \mu(x)$ for all~$x \in \S$ and
such that, for
every integer~$n \geqslant 1$ and every tuple~$(x_0,x_1,\ldots,x_n) \in \S^{n+1}$, we have
\[\bbP(X_n = x_n \mid X_0 = x_0,
X_1 = x_1,\ldots,X_{n-1} = x_{n-1}) =
M(x_{n-1},x_n)\]
whenever~$\bbP(X_0 = x_0,
X_1 = x_1,\ldots,X_{n-1} = x_{n-1}) > 0$.
Below, we identify the Markov chain with the pair~$(M,\mu)$, or with the transition matrix~$M$
in contexts where the initial distribution is
irrelevant and might need to be changed.
We also abusively say that~$(X_n)_{n \geqslant 0}$
is a \emph{trajectory} of the Markov chain~$(M,\mu)$ or, alternatively, is \emph{generated} by~$(M,\mu)$.

The \emph{underlying graph} of~$(M,\mu)$
is the weighted graph~$G = (V,E,\pi)$ with
vertex set~${V = \S}$, edge set
\[E = \{(s,t) \in \S \times \S \colon
M(s,t) > 0\},\]
and whose weight function~$\pi : E \mapsto \bbR$ is defined by~$\pi(s,t) = M(s,t)$.
We say that~$(M,\mu)$ is \emph{irreducible} if~$G$
is strongly connected,
and \emph{aperiodic} when the lengths of its cycles
have no common divisor~$d \geqslant 2$.

These notions are connected to the
\emph{ergodicity} of a Markov chain,
which can be defined as follows.
Given a probability distribution~$\nu$ on~$\S$,
we denote by~$M \nu$ the probability distribution
defined by
\[(M \nu)(x) =
\sum_{y \in \S} M(y,x) \nu(y).\]
The~$L^1$ distance between two distributions~$\nu$ and~$\theta$ is defined as the real number~$\|\nu-\theta\|_1 = \sum_{x \in \S} |\nu(x) - \theta(x)|$.
The Markov chain~$M$ is said to be
\emph{ergodic}
if there exists a positive probability
distribution~$\nu$ on~$\S$
(i.e., a probability
distribution such that~$\nu(x) > 0$ for all~$x \in \S)$ such that~$\lim_{k \to +\infty} \|\nu - M^k \theta\|_1 = 0$
for all probability distributions~$\theta$ on~$\S$.

Such a distribution~$\nu$
must be the unique \emph{stationary
distribution} of the Markov chain~$M$,
i.e., the unique probability distribution such that~$\nu = M \nu$.
Conversely, when~$M$ is irreducible and has a
stationary distribution that is positive on~$\S$,
we say that~$M$ is \emph{irreducible and positive recurrent}.
This latter assumption relieves us from the need of
aperiodicity, and yet retains some desirable properties
of ergodic Markov chains.

A typical example of an ergodic Markov chain
arises if~$\bbP({X_n = t \mid} {X_{n-1} = s}) = {\nu(t)}$
for all~$s$ and~$t$ in~$\S$, i.e.,
if~$M \theta = \nu$ for all probability distributions~$\theta$ on~$\S$.
In that case, the random variables~$(X_n)_{n \geqslant 0}$ are said to be
\emph{independent and identically distributed}.

We refer the reader to~\cite{levin2017markov,porod21}
for a comprehensive
review about Markov chains and their properties,
from which we present three crucial results below.

\begin{proposition}[Corollary~1.18 and
Theorem~21.14 of~\cite{levin2017markov}]
\label{pro:1.18}
Every ergodic Markov chain is irreducible
and aperiodic. Conversely,
every irreducible and aperiodic Markov chain
is ergodic, provided that its state set is finite
or that it has a positive stationary distribution.
\end{proposition}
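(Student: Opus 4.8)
The plan is to prove the two implications separately, relying only on the definitions of ergodicity, irreducibility and aperiodicity given above, together with a coupling argument for the harder converse direction.

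For the forward implication, suppose that $M$ is ergodic and let $\nu$ be the associated limiting distribution, which is positive on~$\S$. Fix a state~$s$ and let $\delta_s$ denote the point mass at~$s$. Since $(M^k \delta_s)(t) = \bbP(X_k = t \mid X_0 = s)$ and $\|\nu - M^k \delta_s\|_1 \to 0$ with $\nu(t) > 0$ for every~$t$, we obtain $(M^k \delta_s)(t) > 0$ for all large~$k$; choosing an arbitrary target~$t$ exhibits a positive-weight walk from~$s$ to~$t$, so the underlying graph~$G$ is strongly connected and $M$ is irreducible. Taking $t = s$, the set of return times $\{k \geqslant 1 \colon (M^k \delta_s)(s) > 0\}$ contains every sufficiently large integer and hence has greatest common divisor~$1$; as~$G$ is strongly connected, this common divisor equals the gcd of all cycle lengths of~$G$, so $M$ is aperiodic.

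For the converse, assume that $M$ is irreducible and aperiodic. Under either hypothesis $M$ admits a unique positive stationary distribution~$\nu$: in the finite-state case this is the Perron--Frobenius theorem, while otherwise~$\nu$ is furnished by assumption and is unique by irreducibility. To establish ergodicity I would couple two independent trajectories of~$M$, namely $(X_n)$ started from an arbitrary distribution~$\theta$ and $(Y_n)$ started from~$\nu$, so that $Y_n$ has law~$\nu$ for every~$n$. Consider the product chain $(X_n, Y_n)$ on $\S \times \S$, with transition weights $M(s,t)\,M(s',t')$ from $(s,s')$ to $(t,t')$ and stationary distribution $(s,s') \mapsto \nu(s)\,\nu(s')$. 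Because $M$ is irreducible and aperiodic, for each pair of states the $k$-step probability $M^k(s,t)$ is positive for all sufficiently large~$k$, so the product $M^k(s,t)\,M^k(s',t')$ is positive for large~$k$ and the product chain is itself irreducible; since its stationary distribution is positive, it is positive recurrent and therefore visits the diagonal $\{(a,a) \colon a \in \S\}$ almost surely. Letting $T = \inf\{n \colon X_n = Y_n\}$, we thus have $T < +\infty$ almost surely. Defining $Z_n = X_n$ for $n < T$ and $Z_n = Y_n$ for $n \geqslant T$ produces, by the strong Markov property, a trajectory with the same law as $(X_n)$, which yields the coupling inequality $\|\nu - M^k \theta\|_1 \leqslant 2\,\bbP(T > k)$. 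The right-hand side tends to~$0$, and since~$\nu$ is positive this shows that $M$ is ergodic.

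The main obstacle lies in the converse when~$\S$ is infinite, where no primitive-matrix shortcut is available and everything rests on the product chain. This is exactly where aperiodicity becomes indispensable: without it the product chain could be confined to a strict sublattice of $\S \times \S$ and never reach the diagonal, so the coupling time~$T$ would fail to be finite. Verifying that aperiodicity does force irreducibility of the product chain, and that the positive stationary distribution upgrades recurrence to the almost-sure finiteness of~$T$, are the two points that demand genuine care; the coupling inequality itself is then routine bookkeeping.
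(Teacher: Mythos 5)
Your proof is correct, but note that the paper itself never proves this proposition: it is imported verbatim from the literature (Corollary~1.18 and Theorem~21.14 of Levin--Peres), so there is no internal argument to compare against. What you have written is essentially the classical proof underlying those citations. Your forward direction is sound: $L^1$ convergence to a positive $\nu$ forces $(M^k\delta_s)(t)>0$ for large $k$, giving strong connectivity, and the return-time set at $s$ containing two consecutive integers rules out any $d\geqslant 2$ dividing all cycle lengths (for that direction you only need that closed walks decompose into cycles; strong connectivity is needed only for the reverse identification of the two gcds, which you do not actually require). Your converse is the standard Doeblin coupling: product chain, irreducibility of the product via the fact that aperiodicity makes $M^k(s,t)$ eventually positive for every pair, positive recurrence from the product stationary distribution $\nu\otimes\nu$, almost-sure finiteness of the meeting time $T$, and the inequality $\|\nu - M^k\theta\|_1 \leqslant 2\,\bbP(T>k)$ (you correctly kept the factor $2$ that converts total-variation into the paper's $L^1$ distance). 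Two ingredients are invoked rather than proved --- (i) that irreducibility plus the paper's cycle-gcd notion of aperiodicity yields eventual positivity of $M^k(s,t)$, which needs the equality of the cycle gcd with the period of each state together with the numerical-semigroup lemma, and (ii) that an irreducible chain with a stationary probability distribution is positive recurrent and hence hits every state almost surely from any start --- but both are standard results, and they are precisely the content of the theorems the paper cites, so delegating them is legitimate here.
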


We are particularly interested in
Theorem~4.16 of~\cite{levin2017markov}, on which
we will base Section~\ref{sec:markov-case}.
However, we will not necessarily handle ergodic
Markov chains, and therefore we shall relax
the notion of ergodicity to a less stringent,
\emph{ad hoc} notion that we call
\emph{almost surely eventually positive recurrent and irreducible}
(or \epri) Markov chains.

A Markov chain~$M$ with underlying graph~$G = (\S,E,\pi)$,
is said to be \epri if there exists
a set~$\X \subseteq \S$ of states,
called the \emph{terminal component}
of~$M$, such that
(i)~$\X$ is a strongly connected component of~$G$;
(ii)~$M$~has~stationary distribution~$\nu$,
i.e., a probability distribution~$\nu$ such that~$M \nu = \nu$, that is positive on~$\X$ and zero
on~$\S \setminus \X$; and
(iii)~for every initial distribution~$\mu$,
the sequence generated by~$(M,\mu)$ almost
surely contains a vertex~${x \in \X}$.

Note that, since~$\nu$ is positive on~$\X$ and zero
elsewhere, no edge of~$G$ can leave~$\X$, i.e.,
the set~$E$ contains no edge~$(x,y)$ such that~$x \in \X$
and~$y \notin \X$.

In this notion, we completely abandon
any requirement to be acyclic, which prevents
the~$L^1$ convergence that characterises ergodicity.
However, when focusing on \emph{average, long-term}
behaviours of a Markov chain, such as the frequency of
occurrence of a given vertex of sequence of consecutive vertices,
whether the Markov chain is cyclic or acyclic is irrelevant.
Thus, we may just focus on
irreducible, positive recurrent Markov chains.
Moreover, in \epri Markov chains,
the path followed before entering the terminal component
quickly vanishes.
Consequently, the following result, which is usually stated
for irreducible, positive recurrent Markov chains only,
can be generalised to all
\epri Markov chains whose state space is
either finite or countably infinite.

\begin{theorem}[Theorem~4.16 of~\cite{levin2017markov},
Theorem~2.1.1 of~\cite{porod21}]
\label{thm:4.16}
Let~$(M,\mu) = (X_n)_{n \geqslant 0}$ be an \epri
Markov chain
with set of states~$\S$ and stationary distribution~$\nu$.
Let~$\ell$ be a positive integer,~$f : \S^\ell \mapsto \bbR$
be a bounded function, and
\[\bbE_\nu[f] =
\sum_{x_1,x_2,\ldots,x_\ell \in \S}
\nu(x_1) M(x_1,x_2) M(x_2,x_3) \cdots M(x_{\ell-1},x_\ell) f(x_1,x_2,\ldots,x_\ell).\]
We have
\[\bbP\left[\frac{1}{n}\sum_{k=0}^{n-1}f(X_k,X_{k+1},\ldots,X_{k+\ell-1}) \xrightarrow{n \to +\infty} \bbE_\nu[f]\right] = 1.\]
\end{theorem}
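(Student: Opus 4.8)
The plan is to reduce the statement to the classical ergodic theorem for irreducible, positive recurrent Markov chains --- which is exactly the cited result applied to a trajectory that lives inside the terminal component~$\X$ --- and then to show that the finite transient spent before reaching~$\X$ contributes nothing to the Cesàro average. The point of the \epri relaxation is precisely that, although~$M$ need not be irreducible on all of~$\S$, its restriction to~$\X$ is, and the chain almost surely eventually enters~$\X$.

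First I would treat the case where the initial distribution~$\mu$ is supported on~$\X$. Since no edge of the underlying graph leaves~$\X$ (as observed right after the \epri definition), such a trajectory stays in~$\X$ forever, so~$(X_n)_{n \geqslant 0}$ is a trajectory of the restriction~$M|_\X$. By properties~(i)--(ii), $M|_\X$ is irreducible (because~$\X$ is a strongly connected component) and admits the positive stationary distribution~$\nu$; hence it is irreducible and positive recurrent, and the cited ergodic theorem applies. For~$\ell > 1$ I would pass to the window chain~$Y_k = (X_k,\ldots,X_{k+\ell-1})$, which is itself an irreducible, positive recurrent Markov chain on the admissible~$\ell$-tuples of~$\X$, with stationary distribution~$(y_1,\ldots,y_\ell) \mapsto \nu(y_1) M(y_1,y_2) \cdots M(y_{\ell-1},y_\ell)$ (this sums to~$1$ since each row of~$M$ does, and is stationary since its first marginal~$\nu$ is). Applying the~$\ell = 1$ ergodic theorem to~$Y$ and to the bounded function~$f$ yields exactly the claimed convergence to~$\bbE_\nu[f]$ in this case.

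Next I would handle an arbitrary initial distribution by peeling off the transient. Let~$T = \inf\{n \geqslant 0 \colon X_n \in \X\}$; by property~(iii), $T < +\infty$ almost surely. By the strong Markov property at~$T$, conditionally on~$\{T = t, X_T = x\}$ the shifted process~$(X_{T+m})_{m \geqslant 0}$ is a trajectory of~$M$ started at~$x \in \X$, so the previous paragraph shows it almost surely satisfies the ergodic theorem. Writing
\begin{align*}
\frac{1}{n}\sum_{k=0}^{n-1} f(X_k,\ldots,X_{k+\ell-1})
&= \frac{1}{n}\sum_{k=0}^{T-1} f(X_k,\ldots,X_{k+\ell-1}) \\
&\quad{}+ \frac{n-T}{n}\cdot\frac{1}{n-T}\sum_{k=T}^{n-1} f(X_k,\ldots,X_{k+\ell-1}),
\end{align*}
the first sum is bounded in absolute value by~$T \|f\|_\infty / n$, which tends to~$0$ since~$T$ is almost surely finite and~$f$ is bounded; meanwhile~$(n-T)/n \to 1$, and the normalised tail sum --- whose windows start at~$k \geqslant T$ and therefore involve only states of~$\X$ --- converges almost surely to~$\bbE_\nu[f]$ by the windowed ergodic theorem applied to~$(X_{T+m})_{m \geqslant 0}$. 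Combining these gives the desired almost-sure convergence.

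The hard part will be the bookkeeping in this last step rather than any deep estimate. Because~$T$ is random, one must invoke the strong Markov property to transfer the almost-sure convergence from fixed starting states in~$\X$ to the random entry point~$X_T$, conditioning on~$\{T = t, X_T = x\}$ and summing over the almost-surely-finite range of~$t$ and over~$x \in \X$. One must also check that the split at~$k = T$ genuinely isolates the~$\X$-valued windows: every window with~$k < T$ (including those straddling the boundary, with~$k < T \leqslant k+\ell-1$) falls into the pre-$T$ block and is absorbed into the negligible~$O(T/n)$ error, while every window with~$k \geqslant T$ lies entirely in~$\X$ and is governed by the irreducible positive recurrent chain. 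Once this conditioning is stated cleanly, the convergence is immediate from the cited classical theorem, which itself requires no aperiodicity --- consistent with the fact that the \epri notion deliberately discards it.
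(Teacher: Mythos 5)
Your proposal is correct and takes essentially the same route as the paper's own proof: both reduce to the classical ergodic theorem for the irreducible, positive recurrent restriction of~$M$ to the terminal component~$\X$, and both dispose of the almost-surely finite transient prefix by absorbing it into a vanishing~$O(T/n)$ term of the Ces\`aro average. You merely spell out details the paper leaves implicit (the window chain for~$\ell > 1$, the strong Markov property at the hitting time, and the windows straddling the entry into~$\X$).
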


\begin{proof}
It is well-known~\cite{porod21} that Theorem~\ref{thm:4.16}
holds when~$M$ is irreducible and positive recurrent, i.e.,
when its state space~$\S$
coincides with its terminal component~$\X$.

In the general case, trajectories of the Markov chain
almost surely meet~$\X$ after a finite number
of steps, say~$p$, that depends of the trajectory.
Once it meets~$\X$,
the trajectory starts behaving like an irreducible, positive
recurrent Markov chain with state space~$\X$, and therefore
\[\frac{1}{n-p}\sum_{k=p}^{n-1}
f(X_k,X_{k+1},\ldots,X_{k+\ell-1})\]
converges almost surely (as~$n \to +\infty$) to~$\bbE_\nu[f]$.
Theorem~\ref{thm:4.16} follows.
\end{proof}

Finally, a crucial well-known property of irreducible, 
positive recurrent Markov chains
whose initial distribution coincides with their
stationary distribution is that they can be
\emph{reversed}.

\begin{theorem}[Proposition~1.22 of~\cite{levin2017markov}]
\label{thm:1.22}
Let~$(X_n)_{n \geqslant 0}$ be  an irreducible, 
positive recurrent  Markov chain
with set of states~$\S$,
transition matrix~$M$,
and whose initial distribution coincides
with the stationary distribution~$\nu$ of~$M$.
For all integers~$\ell \geqslant 0$, the sequence~$(X_{\ell-n})_{0 \leqslant n \leqslant \ell}$
contains the first~$\ell+1$ elements of an irreducible, 
positive recurrent Markov chain, called the
\emph{reverse} Markov chain of~$(M,\nu)$,
with initial distribution~$\nu$ and
whose transition matrix~$\hat{M}$ is defined by
\[\hat{M}(x,y) = \frac{\nu(y)}{\nu(x)} M(y,x).\]
\end{theorem}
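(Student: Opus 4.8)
The plan is to establish two facts: that $\hat{M}$ is a genuine transition matrix admitting $\nu$ as a stationary distribution, and that the time-reversed sequence $(X_{\ell-n})_{0 \leqslant n \leqslant \ell}$ has exactly the law prescribed by the data $(\hat{M},\nu)$. Since positive recurrence guarantees that $\nu$ is strictly positive on $\S$, every quotient $\nu(y)/\nu(x)$ appearing in the definition of $\hat{M}$ is well defined, so I would begin there.

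First I would check that $\hat{M}$ is stochastic. Summing over the target state and using stationarity of $\nu$ for $M$ (which, in the convention of the excerpt, reads $\nu(x) = \sum_{y \in \S} M(y,x)\nu(y)$), one obtains
\[
\sum_{y \in \S} \hat{M}(x,y) = \frac{1}{\nu(x)} \sum_{y \in \S} \nu(y) M(y,x) = \frac{1}{\nu(x)} \cdot \nu(x) = 1.
\]
A symmetric one-line computation, using instead that $M$ itself is stochastic, shows that $\hat{M}\nu = \nu$, so $\nu$ is stationary for $\hat{M}$ as well.

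The heart of the argument is a direct comparison of finite-dimensional laws. Because the forward chain starts from its stationary distribution, its trajectory law factorises as $\bbP(X_0 = x_0, \ldots, X_\ell = x_\ell) = \nu(x_0) \prod_{i=1}^{\ell} M(x_{i-1},x_i)$. Writing $Y_n = X_{\ell-n}$ and relabelling $x_i = y_{\ell-i}$, I would compute the law of $(Y_0, \ldots, Y_\ell)$ and compare it to the candidate expression $\nu(y_0)\prod_{i=1}^{\ell}\hat{M}(y_{i-1},y_i)$. Substituting the definition of $\hat{M}$, the product of the ratios $\nu(y_i)/\nu(y_{i-1})$ telescopes, leaving
\[
\nu(y_0) \prod_{i=1}^{\ell} \hat{M}(y_{i-1},y_i) = \nu(y_\ell) \prod_{i=1}^{\ell} M(y_i,y_{i-1}),
\]
which is precisely $\bbP(X_\ell = y_0, X_{\ell-1} = y_1, \ldots, X_0 = y_\ell)$. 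Since these joint laws agree for every tuple, the reversed sequence is distributed as the first $\ell+1$ steps of the Markov chain $(\hat{M},\nu)$.

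It then remains to see that $\hat{M}$ is irreducible and positive recurrent. Strict positivity of $\nu$ gives $\hat{M}(x,y) > 0$ if and only if $M(y,x) > 0$, so the underlying graph of $\hat{M}$ is obtained from that of $M$ by reversing every edge; reversing all edges preserves strong connectedness, hence $\hat{M}$ is irreducible. Combined with the fact, established above, that $\hat{M}$ has the stationary distribution $\nu$ which is positive on $\S$, this yields positive recurrence in the sense of the excerpt. I expect no deep obstacle here: the only point requiring care is the index bookkeeping in the reversal $x_i = y_{\ell-i}$, together with checking that the telescoping is matched against the correctly reordered product of $M$-entries; everything else is a routine verification resting on the strict positivity of the stationary distribution.
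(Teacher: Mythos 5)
Your proof is correct, but there is nothing in the paper to compare it against: the paper does not prove this statement at all. It is imported verbatim as Proposition~1.22 of the cited reference \cite{levin2017markov} and used as a black box (and then informally extended to \epri chains in the paragraph that follows it). Your argument is the standard textbook proof, and it is complete with respect to the conventions actually in force in this paper: you verify that $\hat{M}$ is stochastic using stationarity of $\nu$ written in the paper's left-action convention $(M\nu)(x) = \sum_{y} M(y,x)\,\nu(y)$, you verify $\hat{M}\nu = \nu$ using stochasticity of $M$, you identify the finite-dimensional laws of the reversed sequence with those of $(\hat{M},\nu)$ via the telescoping product, and you get irreducibility of $\hat{M}$ because strict positivity of $\nu$ makes the underlying graph of $\hat{M}$ the edge-reversal of that of $M$. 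One point deserves emphasis as genuinely necessary rather than routine: since $\S$ may be countably infinite, ``positive recurrent'' here is \emph{defined} as irreducible with an everywhere-positive stationary distribution, so your final paragraph (irreducibility plus $\hat{M}\nu = \nu$ with $\nu > 0$ on $\S$) is exactly what closes the proof under the paper's definition; no separate recurrence argument is needed.
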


More generally, if~$M$ is \epri, and provided that its
initial distribution is~$\nu$, it already starts inside of
its terminal component~$\X$,
which it cannot leave. Thus, up to
deleting those states of~$M$ that do not belong to~$\X$,
the Markov chain~$M$ becomes irreducible and positive
recurrent, and Theorem~\ref{thm:1.22} applies, with the
following caveat: the state space of its reverse Markov
chain is restricted to~$\X$,
and needs not be extended to states outside of~$\X$.

\section{Deterministic worst case}
\label{sec:worst-case}

By construction, no two consecutive integers~$i \leqslant |w|-1$ are~$w$-locally minimal,
and all \mbox{$w$-locally} minimal integers belong to the
set~$\{1,2,\ldots,|w|-2\}$.
Hence, at most~$(|w|-1)/2$ integers are~$w$-locally minimal. This means that~$|\is(w)|+1 \leqslant (|w|+1) / 2$ and, more generally, that~$|\is^k(w)|+1 \leqslant 2^{-k} (|w|+1)$
for every integer~$k \geqslant 0$
and every word~$w \in \A^\ast$ such that~$\is^k(w)$ exists.
A genuine question is then:
can we do better?
The answer, which was known to be negative~\cite{bingmann2016inducing} when we allow
alphabets~$\A$ with size~$\log_2(|w|)$,
remains negative for every fixed size~$|A| \geqslant 2$.

\begin{theorem}
\label{thm:1}
Let~$\A$ be an alphabet of cardinality at least~$4$.
For every integer~$n \geqslant 3$, there exists
a word~$w \in \A^{2^n-1}$ on which
the IS-algorithm performs~$n-2$ recursive calls,
and
\[|\is^k(w)|+1 = 2^{-k} (|w|+1)\]
for all~$k \in \{0,1,\ldots,n-2\}$.
\end{theorem}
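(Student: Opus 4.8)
The equality $|\is^k(w)|+1 = 2^{-k}(|w|+1)$ can hold only if \emph{every} reduction is as large as the generic bound allows, i.e.\ if, at each level, the $\is^k(w)$-locally minimal integers are maximally dense: since no two of them are consecutive, exactly every second position must be locally minimal. My first step is to record that this forces each $\is^k(w)$ to be a \emph{zigzag}, a word strictly alternating between valleys (local minima) and peaks (local maxima). Conversely, for such a zigzag $u = p_0\,v_1\,p_1\,v_2\cdots p_m$ with valleys $v_1,\ldots,v_m$ at the odd positions, the locally minimal integers are exactly those $m$ positions, so the unimodal factors are the valley--peak--valley triples $(v_t,p_t,v_{t+1})$ for $1\le t\le m-1$ together with the sentinel-terminated last factor $(v_m,p_m,\$)$, and $|\is(u)| = m = (|u|-1)/2$. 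Thus the statement reduces to producing a length-$(2^n-1)$ zigzag over $\A$ whose $n-2$ successive reductions are again zigzags, the last one of length $3$ with pairwise distinct letters.

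The engine of the construction is a single \emph{self-similar} zigzag over four letters $\{1,2,3,4\}$ reproduced by $\is$ up to relabelling, which I would find by forcing the reduced word back into the same shape. Taking valleys in $\{1,2\}$ and peaks in $\{3,4\}$, the lexicographic rank of a factor $(v_t,p_t,v_{t+1})$ is governed first by the valley $v_t$ and then by the peak $p_t$, so the four admissible factor types $(1,3,2),(1,4,2),(2,3,1),(2,4,1)$ carry precisely the four ranks. Demanding that the rank word again alternate valleys and peaks forces the valleys of $u$ to alternate $2,1,2,1,\ldots$; a short computation then expresses each rank through the peak that produced it and turns ``$\is(u)$ equals $u$ up to relabelling'' into a self-referential recursion on the peak sequence, namely $p_{2s}=v_s+2$ and $p_{2s-1}=p_{s-1}$ (with $p_0$ free). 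This recursion has a solution, and packaging it as the fixed point of a length-doubling morphism yields an infinite zigzag $W$ over $\{1,2,3,4\}$ with $\is(W)=W$ up to relabelling; I would verify this identity by a direct case analysis on the two valley values, checking that consecutive factors share their common valley and that the induced ranks reproduce $W$. I would then take $w$ to be the length-$(2^n-1)$ prefix of $W$ (so $m = 2^{n-1}-1$) and argue by induction on $n$, the self-similarity giving $\is(w)$ equal, up to relabelling, to the word for $n-1$.

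The main obstacle is the \emph{right boundary}. Unlike the internal factors, the last unimodal factor $(v_m,p_m,\$)$ carries the sentinel, which is smaller than every letter, and this perturbs the ranks near the end; consequently $\is(w)$ agrees with the prefix of $W$ only \emph{eventually} and departs from the periodic pattern in its last few letters. I would therefore track this boundary explicitly through the induction, showing that the perturbation stays confined to a bounded suffix and, crucially, that after $n-2$ reductions it produces a length-$3$ word with \emph{pairwise distinct} letters, which is exactly what makes the IS-algorithm halt after $n-2$ recursive calls rather than one more. Handling this termination correctly is the delicate point, since the naive periodic prefix of length $3$ would contain a repeated letter and trigger a further call. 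It is also where the hypothesis $|\A|\ge 4$ is used: with only three letters one cannot simultaneously make the valleys alternate and retain enough distinct peaks to encode the four factor types, so the self-similar zigzag collapses after a single step.
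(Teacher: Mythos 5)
Your core construction is essentially the paper's: your recursion $p_{2s}=v_s+2$, $p_{2s-1}=p_{s-1}$, packaged as a length-doubling morphism ($1\mapsto 13$, $2\mapsto 14$, $3\mapsto 23$, $4\mapsto 24$ in your labels), is exactly the paper's morphism $\varphi(0)=02$, $\varphi(1)=04$, $\varphi(2)=12$, $\varphi(4)=14$ under the relabelling $0\leftrightarrow 1$, $1\leftrightarrow 2$, $2\leftrightarrow 3$, $4\leftrightarrow 4$, and both proofs run $\is$ backwards along this morphism one level at a time. Where the two proofs genuinely diverge is the sentinel boundary, and that is where your plan has a gap. The paper anchors the construction at the \emph{right} end: it works over the five-letter alphabet $\B=\{0,1,2,3,4\}$, where the extra letter $3$ occurs exactly once, at the last position, with $\varphi(3)=13$ calibrated so that the map $\theta$ sending letters to unimodal factors is increasing ($021<041<120<13\$<140$); the self-similarity $\is(\varphi(w)_{1\cdots})=w_{1\cdots}$ is then an \emph{exact} identity with no error term, and the fifth letter is only projected away ($3\mapsto 4$) in the final word. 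You instead take prefixes of a left-anchored fixed point and defer a perturbation analysis whose stated target is that the deviation ``stays confined to a bounded suffix.''

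That target is too weak to prove the theorem. The claimed equality $|\is^k(w)|+1=2^{-k}(|w|+1)$ is exact, and, as your own first paragraph shows, it holds if and only if every $\is^k(w)$ is a strict zigzag ending with an ascent; a perturbation confined to a bounded suffix but breaking the alternation in the last two letters already destroys the equality (the count of locally minimal positions drops), so no asymptotic or ``eventually periodic'' statement suffices. The invariant you actually need is: at every level, the word differs from the relabelled ideal pattern in exactly its \emph{last} letter, and that letter is still strictly larger than its predecessor. Concretely, this forces the last valley at every level to be the \emph{larger} of the two valley values (your $2$, not $1$): if the last valley is $2$, the sentinel factor $(2,p,\$)$ ranks strictly above both factors beginning with $1$, the reduced word ends with an ascent, and the structure propagates; if the last valley were $1$, the sentinel factor $(1,p,\$)$ would rank below both factors beginning with $2$, the reduced word would end with a descent, and the next reduction would fall one short of exact halving. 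Your $W$ does satisfy this invariant (the last valley of the length-$(2^n-1)$ prefix is $v_{2^{n-1}-1}$, of odd index, hence equal to $2$, and self-similarity propagates this down the levels), so your plan is completable; but this invariant is the actual crux of the proof, it is absent from your write-up, and it is precisely what the paper's auxiliary letter $3$ and the balancedness condition encode once and for all, making any perturbation argument unnecessary.
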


\begin{proof}
Without loss of generality, we assume that~$\A = \{0,1,2,4\}$.
Let also~$\B = \{0,1,2,3,4\}$. Then, let~$\varphi \colon \B^\ast \mapsto \B^\ast$ and~$\psi \colon \B^\ast \mapsto \A^\ast$ 
be morphisms of monoids, uniquely defined by their
values on~$\B$:~$\varphi(0) = 02$,~$\varphi(1) = 04$,~$\varphi(2) = 12$,~$\varphi(3) = 13$ and~$\varphi(4) = 14$;~$\psi(a) = a$ for all~$a \in \A$, and~$\psi(3) = 4$.
We prove below that the word~$\psi(\varphi^n(3)_{1 \cdots})$
satisfies the requirements of Theorem~\ref{thm:1}.

We say that a word~$w = w_0 w_1 \cdots w_k \in \B^\ast$ is
\emph{balanced} if (1) its length~$|w| = k+1$
is even,
(2) its rightmost letter~$w_k = 3$,
(3) its suffix~$w_{1\cdots}$ contains each of the letters~$0,1,2,3,4$,
and (4) for all~$i \leqslant k-1$, we have~$w_i \in \{0,1\}$ if~$i$ is even and~$w_i \in \{2,4\}$ if~$i$ is odd.
The eight-letter word~$\varphi^3(3) = 02140413$
is balanced, and~$\varphi$ maps each balanced
word to a balanced word.

Provided that~$w$ is balanced, the~$\varphi(w)_{1 \cdots}$-minimal integers are~$1,3,5,\ldots,2k-1$, and the associated unimodal factors are~$\varphi(w_1) \cdot \varphi(w_2)_0,
\varphi(w_2) \cdot \varphi(w_3)_0,\ldots,
\varphi(w_{k-1}) \cdot \varphi(w_k)_0,
{\varphi(w_k) \cdot \$}$.
Since~$\varphi(0)_1 = \varphi(1)_1 = 0$ and~$\varphi(2)_1 = \varphi(3)_1 = \varphi(4)_1 = 1$,
this means that the unimodal factors of~$\varphi(w)$ are~$\theta(w_1),\theta(w_2),\ldots,\theta(w_k)$,
where we set~$\theta(0) = 021$,~$\theta(1) = 041$,~$\theta(2) = 120$,~$\theta(3) = 13\$$ and~$\theta(4) = 140$.
The function~$\theta$ is increasing, and thus,~$\is(\varphi(w)_{1 \cdots}) =
w_{1 \cdots}$.

Moreover, if~$w$ is balanced,
and since the rightmost letter of~$\varphi(w)$
is its only occurrence of the letter~$3$,
the words~$\varphi(w)_{1\cdots}$ and~$\psi(\varphi(w)_{1\cdots})$ have the same unimodal factors, except that their last factors
are~$13\$$ and~$14\$$, respectively. Hence,~$\is(\psi(\varphi(w)_{1 \cdots})) =
\is(\varphi(w)_{1 \cdots}) =
w_{1 \cdots}$.
Thus, the map~$\is$ successively sends~$\psi(\varphi^n(3)_{1 \cdots})$ to~$\varphi^{n-1}(3)_{1 \cdots},
\varphi^{n-2}(3)_{1 \cdots},
\ldots,
\varphi^3(3)_{1 \cdots}$,
and observing that~$\is(\varphi^3(3)_{1 \cdots}) = 201$
completes the proof.
\end{proof}

Although the conclusions of Theorem~\ref{thm:1} are not
valid for alphabets of cardinality~$2$ or~$3$,
it is still possible to find variants of this
worst case. In these variants, the first step of
the IS-algorithm is more efficient,
with respective reduction ratios of~$3$ and~$5/2$,
but every word considered after that first step
belongs to an alphabet of cardinality~$4$,
which explains why the reduction ratios we compute
have similar orders of magnitude.

\begin{corollary}
\label{cor:2}
Let~$\A$ be an alphabet of cardinality~$2$.
For every integer~$n \geqslant 3$, there exists
a word~$w \in \A^{3 \times 2^n-2}$ on which
the IS-algorithm performs~$n-1$ recursive calls,
and
\[|\is^k(w)|+1 = 2^{1-k} (|w|+2) / 3\]
for all~$k \in \{1,2,\ldots,n-1\}$.
\end{corollary}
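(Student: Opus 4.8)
The plan is to reduce Corollary~\ref{cor:2} to Theorem~\ref{thm:1} by exhibiting a binary word whose single one-step reduction is precisely one of the intermediate words from the proof of Theorem~\ref{thm:1}. Without loss of generality I take $\A = \{0,1\}$ with $0 < 1$. The arithmetic already lines up: writing $|w| = 3 \times 2^n - 2$ gives $(|w|+2)/3 = 2^n$, which is exactly $|\varphi^n(3)_{1 \cdots}| + 1$, and the claimed $n-1$ recursive calls split as one extra binary step plus the $n-2$ calls supplied by Theorem~\ref{thm:1}. So it suffices to build $w \in \{0,1\}^{3 \times 2^n - 2}$ with $\is(w) = \varphi^n(3)_{1 \cdots}$; then $\is^k(w) = \is^{k-1}(\varphi^n(3)_{1 \cdots})$ for $k \geqslant 1$, and the chain $\varphi^n(3)_{1 \cdots} \to \varphi^{n-1}(3)_{1 \cdots} \to \cdots \to \varphi^3(3)_{1 \cdots} \to 201$ from the proof of Theorem~\ref{thm:1} yields both the length formula and the call count.

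To construct $w$, I would encode each letter of $\psi(\varphi^n(3))$ by a binary block of the form $0^a 1^b$, chosen so that inside $w$ each block becomes a binary unimodal factor $0^a 1^b 0$, the trailing $0$ being the first letter of the next block. Over a binary alphabet these are essentially the only available shapes, so the design space is small; I propose the morphism $\chi$ with $\chi(0) = 0001$, $\chi(1) = 001$, $\chi(2) = 01$ and $\chi(4) = 011$. The associated factors are $00010$, $0010$, $010$ and $0110$, while the terminal block carries a sentinel and produces $011\$$. The point to verify is that these five factors sort as $00010 < 0010 < 010 < 011\$ < 0110$, so that the map sending $0,1,2,3,4$ to them is increasing and the sentinel factor $011\$$ occupies exactly rank $3$. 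This is the binary analogue of the $\psi$-device of Theorem~\ref{thm:1}: the terminal letter $\psi(3) = 4$ of $\psi(\varphi^n(3))$ decodes to $3$ because $011\$$ slips just below $0110$, so that $\is$ returns $\varphi^n(3)_{1 \cdots}$ rather than $\psi(\varphi^n(3)_{1 \cdots})$.

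The remaining ingredient is a boundary correction. Since position $0$ is never locally minimal, the first block is absorbed into a prefix and contributes no unimodal factor; I would therefore replace the leading block by the single letter $1$, which ends in $1$ (making the start of the second block genuinely locally minimal) and introduces no spurious local minimum. Setting $w = 1 \cdot \chi(\psi(\varphi^n(3)_{1 \cdots}))$ then leaves exactly $2^n - 1$ locally minimal positions, one per remaining block, so $|\is(w)| = 2^n - 1$ as required. A direct count against the letter frequencies of $\varphi^n(3)_{1 \cdots}$, namely $2^{n-2}-1$ zeros, $2^{n-2}$ ones, $2^{n-2}-1$ twos and $2^{n-2}+1$ fours, with block lengths $(4,3,2,3)$, gives $|w| = 1 + (3 \times 2^n - 3) = 3 \times 2^n - 2$.

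The main obstacle is meeting three tightly coupled constraints with so few shapes: the four plain factors must be ordered consistently with $0 < 1 < 2 < 4$; the sentinel factor must land in rank $3$ (the $\psi$-trick); and, after the single-letter prefix correction, the weighted block lengths must hit $3 \times 2^n - 2$ \emph{exactly} rather than merely asymptotically. I expect the write-up to consist mostly of (i) checking that the only locally minimal positions of $w$ are the block boundaries other than position $0$ --- in particular that neither the prefix nor the interiors of the blocks create spurious local minima --- and (ii) verifying the five-way sort of the factors, including the prefix-free comparisons and the sentinel tie-break, so that the induced identification is the identity on $\{0,1,2,4\}$ and sends the terminal block to $3$.
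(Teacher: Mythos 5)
Your proposal is correct and takes essentially the same route as the paper: your morphism $\chi \circ \psi$ coincides exactly with the paper's $\psi_2$ (defined by $\psi_2(0)=0001$, $\psi_2(1)=001$, $\psi_2(2)=01$, $\psi_2(3)=\psi_2(4)=011$), and the paper likewise takes $w = 1 \cdot \psi_2(\varphi^n(3)_{1\cdots})$, verifies $\is(1 \cdot \psi_2(v_{1\cdots})) = v_{1\cdots}$ for balanced $v$ via the same block/sentinel ordering argument, and concludes with the same letter count. No gaps.
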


\begin{proof}
Let us assume that~$\A = \{0,1\}$,
and let~$\B$ and~$\varphi$ be the
alphabet and the morphism defined in
the proof of Theorem~\ref{thm:1}.
An immediate induction on~$\ell$ shows that,
for all~$\ell \geqslant 3$,
the word~$\varphi^\ell(3)$
starts with the letter~$0$, ends with the letter~$3$, and contains~$2^{\ell-2}$ letters~$0$,~$2^{\ell-2}$ letters~$1$,~$2^{\ell-2}-1$ letters~$2$, one letter~$3$
(the rightmost one) and~$2^{\ell-2}$ letters~$4$.

Then, we consider a new morphism~$\psi_2 \colon
\B^\ast \mapsto \A^\ast$, such that~$\psi_2(0) = 0001$,~$\psi_2(1) = 001$,~$\psi_2(2) = 01$, and~$\psi_2(3) = \psi_2(4) = 011$.
Like in the proof of Theorem~\ref{thm:1}, we prove
that
\[\is(1 \cdot \psi_2(w_{1 \cdots})) =
\is(\varphi(w)_{1 \cdots}) =
w_{1 \cdots}\] when~$w$ is balanced,
and having counted occurrences of each letter
in~$\varphi^n(3)$ allows us
to conclude that the word~$1 \cdot \psi_2(\varphi^n(3)_{1 \cdots})$
satisfies the requirements of
Corollary~\ref{cor:2}.
\end{proof}

\begin{corollary}
\label{cor:3}
Let~$\A$ be an alphabet of cardinality~$3$.
For every integer~$n \geqslant 3$, there exists
a word~$w \in \A^{5 \times 2^n-3}$ on which
the IS-algorithm performs~$n$ recursive calls,
and
\[|\is^k(w)|+1 = 2^{2-k} (|w|+3) / 5\]
for all~$k \in \{1,2,\ldots,n\}$.
\end{corollary}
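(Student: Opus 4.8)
The plan is to mimic the strategy of Corollary~\ref{cor:2} almost verbatim, replacing the binary morphism~$\psi_2$ by a ternary morphism~$\psi_3 \colon \B^\ast \mapsto \A^\ast$ tailored to the alphabet~$\A = \{0,1,2\}$. First I would fix the target word to be of the form~$c \cdot \psi_3(\varphi^n(3)_{1 \cdots})$ for some short prefix~$c$ (playing the role of the leading~$1$ in Corollary~\ref{cor:2}), and design~$\psi_3$ so that, when~$w$ is balanced, the identity
\[\is(c \cdot \psi_3(w_{1 \cdots})) = \is(\varphi(w)_{1 \cdots}) = w_{1 \cdots}\]
holds. The right-hand equality is already established in the proof of Theorem~\ref{thm:1}, so the real content is the left-hand equality: I must choose the five image words~$\psi_3(0),\ldots,\psi_3(4)$ over~$\{0,1,2\}$ so that the unimodal factorisation of~$c \cdot \psi_3(w_{1\cdots})$ recovers exactly the factor structure~$\theta(w_1),\ldots,\theta(w_k)$ used in Theorem~\ref{thm:1}, or at least an order-isomorphic one, so that the reduced word is again~$w_{1\cdots}$. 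The natural ternary analogue would pick images whose lengths and internal monotonicity produce locally minimal integers in exactly the right positions; for instance using words like~$\psi_3(0)=002$, $\psi_3(1)=02$, $\psi_3(2)=012$, $\psi_3(3)=\psi_3(4)=0122$ (these are only a guess — the lengths must be tuned to hit the prescribed total length~$5 \times 2^n - 3$).

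Second, once the morphism is fixed I would verify the claimed length. Using the letter-counting fact recalled in the proof of Corollary~\ref{cor:2} — that~$\varphi^\ell(3)$ contains~$2^{\ell-2}$ zeros,~$2^{\ell-2}$ ones,~$2^{\ell-2}-1$ twos, one three and~$2^{\ell-2}$ fours — the length of~$\psi_3(\varphi^n(3)_{1\cdots})$ is a fixed linear combination of these counts determined by the lengths~$|\psi_3(a)|$, to which I add~$|c|$ and subtract the contribution of the deleted first letter~$\varphi^n(3)_0 = 0$. I would choose the image lengths precisely so that this sum equals~$5 \times 2^n - 3$, which forces the average image length to be~$5/2$ and matches the announced reduction ratio~$5/2$ for the first step. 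The subsequent steps then coincide with those of Theorem~\ref{thm:1}: after the first reduction the word is~$w_{1\cdots}$ with~$w = \varphi^n(3)$, living over an alphabet of size~$4$, and the map~$\is$ sends~$\varphi^\ell(3)_{1\cdots}$ to~$\varphi^{\ell-1}(3)_{1\cdots}$ down to~$\is(\varphi^3(3)_{1\cdots}) = 201$, giving one extra recursive call beyond Corollary~\ref{cor:2} and hence~$n$ calls in total, with~$|\is^k(w)|+1 = 2^{2-k}(|w|+3)/5$ for~$k \in \{1,\ldots,n\}$.

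The main obstacle, as in the two preceding proofs, is the combinatorial bookkeeping of the first reduction step: I must check that~$\psi_3$ does not create spurious~$w$-locally minimal integers inside the image blocks, nor destroy the ones at block boundaries, so that the unimodal factors of~$c \cdot \psi_3(w_{1\cdots})$ are in bijection with — and order-isomorphic to — the intended factors~$\theta(w_i)$. Because~$\psi_3$ uses only three letters, the ascending and descending runs within and across blocks are more delicate than in the binary case, so the verification of conditions~(1)--(4) of balancedness under~$\psi_3$, together with the alignment of locally minimal positions, is where the argument must be carried out with care. The prefix~$c$ and the exact image words are precisely the degrees of freedom I would adjust to make both the factorisation and the length come out right; once they are pinned down, the induction and the final identity~$\is(\varphi^3(3)_{1\cdots}) = 201$ close the argument exactly as before.
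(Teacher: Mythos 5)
Your overall strategy is exactly the paper's: take a word of the form~$1 \cdot \psi_3(\varphi^N(3)_{1\cdots})$ for a suitable ternary morphism~$\psi_3 \colon \B^\ast \to \{0,1,2\}^\ast$, prove~$\is(1 \cdot \psi_3(w_{1\cdots})) = w_{1\cdots}$ for balanced~$w$, and let Theorem~\ref{thm:1} drive the remaining reductions. But since that strategy is already laid out in Corollary~\ref{cor:2}, the entire mathematical content of this corollary \emph{is} the explicit morphism, and that is precisely what you do not supply: your candidate~$\psi_3(0)=002$, $\psi_3(1)=02$, $\psi_3(2)=012$, $\psi_3(3)=\psi_3(4)=0122$ is flagged as a guess, and it fails the very test you correctly identify as the crux. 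In any such construction, each unimodal factor of~$1 \cdot \psi_3(w_{1\cdots})$ is an image~$\psi_3(w_i)$ extended by the first letter (namely~$0$) of the next image, or by~$\$$ for the last one; your images thus induce the factors~$0020$, $020$, $0120$, $0122\$$, $01220$ for the letters~$0,1,2,3,4$ of~$\B$. This letter-to-factor map must be increasing, but~$0120 \lex 020$, $0122\$ \lex 020$ and~$01220 \lex 020$, so the factor of~$1$ outranks those of~$2$, $3$ and~$4$. Step~\ref{step:3} of the algorithm then relabels the factors in the order~$0,2,3,4,1$, so the reduced word is a letter-permuted copy of~$w_{1\cdots}$ rather than~$w_{1\cdots}$ itself; that permuted word is no longer balanced (its even positions carry letters in~$\{0,4\}$), and the induction through Theorem~\ref{thm:1} collapses. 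Your lengths are also off: image lengths~$(3,2,3,4,4)$ give average~$3$ per frequent letter of~$\B$ --- Corollary~\ref{cor:2}'s ratio --- rather than the~$5/2$ you yourself note is forced.

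For comparison, the paper takes~$\psi_3(0)=001$, $\psi_3(1)=01$, $\psi_3(2)=012$, $\psi_3(3)=\psi_3(4)=02$: all images begin with~$0$, the induced factors satisfy~$0010 \lex 010 \lex 0120 \lex 02\$ \lex 020$, i.e.\ they are increasing in exactly the right order, and the letters~$3$ and~$4$ may share an image because~$3$ occurs only once, at the very end, so its factor is distinguished by the sentinel ($\$ < 0$). One further point you assert without computing: starting from~$\varphi^n(3)_{1\cdots}$, the chain after the first reduction is \emph{identical} to that of Corollary~\ref{cor:2}, so it yields~$n-1$ recursive calls and length~$5 \cdot 2^{n-1}-3$; your ``one extra recursive call'' has no source, and to obtain the stated~$n$ calls and length~$5 \times 2^n - 3$ one must start from~$\varphi^{n+1}(3)_{1\cdots}$ (an indexing slip that the paper's own one-line proof shares). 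Until a morphism passing the order test, with the correct image lengths, is actually exhibited and verified, your write-up is a plan for the proof rather than a proof.
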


\begin{proof}
The proof is the same as that of
Corollary~\ref{cor:2},
except that we have now~${\A = \{0,1,2\}}$
and that, instead of the morphism~$\psi_2$, we use a new morphism~$\psi_3 \colon
\B^\ast \mapsto \A^\ast$, such that~$\psi_3(0) = 001$,~$\psi_3(1) = 01$,~$\psi_3(2) = 012$ and~$\psi_3(3) = \psi_3(4) = 02$.
Indeed, we also
have
\[\is(1 \cdot \psi_3(w_{1 \cdots})) =
\is(\varphi(w)_{1 \cdots}) =
w_{1 \cdots}\] when~$w$ is balanced,
from which we conclude that the word~$1 \cdot \psi_3(\varphi^n(3)_{1 \cdots})$
satisfies the requirements of
Corollary~\ref{cor:3}.
\end{proof}

\section{Words generated by an ergodic Markov chain}
\label{sec:markov-case}

Let~$\A$ be a finite or countably infinite set.
Below, we study the typical behaviour of the IS-algorithm
on a word~$w \in \A^n$ whose letters are the first~$n$ elements of
an \epri  Markov chain~$(M,\mu)$ with set of states~$\A$.
We prove below the following result,
which is the main (and technically most demanding)
result presented in this paper.

\begin{theorem}
\label{thm:6}
Provided that~$w$ is generated by an \epri Markov chain,
and for all integers~$k \geqslant 0$,
there exist a constant~$\gamma_k$ and a sequence~$(\varepsilon_n)_{n \geqslant 0}$
that tends to~$0$ such that
\[\bbP\left[\left|\frac{|\is^k(w)|}{|w|}
- \gamma_k\right| \geqslant \varepsilon_{|w|}\right]
\leqslant \varepsilon_{|w|}.\]
\end{theorem}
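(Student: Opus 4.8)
The plan is to proceed by induction on~$k$ and to reduce the statement, for each fixed~$k$, to a single application of the ergodic theorem (Theorem~\ref{thm:4.16}) to a well-chosen \emph{local} observable of the chain. The base case~$k = 0$ is immediate, with~$\gamma_0 = 1$, since~$\is^0(w) = w$. For the reformulation, I would observe that iterating the one-step reduction induces a nested family of position sets~$P_0(w) \supseteq P_1(w) \supseteq P_2(w) \supseteq \cdots$ of the \emph{original} word~$w$: here~$P_0$ is the set of all positions,~$P_1$ the set of~$w$-locally minimal positions, and~$P_{j+1}$ the subset of~$P_j$ consisting of those positions whose associated letter of~$\is^j(w)$ is~$\is^j(w)$-locally minimal. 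Tracing factors back through the levels gives a bijection between the positions in~$P_k(w)$ and the letters of~$\is^k(w)$, so that, writing~$n = |w|$ and~$Y_n = |\is^k(w)|/n$, one has exactly~$Y_n = \#\{i \colon i \in P_k(w)\}/n$. The whole task is therefore to show that this empirical density concentrates.

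The heart of the argument is to prove that membership~$i \in P_k(w)$ is, with overwhelming probability, a \emph{local} function of~$w$, i.e. is determined by the letters~$w_{i-R},\ldots,w_{i+R}$ for a radius~$R_k(i)$ whose law has exponentially decaying tails. I would establish this by induction on~$k$, unfolding the recursive definition of local minimality. Deciding whether the level-$j$ letter at a position~$i$ is~$\is^j(w)$-locally minimal amounts to (a)~lexicographically comparing the unimodal factor of~$\is^{j-1}(w)$ that starts at~$i$ with the factor immediately to its left, and (b)~scanning rightwards across the maximal run of factors equal to it until the next strictly larger factor appears; composing these operations over all~$j \leqslant k$ reduces them to reading and comparing bounded blocks of the underlying letters of~$w$. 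Under the \epri, hence positive recurrent, hypothesis, runs of equal letters, common prefixes of adjacent factors, and runs of equal adjacent factors each have geometric tails, so~$R_k(i)$ is a finite composition of light-tailed quantities and itself stays light-tailed, uniformly in~$i$. This compounding estimate---controlling how the dependence radius grows from one level to the next and certifying a tail bound independent of~$i$---is the step I expect to be the main obstacle, together with the sub-problem that adjacent factors sharing a long common prefix force a long comparison (again rare, but it must be quantified).

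Granting the local-determination property, I would truncate. For each radius~$R$, replace the true indicator of~$i \in P_k(w)$ by its truncated version~$f_R(w_{i-R},\ldots,w_{i+R})$, a~$\{0,1\}$-valued function of a window of length~$2R+1$; after shifting the index by~$R$ it takes the form~$f(X_j,\ldots,X_{j+2R})$ required by Theorem~\ref{thm:4.16}. (The window is two-sided because of the backward comparisons, but, being a shifted forward window, it needs no separate treatment; alternatively one may invoke the reverse chain of Theorem~\ref{thm:1.22}.) That theorem yields almost sure convergence of~$\frac1n\sum_i f_R$ to~$\gamma_k^{(R)} \eqdef \bbE_\nu[f_R]$, and the difference between this average and~$Y_n$ is controlled, in expectation, by the tail~$\bbP[R_k(i) > R]$, which tends to~$0$ as~$R \to +\infty$. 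Hence the~$\gamma_k^{(R)}$ form a Cauchy sequence; I would set~$\gamma_k = \lim_{R \to +\infty} \gamma_k^{(R)}$ and combine the two estimates to obtain convergence of~$Y_n$ to~$\gamma_k$ in probability. Two technical points remain: since the chain is only \epri, its trajectory spends a finite, trajectory-dependent transient outside the terminal component~$\X$ before settling into its stationary, positive recurrent regime, but this affects only~$o(n)$ positions and is absorbed into the error exactly as in the proof of Theorem~\ref{thm:4.16}; and the~$O(R)$ boundary positions near the two ends of~$w$ are likewise negligible.

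Finally, I would convert convergence in probability into the stated form, which asks for a \emph{single} sequence~$\varepsilon_n \to 0$ bounding both the deviation and its probability. This is the standard equivalence: for each~$m \geqslant 1$, convergence in probability gives an index~$N_m$, which may be taken increasing, such that~$\bbP[|Y_n - \gamma_k| \geqslant 1/m] \leqslant 1/m$ for all~$n \geqslant N_m$; setting~$\varepsilon_n = 1/m$ for~$N_m \leqslant n < N_{m+1}$ yields a sequence tending to~$0$ with~$\bbP[|Y_n - \gamma_k| \geqslant \varepsilon_n] \leqslant \varepsilon_n$, which is precisely the conclusion of Theorem~\ref{thm:6}.
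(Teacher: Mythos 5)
Your reduction to the empirical density of the nested position sets $P_k(w)$, and the final conversion from convergence in probability to the stated $\varepsilon_n$-form, are both fine; but the proposal hinges on a tail claim that is false for exactly the chains the \epri hypothesis is designed to include. You assert that, under positive recurrence, ``runs of equal letters, common prefixes of adjacent factors, and runs of equal adjacent factors each have geometric tails''. The paper deliberately abandons aperiodicity (that is the whole point of \epri versus ergodic), and for a periodic chain this claim fails badly. Take $\A = \{a,b\}$ with $a<b$ and $M(a,b)=M(b,a)=1$: this chain is irreducible and positive recurrent, hence \epri, and it generates $w = abab\cdots$. Every unimodal factor of $w$ equals $aba$, so the run of equal adjacent factors never terminates: for $k\geqslant 2$ the radius $R_k(i)$ is infinite at every position, the truncation error $\bbP[R_k(i) > R]$ does not tend to $0$ as $R \to +\infty$, and the bridge you build between $\gamma_k^{(R)}$ and $|\is^k(w)|/|w|$ collapses. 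Theorem~\ref{thm:6} is still true there --- one gets $\gamma_k = 0$ for $k \geqslant 2$ because $\is(w)$ is essentially constant --- but your mechanism cannot certify it: a one-sided ``certified-yes'' truncation would salvage the lower bound on $\liminf |\is^k(w)|/|w|$, but the matching upper bound then requires either the full pointwise ergodic theorem for the non-local observable $\mathbf{1}[i \in P_k]$ or some other argument that Theorem~\ref{thm:4.16} (which only handles bounded-window observables) does not supply. The same problem already appears at level $k=1$ in the degenerate case where the terminal component is a single absorbing state, since runs of equal letters are then eventually infinite; that case needs, and in the paper receives, separate treatment with $\gamma_k = 0$.

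This difficulty is precisely what the paper's proof is organised to avoid. Instead of approximating the level-$k$ indicator by local functions of $w$, the paper proves a closure property: if $w$ is generated by an \epri chain, then $\eis(w)$ is itself generated by an explicit Markov chain over the infinite alphabet $\U^\wedge$ of unimodal words (Lemma~\ref{lem:9}), and that derived chain is again \epri (Proposition~\ref{pro:9}, via the stationary distribution exhibited in Lemma~\ref{lem:11}). The one-step result (Corollary~\ref{cor:8}, a single application of Theorem~\ref{thm:4.16} to a genuinely local observable) is then applied level by level, yielding $\gamma_k = \theta_0\theta_1\cdots\theta_{k-1}$; your problematic periodic scenario reappears there as a derived chain whose terminal component has size one, which the induction in Proposition~\ref{pro:13} handles by setting $\gamma_k=0$. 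To rescue your approach you would have to either restrict the tail claim to situations where the factor process genuinely has light tails --- which in effect amounts to re-proving the paper's closure property --- or replace Theorem~\ref{thm:4.16} by Birkhoff's theorem for stationary processes applied to $\mathbf{1}[0 \in P_k]$, together with a separate argument for the non-stationary, possibly periodic start of an \epri chain.
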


A particular case of interest arises when~$w$ is a word
over a finite alphabet generated by an ergodic Markov chain.
However, even in that restricted case,
studying the words~$\is^k(w)$ for~$k \geqslant 1$
will require us to consider words over infinite alphabets,
which might be generated by Markov chains no longer
ergodic, but only \epri.
That is why, facing the need to treat such a generalised
setting, we chose to include it from the start in our study.

In addition, all finite-state Markov chains can be decomposed
as a ``sum'' of \epri Markov chains. Indeed, if the underlying
graph of such a Markov chain~$(M,\mu)$ has~$k$ terminal
strongly connected components, the Markov chain will almost
surely reach one of these components. Thus, in order to study the
Markov chain~$(M,\mu)$, we may consider, one by one, its~$k$ terminal components; for each such component~$K$,
compute the probability that~$(M,\mu)$ eventually reaches~$K$;
finally, simulate the behaviour of~$(M,\mu)$ by first selecting
at random which terminal component~$K$ it will reach, and then
assuming that~$(M,\mu)$ must reach that component, thereby
transforming~$(M,\mu)$ into an \epri Markov chain.
This allows us to obtain the following variant of
Theorem~\ref{thm:6}.

\begin{theorem}
\label{thm:6a}
Let~$w$ be a word whose letters are generated by a finite-state
Markov chain. There exist a constant~$\kappa$ and a probability
law~$X$ over the set~$\{1,2,\ldots,\kappa\}$ with the
following property:
For all integers~$k \geqslant 0$,
there exist constants~$\gamma_{1,k},\gamma_{2,k},\ldots,
\gamma_{\kappa,k}$ and a sequence~$(\varepsilon_n)_{n \geqslant 0}$
that tends to~$0$ such that, for all~$i \leqslant \kappa$,
\[\left|\bbP\left[\left|\frac{|\is^k(w)|}{|w|}
- \gamma_{i,k}\right| \leqslant \varepsilon_{|w|}\right]
- \bbP[X = i]\right| \leqslant \varepsilon_{|w|}.\]
\end{theorem}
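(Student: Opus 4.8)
The plan is to reduce Theorem~\ref{thm:6a} to Theorem~\ref{thm:6} by a conditioning argument, exactly along the lines sketched before the statement. First I would let~$G$ be the (finite) underlying graph of~$(M,\mu)$, and let~$K_1,\ldots,K_\kappa$ be its terminal strongly connected components, i.e., the absorbing classes of the chain. Since~$G$ is finite, the trajectory~$(X_n)_{n \geqslant 0}$ almost surely enters exactly one of the~$K_i$ and never leaves it; I would define~$X$ to be the random index of that component and set~$p_i = \bbP[X = i]$, so that~$X$ is a probability law over~$\{1,\ldots,\kappa\}$.

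The construction step is to describe the law of~$w$ conditioned on~$\{X = i\}$. I would invoke the standard Doob transform: writing~$h(x)$ for the probability that a trajectory started at state~$x$ eventually reaches~$K_i$, the conditioned process is the time-homogeneous Markov chain with transition matrix~$M_i(x,y) = M(x,y)\, h(y)/h(x)$ on the set~$\{x \colon h(x) > 0\}$, with initial distribution~$\mu_i$ obtained from~$\mu$ by the same conditioning. This chain renders the states of~$K_j$ ($j \neq i$) unreachable, leaves~$K_i$ unchanged, and almost surely reaches~$K_i$; since~$K_i$ is strongly connected and carries the positive stationary distribution obtained by restricting the chain to~$K_i$, the pair~$(M_i,\mu_i)$ is \epri with terminal component~$K_i$. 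Crucially, conditioned on~$\{X = i\}$, the word~$w = X_0 X_1 \cdots X_{|w|-1}$ is distributed exactly as a length-$|w|$ word generated by~$(M_i,\mu_i)$.

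With this in hand I would apply Theorem~\ref{thm:6} to each of the~$\kappa$ chains~$(M_i,\mu_i)$, obtaining for every~$k$ a constant~$\gamma_{i,k}$ and a sequence~$\varepsilon^{(i)}_n \to 0$ such that the conditional concentration
\[\bbP\!\left[\left|\frac{|\is^k(w)|}{|w|} - \gamma_{i,k}\right| \geqslant \varepsilon^{(i)}_{|w|} \,\middle|\, X = i\right] \leqslant \varepsilon^{(i)}_{|w|}\]
holds. Setting~$\varepsilon_n = \max_{i \leqslant \kappa} \varepsilon^{(i)}_n$ (a finite maximum, hence still vanishing), I would expand the unconditional probability by the law of total probability over the events~$\{X = j\}_{j \leqslant \kappa}$: the diagonal term~$j = i$ contributes~$p_i(1 - o(1))$, and each off-diagonal term~$j \neq i$ contributes~$o(1)$ because, conditioned on~$X = j$, the ratio concentrates near~$\gamma_{j,k}$, which lies at distance~$|\gamma_{i,k} - \gamma_{j,k}| > 2\varepsilon_n$ from~$\gamma_{i,k}$ once~$n$ is large. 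This yields~$\bbP[\,|\,|\is^k(w)|/|w| - \gamma_{i,k}| \leqslant \varepsilon_{|w|}\,] = p_i + o(1)$, which is the claimed bound after absorbing the~$o(1)$ into an enlarged~$\varepsilon_n$.

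I expect two points to require care. The first is the construction step: one must check that conditioning on the absorbing class is genuinely Markovian and homogeneous (the Doob transform), and that the conditioned chain satisfies all three defining properties of an \epri chain --- this is where property~(iii) of the \epri definition and the positivity of the restricted stationary distribution are used. The second, and the real obstacle, is killing the off-diagonal terms: the argument above needs~$\gamma_{i,k} \neq \gamma_{j,k}$ for distinct~$i,j$ with~$p_i,p_j > 0$, since otherwise the event~$\{\,|\,|\is^k(w)|/|w| - \gamma_{i,k}| \leqslant \varepsilon_n\,\}$ captures the mass of several components and its probability approaches~$\sum_{j \colon \gamma_{j,k} = \gamma_{i,k}} p_j$ rather than~$p_i$. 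I would therefore argue that distinct terminal components yield distinct limiting constants (for each fixed~$k$), or otherwise merge statistically indistinguishable components into a single index of~$X$; reconciling such a merge across all~$k$ simultaneously is the most delicate part of the argument.
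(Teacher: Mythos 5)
Your proposal is essentially the paper's own argument: the paper gives no formal proof of Theorem~\ref{thm:6a}, only the decomposition sketch in the paragraph preceding its statement (condition on which terminal strongly connected component the finite chain is absorbed into, note that the conditioned chain is \epri, apply Theorem~\ref{thm:6} to each piece, recombine). Your Doob-transform description of the conditioned chain, the verification of the three \epri requirements, and the recombination by the law of total probability are exactly that plan made precise, and this part of your write-up is sound.

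The obstacle you flag at the end, however, deserves to be stated more strongly: it is a genuine gap, and it is a gap in the paper's sketch (and arguably in the statement itself), not merely in your write-up. Distinct terminal components need not produce distinct constants. Two isomorphic components give $\gamma_{i,k} = \gamma_{j,k}$ for every $k$; that case is repaired by merging indices once and for all, as you suggest. But one can also arrange $\gamma_{i,k} = \gamma_{j,k}$ for some $k$ and $\gamma_{i,k'} \neq \gamma_{j,k'}$ for another $k'$: for instance, a uniform binary i.i.d.\ component has $\gamma_1 = 1/4$ by Proposition~\ref{pro:18}, while a three-letter i.i.d.\ component with law $(9/20,\,1/2,\,1/20)$ has $\gamma_1 = p_0(1-p_0) + p_1 p_2^2/(p_0+p_2) = 99/400 + 1/400 = 1/4$ as well (the same computation that yields $9/32$ for the law $(3/8,1/4,3/8)$ in Section~\ref{sec:independent-case}), and there is no reason for their second-level constants to agree. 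In such a situation the merge that your off-diagonal argument forces at $k=1$ (otherwise the event around the common value $1/4$ captures mass $p_i + p_j$ rather than $p_i$) is forbidden at $k=2$, and since the statement quantifies $\kappa$ and $X$ \emph{before} $k$, no single choice works at both levels by this argument --- salvaging the statement would require a fine analysis of the fluctuations of $|\is(w)|/|w|$ around the common constant, which neither you nor the paper carries out. So your reduction is the intended one and is complete whenever, for each $k$, the components with positive mass have pairwise distinct constants; in the general case the honest fix is to let $\kappa$ and $X$ depend on $k$, or to restate the conclusion as convergence in distribution of $|\is^k(w)|/|w|$ to a law supported on finitely many constants.
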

 
\subsection{Generating letters from right to left}
\label{sec:right-left}

In~\cite{nicaud2015probabilistic}, the letters
of~$w$ are generated from right to left, i.e.,
the letter~$w_{n-k}$ is the~$k$\textsuperscript{th} element
of the Markov chain.
Here, we mainly focus on this case too.
Generating the letters of~$w$ from right to left
makes things easier
because, although being~$w$-non-decreasing
is \emph{not} a local property,
it enjoys the following local,
recursive characterization:
an integer~$i$ is~$w$-non-decreasing if and only if~$i \leqslant |w|-2$ and
either (a)~$w_i < w_{i+1}$,
or (b)~$w_i = w_{i+1}$ and~$i+1$ is~$w$-non-decreasing.

Below, we wish to study the sequence~$w, \is(w), \is^2(w), \ldots$
and in particular the lengths of these words.
In fact, it will be easier to study the sequence~$w, \eis(w), \eis^2(w), \ldots$
These two sequences differ from each other
because they do not use the same alphabets. Yet, for
all~$k \geqslant 0$, the words~$\is^k(w)$ and~$\eis^k(w)$ are ``isomorphic'' to each other:
they have the same length, and
there exists an increasing
mapping~$\varphi$
from the letters of~$\eis^k(w)$ to those
of~$\is^k(w)$, such that~$\varphi(\eis^k(w)_i) = \is^k(w)_i$ for all~$i < |\eis^k(w)|$.

Following~\cite{nicaud2015probabilistic,nong2010two},
we transform the Markov chain~$(M,\mu)$
into another Markov chain~$(\overline{M},\overline{\mu})$ that starts
with the letter~$\$$ and,
in addition to telling which letter we produce,
also tells whether the corresponding index
is~$w$-non-decreasing: instead of producing
letters~$a \in \A_\$$, this new Markov chain
shall produce pairs~$(a,\uparrow)$ or~$(a,\downarrow)$,
depending on whether the current position
is~$w$-non-decreasing or not:
we produce a pair~$(a,\uparrow)$ if the former case,
and~$(a,\downarrow)$ in the latter case.
Formally, the Markov chain~$(\overline{M},\overline{\mu})$ is defined as follows.
Its states form the set~$\overline{\S} = \A_\$ \times \{\uparrow,\downarrow\}$.
Its initial distribution is defined by ~$\overline{\mu}(\$,\uparrow) = 1$, and~$\overline{\mu}(s) = 0$ whenever~$s \neq (\$,\uparrow)$.
Its transition matrix is then defined by
\[
\begin{cases}
\overline{M}\big((\$,\updownarrow),(y,\downarrow)\big)
= \mu(y) & \text{if } y \in \A; \\
\overline{M}\big((x,\updownarrow),
(y,\downarrow)\big) = M(x,y)
& \text{if } (x,y) \in \A^2 \text{ and } x < y; \\
\overline{M}\big((x,\updownarrow),
(y,\uparrow)\big) = M(x,y)
& \text{if } (x,y) \in \A^2 \text{ and } x > y; \\
\overline{M}\big((x,\updownarrow),
(y,\Updownarrow)\big) = M(x,y)
& \text{if } (x,y) \in \A^2 \text{, } x = y \text{ and }
\updownarrow=\Updownarrow; \\
\overline{M}\big((x,\updownarrow),(y,\Updownarrow)) = 0
& \text{otherwise}.
\end{cases}\]

\begin{proposition}
\label{pro:7}
Let~$(M,\mu)$ be an \epri Markov chain
whose terminal component
has size at least two.
The Markov chain~$(\overline{M},\overline{\mu})$
defined above is \epri.
\end{proposition}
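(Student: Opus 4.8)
The plan is to verify directly the three defining properties of an \epri Markov chain for $(\overline{M},\overline{\mu})$, using as candidate terminal component the set
\[\overline{\X} = \{(a,\uparrow) : a \in \X,\ \exists x \in \X,\ x > a,\ M(x,a) > 0\} \cup \{(a,\downarrow) : a \in \X,\ \exists x \in \X,\ x < a,\ M(x,a) > 0\},\]
where $\X$ is the terminal component of $(M,\mu)$. The guiding intuition is that $\overline{M}$ \emph{refines} $M$: forgetting the arrow turns $\overline{M}$ back into $M$, because the arrow attached to the target of a transition is a deterministic function of the two letters involved (and of the source arrow only when the two letters coincide). Accordingly, I will look for a stationary distribution $\overline{\nu}$ whose letter-marginal is the stationary distribution $\nu$ of $M$, splitting each mass $\nu(a)$ according to whether $a$ was reached from a larger letter ($\uparrow$) or a smaller one ($\downarrow$). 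Isolating the self-loop at $a$ in the stationarity equation yields the explicit candidate
\[\overline{\nu}(a,\uparrow) = \frac{\sum_{x > a} \nu(x) M(x,a)}{1 - M(a,a)}, \qquad \overline{\nu}(a,\downarrow) = \frac{\sum_{x < a} \nu(x) M(x,a)}{1 - M(a,a)}\]
for $a \in \X$, and $\overline{\nu} = 0$ elsewhere; here the hypothesis that $\X$ has size at least two is exactly what guarantees $M(a,a) < 1$ for every $a \in \X$, so the denominators are positive. A short computation (checking $\overline{M}\overline{\nu} = \overline{\nu}$ coordinate by coordinate and using $\sum_x M(x,a)\nu(x) = \nu(a)$) confirms that $\overline{\nu}$ is stationary and that $\overline{\nu}(a,\uparrow) + \overline{\nu}(a,\downarrow) = \nu(a)$, so $\overline{\nu}$ is a probability distribution whose support is exactly $\overline{\X}$. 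Because $\X$ is closed in $G$, any $x$ with $\nu(x) > 0$ and $M(x,a) > 0$ lies in $\X$, which forces $\overline{\nu}$ to vanish off $\{(a,\cdot) : a \in \X\}$; this establishes property~(ii).

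For property~(i), I will show that $\overline{\X}$ is a strongly connected component of the underlying graph $\overline{G}$ by proving separately that it is closed and strongly connected. Closedness is immediate: an edge out of $(a,\updownarrow) \in \overline{\X}$ reaches some $(b,\Updownarrow)$ with $M(a,b) > 0$; since $\X$ is closed we have $b \in \X$, and the forced arrow $\Updownarrow$ is $\uparrow$ when $a > b$, $\downarrow$ when $a < b$, and equal to $\updownarrow$ when $a = b$ — in each case $(b,\Updownarrow) \in \overline{\X}$, the first two because $a$ is then a witnessing in-neighbour of $b$, the third because the state is unchanged. For strong connectivity, given $(a,\updownarrow),(b,\Updownarrow) \in \overline{\X}$ I use the defining witness of the target: pick $c \in \X$ with $c > b$ and $M(c,b) > 0$ if $\Updownarrow = \uparrow$, and $c < b$ with $M(c,b) > 0$ if $\Updownarrow = \downarrow$. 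Since $\X$ is strongly connected, there is a $G$-path from $a$ to $c$ inside $\X$; lifting it to $\overline{G}$ (the arrow at each step being determined by the local comparison, or merely carried over across equal-letter steps) produces an $\overline{G}$-path from $(a,\updownarrow)$ to a state with letter $c$, and appending the edge $c \to b$ lands exactly on $(b,\Updownarrow)$, because $c > b$ (resp.\ $c < b$) fixes the final arrow regardless of the arrow at $c$. Closedness ensures this whole path stays in $\overline{\X}$, and a closed strongly connected set is a terminal strongly connected component. The control of the arrow labels along lifted paths — in particular the fact that the \emph{last} strict comparison alone fixes the target arrow — is the technical heart of the argument and the main point to get right.

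Finally, for property~(iii) I will prove almost-sure absorption into $\overline{\X}$ from every initial distribution. Since $(M,\mu)$ is \epri, the letter sequence almost surely enters $\X$ and never leaves it afterwards. On $\X$ the letter chain is irreducible and positive recurrent on at least two states, so it almost surely performs some transition $c \to d$ between two distinct letters; at the first such transition the arrow at $d$ becomes $\uparrow$ (if $c > d$) or $\downarrow$ (if $c < d$), and in either case $(d,\text{arrow}) \in \overline{\X}$ since $c$ witnesses membership. As $\overline{\X}$ is closed, the trajectory remains there thereafter, which gives~(iii) and completes the proof that $(\overline{M},\overline{\mu})$ is \epri. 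The size-$\geqslant 2$ hypothesis is again essential here: it prevents the letter chain from stalling forever on a single letter, which would otherwise leave the arrow never fixed by a genuine comparison.
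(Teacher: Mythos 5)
Your proof is correct and takes essentially the same approach as the paper's: the same candidate terminal component~$\overline{\X}$, the identical stationary distribution (splitting each mass~$\nu(a)$ according to the direction of the incoming transition, with denominator~$1 - M(a,a)$), the same witness-based routing argument for strong connectivity of~$\overline{\X}$, and the same first-distinct-letter absorption argument for requirement~(iii). The only cosmetic difference is that you spell out the closedness (hence maximality) of~$\overline{\X}$, which the paper leaves implicit.
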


\begin{proof}
Let~$G = (\A,E,\pi)$
be the underlying graph of the Markov chain~$(M,\mu)$, let~$\X$ be its terminal component,
and let~$\nu$ be its stationary distribution.
In addition, for all~$x \in \A$, let ~$x^\uparrow = \{y \in \X \colon x < y \text{ and }
(y,x) \in E\}$ and~$x^\downarrow = \{y \in \X \colon x > y \text{ and }
(y,x) \in E\}$.

Since~$M(x,x)< 1$ for all~$x \in \A$,
the distribution ~$\overline{\nu}$ on~$\overline{\S}$ defined by~$\overline{\nu}(\$,\updownarrow) = 0$ and by
\[
\overline{\nu}(x,\updownarrow) = 
\frac{1}{1 - M(x,x)} \sum_{y \in x^\updownarrow}
M(y,x) \nu(y)\]
for all~$(x,\updownarrow) \in \A \times \{\uparrow,\downarrow\}$
is a probability distribution, because
\[
\overline{\nu}(x,\uparrow) + \overline{\nu}(x,\downarrow)
= \frac{1}{1-M(x,x)} \sum_{y \colon \! x \neq y}
M(y,x) \nu(y)
= \frac{M \nu (x) - M(x,x) \nu(x)}{1-M(x,x)}
= \nu(x) \tag{1}\label{eq:1}\]
for all~$x \in \A$. We further deduce from~\eqref{eq:1}
that
\begin{align*}
\overline{M} \overline{\nu}(x,\updownarrow) -
M(x,x) \overline{\nu}(x,\updownarrow)
= \sum_{y \in x^{\updownarrow}} M(y,x)
\big(\overline{\nu}(y,\uparrow) +
\overline{\nu}(y,\downarrow)\big)
& = \sum_{y \in x^{\updownarrow}} M(y,x) \nu(y) \\
& = (1 - M(x,x)) \overline{\nu}(x,\updownarrow),\end{align*}
i.e., that~$\overline{M} \overline{\nu}(x,\updownarrow) =
\overline{\nu}(x,\updownarrow)$, for all~$(x,\updownarrow) \in \A \times \{\uparrow,\downarrow\}$.
This means that~$\overline{\nu}$ is a stationary distribution of~$(\overline{M},\overline{\mu})$.

This probability distribution is positive
on the set
\[\overline{\X} \eqdef
\{(x,\uparrow) \colon x \in \X,
x^\uparrow \neq \emptyset\} \cup
\{(x,\downarrow) \colon x \in \X, x^\downarrow \neq
\emptyset\}\]
and is zero
outside of~$\overline{\X}$.
Since~$\overline{\nu}$ is non-zero, it follows that~$\overline{\X}$ is non-empty.

Then, let~$\overline{G}$ be the
underlying graph of~$(\overline{M},\overline{\mu})$.
We shall prove that~$\overline{\X}$ satisfies the
requirements~(i) and~(iii) of \epri
Markov chains.
Hence, consider some state~$(x,\uparrow)$ in~$\overline{\X}$,
and let~$y$ be a state in~$x^\uparrow$.
For every state~$(z,\updownarrow)$ in~$\overline{\X}$,
the graph~$G$ contains a finite path from~$z$ to~$x$
whose second-to-last vertex is~$y$,
and thus~$\overline{G}$ contains a finite path
from~$(z,\updownarrow)$ to~${(x,\uparrow)}$.
Similarly, every state~$(x,\downarrow)$ in~$\overline{\X}$
is accessible from every state~$(z,\updownarrow)$
in~$\overline{\X}$, and thus~$\overline{\X}$
satisfies the requirement~(i).

Finally, consider some trajectory~$(\overline{X}_n)_{n \geqslant 0}$
of~$(\overline{M},\overline{\mu})$.
Deleting its first vertex and removing the second
component of each vertex transforms~$(\overline{X}_n)_{n \geqslant 0}$ into
a trajectory~$(X_n)_{n \geqslant 1}$ of the Markov chain~$M$,
which almost surely contains a vertex~$x \in \X$ and then
almost surely meets a vertex distinct from~$x$; let~$y$ be the first such vertex.
The trajectory~$(\overline{X}_n)_{n \geqslant 0}$
contains the vertex~$(y,\uparrow)$ if~$y < x$, or~$(y,\downarrow)$
if~$y > x$, and in both cases that vertex belongs
to~$\overline{\X}$. This shows that~$\overline{\X}$ satisfies the requirement~(iii).
\end{proof}

Using Theorem~\ref{thm:4.16} for the function~$f : \overline{\S} \times \overline{\S} \mapsto \bbR$
defined by 
\[\begin{cases}
\displaystyle
f\big((x,\uparrow),(y,\downarrow)\big) = 1 &
\text{for all } x, y \in \A; \\
f(u,v) = 0 &
\text{in all other cases}
\end{cases}\]
already allows us to prove a special case
of Theorem~\ref{thm:6} for~$k = 1$, which was already
proven in~\cite{nicaud2015probabilistic} in the case~$\A$
is finite and~$(M,\mu)$ is ergodic.

However, if the terminal component of~$M$
contains only one state~$z$,
the Markov chain~${(\overline{M},\overline{\mu})}$
is no longer \epri, since its graph contains two self-loops
around~$(z,\uparrow)$ and~${(z,\downarrow)}$, each one with
weight~$1$. We overcome this difficulty by merging
the two states~$(z,\uparrow)$ and~${(z,\downarrow)}$ into one
single state~$z$, thereby recovering an \epri Markov chain,
and we modify the function~$f$, redefining it by
\[\begin{cases}
\displaystyle
f\big((x,\uparrow),(y,\downarrow)\big) = 1 &
\text{for all } x, y \in \A \setminus \{z\}; \\
f\big((x,\uparrow),z\big) = 1 &
\text{for all } x \in z^\downarrow; \\
f(u,v) = 0 &
\text{in all other cases.}
\end{cases}\]

Tackling this special case allows us to derive the following
result, whose validity does not depend on the
size of the terminal component of~$M$.

\begin{corollary}
\label{cor:8}
If the letters of~$w$ are generated from right to left
by an \epri Markov chain,
there exists a constant~$\gamma_1$ such that~$\bbP[|\eis(w)| / |w| \to \gamma_1] = 1$.
\end{corollary}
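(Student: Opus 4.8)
The plan is to write $|\eis(w)|$ as a Birkhoff sum of an indicator along a trajectory of the annotated chain $(\overline M,\overline\mu)$ and then to apply the ergodic theorem (Theorem~\ref{thm:4.16}) with $\ell=2$. First I would recall that $|\eis(w)|$ equals the number of $w$-locally minimal integers. Fix a trajectory $(\overline X_m)_{m\geqslant 0}$ of $(\overline M,\overline\mu)$: its state $\overline X_m$ records the $m$\textsuperscript{th} letter generated from right to left together with whether the corresponding position is $w$-non-decreasing, and by the recursive right-to-left characterization this annotation depends only on $\overline X_0,\dots,\overline X_m$. Thus, if $w$ consists of the first $n$ generated letters and we write the sentinel as $\overline X_0=(\$,\uparrow)$, then position $i$ corresponds to $\overline X_{n-i}=(w_i,\cdot)$ and position $i-1$ to $\overline X_{n-i+1}=(w_{i-1},\cdot)$. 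Reading off the definition of $\overline M$, the only transition from an $\uparrow$-state to a $\downarrow$-state requires a strict increase of the letter; hence position $i$ is $w$-locally minimal (i.e.\ $i$ is $w$-non-decreasing and $w_{i-1}>w_i$) if and only if $(\overline X_{n-i},\overline X_{n-i+1})=\big((w_i,\uparrow),(w_{i-1},\downarrow)\big)$. With $f$ the indicator defined right after Proposition~\ref{pro:7}, this yields the exact identity $|\eis(w)|=\sum_{m=0}^{n-1}f(\overline X_m,\overline X_{m+1})$, simultaneously for every prefix length $n$ along the same infinite trajectory.

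Given this identity, the conclusion is almost immediate when the terminal component of $M$ has size at least two. Proposition~\ref{pro:7} then guarantees that $(\overline M,\overline\mu)$ is \epri, so Theorem~\ref{thm:4.16} applied to the bounded function $f$ with $\ell=2$ gives
\[
\frac1n\sum_{m=0}^{n-1}f(\overline X_m,\overline X_{m+1})\xrightarrow{\,n\to+\infty\,}\bbE_{\overline\nu}[f]\qquad\text{almost surely,}
\]
where $\overline\nu$ is the stationary distribution of $(\overline M,\overline\mu)$. Setting $\gamma_1=\bbE_{\overline\nu}[f]$ and dividing by $n=|w|$ gives $|\eis(w)|/|w|\to\gamma_1$ almost surely. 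The finitely many boundary contributions --- the sentinel state and the finite random transient before the trajectory reaches the terminal component --- add only $O(1)$ to the sum and therefore disappear after division by $n$.

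The single case requiring separate attention is when the terminal component of $M$ is a lone state $z$: then $(\overline M,\overline\mu)$ fails to be \epri because its graph carries two weight-one self-loops, around $(z,\uparrow)$ and $(z,\downarrow)$. I would handle it exactly as flagged in the text, merging $(z,\uparrow)$ and $(z,\downarrow)$ into a single state $z$, which restores the \epri property, and replacing $f$ by the modified indicator given after Proposition~\ref{pro:7}; this modified $f$ still counts the $w$-locally minimal integers correctly once the annotation at $z$ has been discarded. Theorem~\ref{thm:4.16} then applies verbatim to the merged chain, yielding $\gamma_1$ again (here $\gamma_1=0$, since $\overline\nu$ is concentrated on $z$ and $f$ vanishes on its self-loop). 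The main obstacle is therefore not the limit theorem but the two combinatorial verifications that precede it: that $f$ encodes $w$-local minimality exactly \emph{via} the transition structure of $\overline M$, and that the merging trick in the degenerate case preserves both the \epri property and the count up to a negligible error.
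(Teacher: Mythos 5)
Your proposal is correct and matches the paper's own argument essentially step for step: the paper also obtains Corollary~\ref{cor:8} by applying Theorem~\ref{thm:4.16} (with $\ell=2$) to the indicator $f$ of $\uparrow\to\downarrow$ transitions along the annotated chain $(\overline M,\overline\mu)$, using Proposition~\ref{pro:7} for the \epri property and the same merging of $(z,\uparrow)$ and $(z,\downarrow)$ with the modified $f$ in the single-state terminal-component case. Your explicit verification that $f$ counts exactly the $w$-locally minimal positions (including the vanishing sentinel term) is a detail the paper leaves implicit, but it is the same proof.
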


Moreover, since~$|\is^{k+1}(w)| \leqslant |\is^k(w)|$
for all words~$w$ and all integers~$k \geqslant 0$,
we already know that Theorem~\ref{thm:6} holds,
with~$\gamma_k = 0$, when the terminal component of~$M$ has size one. Henceforth, we assume that this terminal
component has size at least two.

Under this assumption, let us show that
the letters of the word~$\eis(w)$ are also generated by
a Markov chain.
In order to do so,
we introduce the function~$M^+ \colon \A \to \bbR$ defined by
\[M^+(x) = \sum_{y \colon x < y} M(x,y)\]
for every letter~$x \in \A$,
and the function~$m \colon \A^+ \cdot (\varepsilon + \$) \to \bbR$
defined by
\[m(w_0 w_1 \cdots w_k) = M(w_1,w_0) M(w_2,w_1) \cdots M(w_k,w_{k-1})\]
and~$m(w \cdot \$) = m(w) \mu(w_{-1})$
for every word~$w = w_0 w_1 \cdots w_k$ in~$\A^+$.
We also define the set
\begin{align*}
\U^\wedge \eqdef \{w_0 w_1 \cdots w_\ell \in \A^+ \cdot (\varepsilon + \$) \colon & M^+(w_0) > 0 \text{ and } \\
& \exists k \leqslant \ell, w_0 \leqslant
\ldots \leqslant w_{k-1} < w_k \geqslant 
\ldots \geqslant w_{\ell-1} > w_\ell\}.
\end{align*}

\begin{lemma}
\label{lem:9}
\label{lem:appendix:lemmaA}
The letters of the word~$\eis(w)$
are generated from right to left
by the Markov chain~$(\ovl{M},\ovl{\mu})$
with set of states~$\U^\wedge$,
whose initial distribution is defined by
\[\ovl{\mu}(w) = M^+(w_0) m(w)
\mathbf{1}_{w_{-1} = \$}\]
for every word~$w \in \U^\wedge$,
and whose transition matrix is defined by
\[
\ovl{M}(w,w') = \dfrac{M^+(w'_0)}{M^+(w_0)} \mathbf{1}_{w_0 = w'_{-1}}
m(w').\]
\end{lemma}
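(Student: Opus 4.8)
The plan is to establish Lemma~\ref{lem:9} by a reverse-engineering computation: we know from the right-to-left recursive characterisation of $w$-non-decreasing positions that the unimodal factors of $w$ are exactly the maximal blocks delimited by consecutive $w$-locally minimal integers, and we know from Proposition~\ref{pro:7} (and the surrounding discussion) that the letters of $w$, \emph{decorated} with their $\uparrow/\downarrow$ status, are generated from right to left by the Markov chain $(\overline{M},\overline{\mu})$. The key idea is that a unimodal factor of $w$ is precisely a maximal run in the decorated trajectory that begins at a locally-minimal position — i.e., a maximal block of the form $(w_0,\uparrow)(w_1,\uparrow)\cdots(w_{k-1},\uparrow)(w_k,\downarrow)\cdots(w_\ell,\downarrow)$, reading the original word left to right. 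So I would first make explicit the dictionary between ``reading $(\overline{M},\overline{\mu})$ from right to left'' and ``cutting $w$ at its locally-minimal integers,'' checking that $\U^\wedge$ is exactly the set of words that can arise as such a block (the condition $M^+(w_0)>0$ encodes that position $0$ of the block is genuinely non-decreasing, i.e.\ locally minimal, and the shape condition in $\U^\wedge$ is the unimodality).

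\textbf{Deriving the initial distribution and transition matrix.} Once the dictionary is set, the two formulas should fall out of the Markov property of $(\overline{M},\overline{\mu})$ by summing over the internal letters of each block. Reading a unimodal factor $w = w_0 \cdots w_k$ from right to left, the successive transitions in $M$ are $w_k \to w_{k-1} \to \cdots \to w_0$, which contributes the product $M(w_k,w_{k-1})\cdots M(w_1,w_0)$; this is exactly $m(w)$. The factor $M^+(w_0)$ arises because, once we have read $w_0$ (the leftmost, locally-minimal letter of the block) from right to left, the next letter produced must be strictly \emph{larger} than $w_0$ in order to close off the block and start a new non-decreasing run — and $M^+(w_0) = \sum_{y>w_0} M(w_0,y)$ is precisely the total weight of such a step, acting as the normalising constant that turns the raw product $m$ into a genuine transition probability. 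I would verify the transition matrix formula $\ovl{M}(w,w') = \frac{M^+(w'_0)}{M^+(w_0)}\mathbf{1}_{w_0=w'_{-1}} m(w')$ by noting that consecutive unimodal factors overlap at exactly one letter (the locally-minimal index $i_\ell$ is the last letter of one factor and the first of the next), which forces $w_0 = w'_{-1}$, and that the ratio $M^+(w'_0)/M^+(w_0)$ is the correction needed because the shared boundary letter has already been ``charged'' once.

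\textbf{Checking that this is a well-defined Markov chain.} The remaining obligations are routine but must be stated: that $\ovl{\mu}$ and each row $\ovl{M}(w,\cdot)$ sum to $1$ over $\U^\wedge$, and that the sequence of factors genuinely enjoys the Markov property (i.e.\ the distribution of the next factor depends only on the boundary letter $w_0$, not on the whole history). The normalisation of $\ovl{M}$ reduces to a telescoping identity: summing $m(w')$ over all unimodal continuations $w'$ with fixed first letter $w'_{-1} = w_0$ should recover $M^+(w_0)/M^+(w'_0)$ weighted appropriately, so that the row sums to $1$; this is essentially a rephrasing of the fact that every step of $M$ out of the boundary letter either continues the current run or starts a new block. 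The Markov property itself follows because, conditioned on the boundary letter $w_0$ shared between two consecutive factors, the decorated chain $(\overline{M},\overline{\mu})$ forgets everything to the right of $w_0$ — this is where I would invoke the Markov property of $(\overline{M},\overline{\mu})$ directly.

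\textbf{The main obstacle} I anticipate is the bookkeeping around the shared boundary letter and the sentinel. Because consecutive unimodal factors share their extremal letter, one must be careful not to double-count the transition across that shared letter, and the $\mathbf{1}_{w_0 = w'_{-1}}$ indicator together with the $M^+$ ratio is exactly the device that resolves this — but verifying that it produces a consistent, correctly normalised chain (rather than an off-by-one error in the weights) is the delicate point. A secondary subtlety is the treatment of the leftmost factor ending in $\$$, handled by the indicator $\mathbf{1}_{w_{-1}=\$}$ in $\ovl{\mu}$ and the ad hoc extension $m(w\cdot\$) = m(w)\mu(w_{-1})$; I would check separately that the initial distribution correctly accounts for the sentinel block, which is generated first in the right-to-left order and whose leftmost letter is locally minimal by the convention $w_{|w|}=\$$. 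Once these boundary conventions are pinned down, the rest is a direct translation of the already-established properties of $(\overline{M},\overline{\mu})$.
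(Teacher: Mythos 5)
Your core computation is the one the paper itself uses: the paper fixes unimodal words $u^{(1)},\ldots,u^{(k)}$ with $u^{(i)}_{-1}=u^{(i+1)}_0$, observes that they are the $k$ rightmost letters of $\eis(w)$ exactly when $w\cdot\$$ ends with $x\cdot u^{(1)}\cdot u^{(2)}_{1\cdots}\cdots u^{(k)}_{1\cdots}$ for some letter $x>u^{(1)}_0$, computes the probability of that suffix as $M\big(u^{(1)}_0,x\big)\,m\big(u^{(1)}\big)\cdots m\big(u^{(k)}\big)\,\mathbf{1}_{u^{(k)}_{-1}=\$}$, sums over $x$ to produce the factor $M^+\big(u^{(1)}_0\big)$, and checks that the result telescopes into $\ovl{M}\big(u^{(2)},u^{(1)}\big)\cdots\ovl{M}\big(u^{(k)},u^{(k-1)}\big)\,\ovl{\mu}\big(u^{(k)}\big)$. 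Your detour through the decorated chain $(\overline{M},\overline{\mu})$ is harmless --- the decorations are a deterministic function of the letters when reading right to left --- and your explanation of where $m(\cdot)$, the boundary factor $M^+$, and the ratio $M^+(w'_0)/M^+(w_0)$ come from matches this computation, including the treatment of the shared boundary letter and of the sentinel.

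The genuine gap is in your last obligation, the normalisation, which you call routine. You claim that $\sum_{w'\in\U^\wedge} \ovl{M}(w,w')=1$ ``reduces to a telescoping identity'' and is ``essentially a rephrasing of the fact that every step of $M$ out of the boundary letter either continues the current run or starts a new block''. That fact only shows that the events indexed by $w'$ are pairwise disjoint; it does not show that their probabilities sum to $1$. The row sum is the probability that, generating letters leftward from the boundary letter, another locally minimal position is ever reached, i.e.\ that the next unimodal factor completes in finite time, and probability mass can genuinely escape here: if the terminal component of $M$ is a single absorbing state $z$, the leftward trajectory is eventually constant equal to $z$, no further locally minimal position ever occurs, and the row sums are strictly less than $1$, so $(\ovl{M},\ovl{\mu})$ is not a Markov chain at all. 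This is precisely why the paper assumes, just before this lemma, that the terminal component has size at least two, and why its proof closes by invoking Corollary~\ref{cor:8}: for a left-infinite word $w$ generated by $(M,\mu)$, the word $\eis(w)$ is almost surely infinite, and it is this probabilistic fact --- not a formal identity --- that makes $\ovl{\mu}$ a probability distribution and $\ovl{M}$ a stochastic matrix. Your proof needs this ingredient (or an equivalent stopping-time argument such as the one used for Lemma~\ref{lem:10}), and without it the argument is incomplete.
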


\begin{proof}
Let~$u^{(1)}, u^{(2)}, \ldots, u^{(k)}$ be unimodal words
such that~$u^{(i)}_{-1} = u^{(i+1)}_0$ for all~$i \leqslant k-1$.
These are the~$k$ rightmost letters of the word~$\eis(w)$ if and only if there exists
a letter~$x \in \A$ such that~$x > u^{(1)}_0$
and~$w \cdot \$$ ends with the suffix~$x \cdot u^{(1)} \cdot u^{(2)}_{1 \cdots} \cdot u^{(3)}_{1 \cdots} \cdots u^{(k)}_{1 \cdots}$,
which happens with probability
\[
\mathbf{P}_x \eqdef 
M\big(u^{(1)}_0,x\big) m\big(u^{(1)}\big) 
m\big(u^{(2)}\big) \cdots m\big(u^{(k-1)}\big) m\big(u^{(k)}\big) \mathbf{1}_{u^{(k)}_{-1} = \$}.\]
Summing these probabilities~$\mathbf{P}_x$ for all~$x > u^1_0$, we observe that~$u^{(1)}, u^{(2)}, \ldots, u^{(k)}$
are the rightmost letters of~$\eis(w)$ with probability
\begin{align*}\mathbf{P} & =
M^+\big(u^{(1)}_0\big) m\big(u^{(1)}\big) 
m\big(u^{(2)}\big) \cdots 
m\big(u^{(k-1)}\big) m\big(u^{(k)}\big) \mathbf{1}_{u^{(k)}_{-1} = \$} \\
& =
\ovl{M}\big(u^{(2)},u^{(1)}\big) \ovl{M}\big(u^{(3)},u^{(2)}\big) \cdots 
\ovl{M}\big(u^{(k)},u^{(k-1)}\big) \ovl{\mu}\big(u^{(k)}\big).
\end{align*}

Finally, Corollary~\ref{cor:8} proves that,
if~$w$ is a left-infinite word whose letters
are generated by~$(M,\mu)$ from right to left,
the word~$\eis(w)$ is almost surely infinite.
It follows that~$\ovl{\mu}$ is indeed a probability
distribution and that~$M$ is indeed a transition matrix, i.e., that
\[
\sum_{w' \in \U^\wedge} \ovl{\mu}(w') = 1 \text{ and }
\sum_{w' \in \U^\wedge} \ovl{M}(w,w') = 1\]
for all words~$w \in \U^\wedge$.
\end{proof}

Our next move consists in proving that
the Markov chain~$(\ovl{M},\ovl{\mu})$ is \epri,
by exhibiting its stationary
distribution. To that end, we first require
the following result, which roughly states that
``almost surely, every letter of a left-infinite
word~$w$ generated by~$(M,\mu)$ belongs to a
unimodal factor of~$w$'',
and whose formal
proof can be found in Appendix~\ref{sec:a:lem:11}.

\begin{restatable}{lemma}{lemmaA}
\label{lem:11}
For all letters~$x \in \A$ such that~$M^+(x) \neq 0$, 
we have
\[
\overline{\nu}(x,\uparrow) = \sum_{w \in \U^\wedge 
\colon x = w_0} \! m(w) \overline{\nu}(w_{-1},\uparrow).\]
\end{restatable}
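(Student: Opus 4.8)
The plan is to read both sides of the identity as almost-sure limiting frequencies along a single long trajectory of the directional chain $(\overline{M},\overline{\mu})$ of Proposition~\ref{pro:7}, and to match them through an explicit position-to-factor correspondence. Concretely, I would generate a left-infinite word $w$ from right to left with $(M,\mu)$ and record at each position the pair $(\text{letter},\text{direction})$ tracked by $(\overline{M},\overline{\mu})$. Under the standing assumption that the terminal component of $M$ has size at least two, Proposition~\ref{pro:7} guarantees that $(\overline{M},\overline{\mu})$ is \epri with stationary distribution $\overline{\nu}$, so Theorem~\ref{thm:4.16} applies: the empirical average of any fixed bounded window-function $f$ converges almost surely to $\bbE_{\overline{\nu}}[f]$. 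Taking $f$ to be the indicator that the current position carries the pair $(x,\uparrow)$ shows that the limiting frequency of $(x,\uparrow)$-positions equals $\overline{\nu}(x,\uparrow)$, the left-hand side.

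For the right-hand side, I would attach to each position $i$ that is $(x,\uparrow)$ (that is, $w$-non-decreasing with $w_i=x$) the factor $W(i)=w_i w_{i+1}\cdots w_v$ read rightwards up to the next $w$-locally minimal position $v$. Since $i$ is $w$-non-decreasing, $W(i)$ ascends to a peak and then descends strictly to the valley $w_v$, so it has the unimodal shape and its only interior local minimum is $v$ itself; and because $M^+(x)\neq 0$ is assumed, $W(i)\in\U^\wedge$ with first letter $x$ and last letter $w_v$. The key computation is that, for each fixed $u=u_0\cdots u_\ell\in\U^\wedge$ with $u_0=x$, the limiting frequency of positions $i$ with $W(i)=u$ equals $m(u)\,\overline{\nu}(u_{-1},\uparrow)$. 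This follows from Theorem~\ref{thm:4.16} applied to the indicator of the directional pattern that spells $u$ and whose rightmost (earliest generated) state is $(u_{-1},\uparrow)$: imposing $\uparrow$ at $v$ is exactly the requirement that $v$ be a valley, after which the directions of all remaining letters are \emph{forced} by the recursive characterization of non-decreasing positions, so that $\bbE_{\overline{\nu}}[f]=\overline{\nu}(u_{-1},\uparrow)\prod_{j=1}^{\ell}M(u_j,u_{j-1})=\overline{\nu}(u_{-1},\uparrow)\,m(u)$.

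Summing over all $u\in\U^\wedge$ with $u_0=x$, and using that $i\mapsto W(i)$ partitions the $(x,\uparrow)$-positions (each such position lies in exactly one factor, its own), would yield $\overline{\nu}(x,\uparrow)=\sum_{u}m(u)\,\overline{\nu}(u_{-1},\uparrow)$, the claim; the hypothesis $M^+(x)\neq 0$ enters only to ensure these factors actually belong to $\U^\wedge$. The hard part will be making the summation rigorous over the infinite alphabet $\U^\wedge$: one must exchange the almost-sure frequency limit with the infinite sum and, crucially, rule out that a non-vanishing fraction of $(x,\uparrow)$-positions never complete a factor. This is precisely the statement that \emph{almost surely every letter belongs to a unimodal factor}, i.e.\ the almost-sure finiteness of the ascent-then-descent excursion started at each position, which is exactly the recurrence built into the \epri hypothesis. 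Quantitatively, the positions whose factor $W(i)$ is not yet closed within an averaging window of length $n$ contribute a boundary error that vanishes almost surely; together with the nonnegativity of all terms and the bound $\overline{\nu}(x,\uparrow)\leqslant 1$ on the partial sums, a monotone-convergence argument then forces equality. A purely algebraic alternative is to split each $u$ at its peak, resum the descending parts into $\overline{\nu}(\cdot,\downarrow)$ and the ascending parts into a kernel, and verify via Proposition~\ref{pro:7} that $\overline{\nu}(\cdot,\uparrow)$ solves the resulting linear recursion; but this too requires the same almost-sure finiteness to single out the solution, so the recurrence-based covering property remains the genuine obstacle.
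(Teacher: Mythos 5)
Your proposal is correct in outline and takes a genuinely different route from the paper's. The paper argues via time reversal: using Theorem~\ref{thm:1.22}, it passes to the reverse chain $\hat{M}$ of $(\overline{M},\overline{\nu})$, introduces the stopping time $\mathbf{T}$ at which that chain first hits a state of the form $(y,\uparrow)$, computes that each first-passage path $w\in\U^\medbackslash$ has probability $m(w)\,\overline{\nu}(w_{-1},\uparrow)/\overline{\nu}(x,\downarrow)$, and deduces the one-sided identity of Lemma~\ref{lem:10} from the fact that these probabilities sum to $1$ because $\mathbf{T}$ is almost surely finite; Lemma~\ref{lem:11} then follows by splitting every $w\in\U^\wedge$ at its peak into a pair $(u,v)\in\U^\medslash\times\U^\medbackslash$ and chaining the two one-sided identities. (This is essentially the ``purely algebraic alternative'' you mention at the end.) You instead read both sides as almost-sure limiting frequencies along a single trajectory of $(\overline{M},\overline{\mu})$ via Theorem~\ref{thm:4.16} and partition the $(x,\uparrow)$-positions according to their factor $W(i)$. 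Your key window computation is right: the valley is the earliest-generated state of the window and carries $\uparrow$, all other arrows are forced by the letters, and each transition contributes $M(u_j,u_{j-1})$, yielding $m(u)\,\overline{\nu}(u_{-1},\uparrow)$; the role you assign to the hypothesis $M^+(x)\neq 0$ is also exactly correct. What the paper's route buys is that the summation over the countable set $\U^\wedge$ is handled for free by countable additivity (the first-passage events are pairwise disjoint and exhaust an event of probability one); what your route buys is a proof using only Theorem~\ref{thm:4.16}, with no time reversal, and a direct combinatorial meaning for every term --- at the price of interchanging an almost-sure limit with an infinite sum.

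That interchange is the one place where your sketch is not yet a proof, and your proposed fix is not quite the right tool. The quantities $a_u(N)=\frac{1}{N}\#\{i<N : W(i)=u\}$ are not monotone in $N$, so monotone convergence does not apply, and Fatou's lemma gives only the inequality $\sum_{u} m(u)\,\overline{\nu}(u_{-1},\uparrow)\leqslant\overline{\nu}(x,\uparrow)$. Moreover, the boundary effect you emphasise is the easy part: since the word is generated from right to left, the only positions without a closed factor are the almost surely finitely many generated before the first locally minimal position, and their contribution vanishes after dividing by $N$. What the reverse inequality really needs is tightness of the family $(a_u(N))_u$: the limiting frequency $b_L$ of $(x,\uparrow)$-positions whose factor is longer than $L$ (a legitimate window function of length $L+1$, so Theorem~\ref{thm:4.16} applies to it) must tend to $0$ as $L\to\infty$, which follows because locally minimal positions have positive limiting frequency $\gamma_1>0$ (Corollary~\ref{cor:8}, under the standing assumption that the terminal component has size at least two), so the stationary probability of seeing no locally minimal pattern in a window of length $L$ vanishes as $L\to\infty$; and, because $\A$ may be countably infinite, you additionally need an alphabet cutoff (a finite $\B\subseteq\A$ capturing most of the stationary mass of the at most $L$ letters inspected) so that the complement of a \emph{finite} set of words $u$ has small frequency. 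With these two cutoffs your argument closes; without them, pointwise convergence of the $a_u(N)$ plus convergence of their total mass does not by itself preclude mass escaping to infinity.
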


With this result in hand, we can now prove
Proposition~\ref{pro:9}, following the same lines of
the proofs used for Proposition~\ref{pro:7}.

\begin{proposition}
\label{pro:9}
Let~$(M,\mu)$ be an \epri Markov chain
whose terminal component
has size at least two.
The Markov chain~$(\ovl{M},\ovl{\mu})$
is \epri.
\end{proposition}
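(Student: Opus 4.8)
The plan is to mirror the proof of Proposition~\ref{pro:7}, using Lemma~\ref{lem:11} in the role played there by identity~\eqref{eq:1}. First I would exhibit a stationary distribution for~$(\ovl{M},\ovl{\mu})$ by setting
\[
\ovl{\nu}(w) = \frac{1}{Z}\, M^+(w_0)\, m(w)\, \overline{\nu}(w_{-1},\uparrow)
\]
for every~$w \in \U^\wedge$, where~$\overline{\nu}$ is the stationary distribution of the pair chain~$(\overline{M},\overline{\mu})$ from Proposition~\ref{pro:7} and~$Z$ is a normalising constant. To verify stationarity I would compute~$\ovl{M}\ovl{\nu}(w')$ directly from the formula for~$\ovl{M}$ in Lemma~\ref{lem:9}: the sum ranges over words~$w$ with~$w_0 = w'_{-1}$, and factoring out~$M^+(w'_0)\, m(w')$ leaves~$\sum_{w \colon w_0 = w'_{-1}} m(w)\overline{\nu}(w_{-1},\uparrow)$, which collapses to~$\overline{\nu}(w'_{-1},\uparrow)$ by Lemma~\ref{lem:11}. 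This yields~$\ovl{M}\ovl{\nu}(w') = \ovl{\nu}(w')$ exactly as~\eqref{eq:1} yielded~$\overline{M}\overline{\nu} = \overline{\nu}$; the constant~$Z$ does not interfere, since stationarity is scale-invariant.

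Second, I would check that~$\ovl{\nu}$ is genuinely a probability distribution. Summing~$M^+(w_0)\, m(w)\,\overline{\nu}(w_{-1},\uparrow)$ over all~$w \in \U^\wedge$, grouped by the first letter~$w_0 = x$, and applying Lemma~\ref{lem:11} once more gives total mass~$\sum_{x \in \A} M^+(x)\overline{\nu}(x,\uparrow)$; this is finite and strictly positive---indeed it is the density~$\gamma_1$ of locally minimal positions underlying Corollary~\ref{cor:8}---so~$Z$ may be taken equal to it. The support of~$\ovl{\nu}$ is then the set
\[
\ovl{\X} = \{w \in \U^\wedge \colon m(w) > 0 \text{ and } \overline{\nu}(w_{-1},\uparrow) > 0\},
\]
i.e.\ the unimodal factors genuinely realisable by the chain whose last letter lies in the terminal component~$\overline{\X}$ of~$(\overline{M},\overline{\mu})$; positivity of~$\overline{\nu}$ on~$\overline{\X}$ shows~$\ovl{\X}$ is non-empty, and~$\ovl{\nu}$ is positive on~$\ovl{\X}$ and zero elsewhere, which is requirement~(ii).

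Third, I would establish requirements~(i) and~(iii). For~(iii), a trajectory of~$(\ovl{M},\ovl{\mu})$ is, by Lemma~\ref{lem:9}, the sequence of unimodal factors of a word generated right to left by~$(M,\mu)$; such a word almost surely enters~$\X$, after which (by Proposition~\ref{pro:7}) the pair chain sits in~$\overline{\X}$ and every subsequent unimodal factor has all its letters in~$\X$, hence belongs to~$\ovl{\X}$. For~(i), given two factors~$u, v \in \ovl{\X}$, I would build a finite path in the underlying graph~$\ovl{G}$ from~$u$ to~$v$: since a transition~$\ovl{M}(a,b) > 0$ requires only~$a_0 = b_{-1}$ and~$m(b) > 0$, it suffices to connect~$u_0$ to~$v_{-1}$ by a walk inside the strongly connected component~$\X$ and to realise the intermediate letters as a chain of valid unimodal factors. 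Constructing this chain is the main obstacle: I must insert genuine unimodal factors (non-decreasing then non-increasing, with a strict rise and a strict descent) matching each prescribed endpoint, and this is precisely where the hypothesis that the terminal component of~$M$ has size at least two is used, as it guarantees that both a strict increase and a strict decrease are available within~$\X$. Once these paths are produced, all three requirements hold and~$(\ovl{M},\ovl{\mu})$ is~\epri.
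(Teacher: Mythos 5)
Your first three steps coincide with the paper's proof: the same candidate stationary distribution $\ovl{\nu}(w) = \gamma_1^{-1} M^+(w_0)\, m(w)\, \overline{\nu}(w_{-1},\uparrow)$, the same two applications of Lemma~\ref{lem:11} (one to get $\ovl{M}\ovl{\nu} = \ovl{\nu}$, one to show the total mass is $\sum_{x} M^+(x)\overline{\nu}(x,\uparrow) = \gamma_1$, so that $Z = \gamma_1$), the same identification of the support, and the same verification of requirement~(iii) by unpacking a trajectory of $(\ovl{M},\ovl{\mu})$ into a trajectory of $(M,\mu)$ that almost surely enters and never leaves $\X$. Up to that point the proposal is sound and essentially identical to the paper.

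The gap is requirement~(i), which you explicitly leave open: you reduce it to connecting $u_0$ to $v_{-1}$ by a walk in $\X$ and then ``realising the intermediate letters as a chain of valid unimodal factors'', declare this the main obstacle, and assert it can be overcome because the terminal component has size at least two. That assertion is not a proof, and the difficulty you anticipate (having to \emph{insert} prescribed unimodal factors, with a strict rise and descent supplied by the size-two hypothesis) is not how the step is actually resolved. The paper's key observation is that the intermediate factors never need to be prescribed or controlled: given $w, w' \in \ovl{\X}$, choose four padding letters $x \in (w'_{-1})^\uparrow$, $y$ with $w'_0 \in y^\downarrow$, $z \in (w_{-1})^\uparrow$ and $t$ with $w_0 \in t^\downarrow$ --- their existence is exactly what membership in the support provides, namely $\overline{\nu}(\cdot\,,\uparrow) > 0$ and $M^+(\cdot) > 0$ --- and take \emph{any} path in $G$ that reads $x$, then the letters of $w'$ from right to left, then $y$, then (by strong connectivity of $\X$) a path to $z$, then the letters of $w$ from right to left, then $t$. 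The word spelled by this path, read left to right, automatically decomposes into unimodal factors; the padding letters force factor boundaries to land exactly at both ends of $w$ and of $w'$, so these two words appear as actual factors, and whatever the intermediate factors turn out to be, they are vertices of $\ovl{G}$ (letters in $\X$, $m > 0$, and $M^+ > 0$ at their first letter) linked by positive-probability transitions of $\ovl{M}$. This yields the path from $w'$ to $w$ in $\ovl{G}$ with no insertion argument and no separate appeal to a strict increase and decrease inside $\X$. Without this construction, or an equivalent one, your proof of requirement~(i) --- and hence of the proposition --- is incomplete.
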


\begin{proof}
First, let~$\gamma_1$ be the constant of
Corollary~\ref{cor:8}. Theorem~\ref{thm:4.16}
proves that
\[
\gamma_1 = \sum_{(x,\uparrow) \in \overline{\X}}
\! \left(\sum_{y \in\text{\rlap{\phantom{$\overline{\X}$}}} \X \colon x < y}
M(x,y) \overline{\nu}(x,\uparrow) \right) =
\sum_{(x,\uparrow) \in \overline{\X}}
M^+(x) \overline{\nu}(x,\uparrow).\]
Then, consider the distribution~$\ovl{\nu}$ defined by
\[
\ovl{\nu}(w) = \frac{1}{\gamma_1}
M^+(w_0) m(w) 
\overline{\nu}(w_{-1},\uparrow)\]
Lemma~\ref{lem:11} proves that
\[
\sum_{w \in \U^\wedge} \ovl{\nu}(w) =
\frac{1}{\gamma_1} \sum_{x \in \A} M^+(x)
\sum_{w \in \U^\wedge 
\colon x = w_0} \! m(w) \overline{\nu}(w_{-1},\uparrow) =
\frac{1}{\gamma_1} \sum_{x \in \A} \overline{\nu}(x,\uparrow) M^+(x) = 1,\]
i.e., that~$\ovl{\nu}$ is a probability
distribution.

Moreover, for every word~$w \in \U^\wedge$,
Lemma~\ref{lem:11} also proves that
\begin{align*}
\ovl{M} \ovl{\nu}(w) & =
\frac{1}{\gamma_1}
\sum_{w' \in \U^\wedge} \mathbf{1}_{w_{-1} = w'_0} 
M^+(w_0) m(w) m(w') 
\overline{\nu}(w'_{-1},\uparrow) \\
& =
\frac{1}{\gamma_1}
M^+(w_0) m(w) \overline{\nu}(w_{-1},\uparrow) = \ovl{\nu}(w).
\end{align*}
This means that~$\ovl{\nu}$ is a stationary
probability distribution of~$(\ovl{M},\ovl{\mu})$.

This probability distribution is positive on the set~$\ovl{\X} \eqdef \U^\wedge \cap \X^\ast$
and is zero outside of that set. Since~$\ovl{\nu}$ is a
probability distribution, it follows that~$\ovl{\X} \neq \emptyset$.

Then, let~$G$ and~$\ovl{G}$ be the respective
underlying graphs of~$(M,\mu)$ and~$(\ovl{M},\ovl{\mu})$. 
We shall prove that~$\ovl{\X}$ satisfies the
requirements~(i) and~(iii) of \epri
Markov chains.

Hence, consider two words~$w$ and~$w'$ in~$\ovl{\X}$,
and let us choose letters~$x, y, z, t \in \X$ such that~$x \in (w'_{-1})^\uparrow$,~$w'_0 \in y^\downarrow$,~$z \in w_{-1}^\uparrow$ and~$w_0 \in t^\downarrow$.
The graph~$G$ contains a finite path that starts
with the letter~$x$, then the letters of~$w'$
(listed from right to left) and then the letter~$y$, and finishes with the letter~$z$,
the letters of~$w$ (listed from right to left),
and then the letter~$t$.
Writing these letters from right to left,
we obtain a word~$u$ whose leftmost unimodal factor
is~$w$ and whose second rightmost unimodal factor is~$w'$.
This proves that~$\ovl{G}$ contains a path
from~$w'$ to~$w$, i.e., that~$\ovl{\X}$ satisfies the requirement~(i).

Finally, consider some trajectory~$(\ovl{X}_n)_{n \geqslant 0}$ of the Markov chain~$(\ovl{M},\ovl{\mu})$.
Up to removing the first letter of every word (i.e., vertex)~$w \in \U^\wedge$ encountered
on this trajectory, reversing these shortened words,
and then concatenating the resulting
words, we obtain a trajectory~$(X_n)_{n \geqslant 0}$
of~$(M,\mu)$. That trajectory almost surely contains
a vertex~$x \in \X$, and will then keep visiting vertices
in~$\X$. Thus, our initial trajectory almost surely contains
a word~$\ovl{X}_n$ that is a word with a letter~$x \in \X$,
and all states~$\ovl{X}_m$ such that~$m \geqslant n+1$
will then belong to the set~$\U^\wedge \cap \X^\ast = \ovl{X}$,
thereby showing that~$\ovl{\X}$ satisfies the
requirement~(iii).
\end{proof}

\begin{proposition}
\label{pro:13}
The conclusion of Theorem~\ref{thm:6} holds,
provided that
the letters of~$w$ are generated
by an \epri Markov chain from right to left.
\end{proposition}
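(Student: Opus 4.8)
The plan is to prove the conclusion of Theorem~\ref{thm:6} by induction on~$k$, establishing at each level the following statement~$P(k)$: for \emph{every} \epri Markov chain used to generate the letters of~$w$ from right to left, there exist a constant~$\gamma_k$ and a sequence~$(\varepsilon_n)_{n \geqslant 0}$ tending to~$0$ satisfying the bound of Theorem~\ref{thm:6}. Quantifying over all \epri chains (not just the given one) is what makes the induction go through, since the inductive step will apply the hypothesis not to~$(M,\mu)$ itself but to the derived chain~$(\ovl{M},\ovl{\mu})$ that governs the letters of~$\eis(w)$. The base case~$P(0)$ is trivial with~$\gamma_0 = 1$, and~$P(1)$ is precisely Corollary~\ref{cor:8}: the almost sure convergence of~$|\eis(w)|/|w|$ to~$\gamma_1$ implies convergence in probability, which is repackaged into the required~$\varepsilon_n$ form (for any random variables~$Y_n$ converging to a constant~$c$ in probability, a diagonal argument produces a sequence~$\varepsilon_n \to 0$ with~$\bbP[|Y_n - c| \geqslant \varepsilon_n] \leqslant \varepsilon_n$).

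For the inductive step~$P(k) \Rightarrow P(k+1)$, I would fix an \epri chain~$(M,\mu)$ and write~$n = |w|$ and~$m = |\eis(w)|$. If the terminal component of~$M$ has size one, then~$\gamma_1 = 0$, so~$P(1)$ gives~$m/n \to 0$ in probability; since~$|\eis^{k+1}(w)| \leqslant |\eis(w)| = m$, we obtain~$|\eis^{k+1}(w)|/n \to 0$ and set~$\gamma_{k+1} = 0$. Otherwise the terminal component has size at least two, so Proposition~\ref{pro:9} guarantees that~$(\ovl{M},\ovl{\mu})$ is itself \epri, and the expression for~$\gamma_1$ from the proof of Proposition~\ref{pro:9} shows~$\gamma_1 > 0$ (the minimal state of the terminal component carries both an incoming descending edge and an outgoing ascending edge, so it contributes a strictly positive term). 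The structural fact supplied by Lemma~\ref{lem:9} is that, conditionally on~$\{|\eis(w)| = m\}$, the word~$\eis(w)$ is exactly a length-$m$ trajectory of~$(\ovl{M},\ovl{\mu})$ read from right to left. I would therefore apply the induction hypothesis~$P(k)$ to~$(\ovl{M},\ovl{\mu})$, obtaining a constant~$\gamma'_k$ and a vanishing sequence~$(\varepsilon'_m)$ with~$\bbP[\,|\,|\eis^{k}(\eis(w))|/m - \gamma'_k| \geqslant \varepsilon'_m \mid m] \leqslant \varepsilon'_m$.

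It then remains to combine the two ratios through the identity
\[
\frac{|\eis^{k+1}(w)|}{|w|}
= \frac{|\eis^{k+1}(w)|}{|\eis(w)|} \cdot \frac{|\eis(w)|}{|w|},
\]
whose second factor tends to~$\gamma_1$ in probability by~$P(1)$ and whose first factor tends to~$\gamma'_k$ by the displayed conditional bound. Setting~$\gamma_{k+1} = \gamma_1 \gamma'_k$, one concentrates the product around~$\gamma_{k+1}$ and, as in the base case, repackages the result into a single sequence~$\varepsilon_n \to 0$.

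The hard part will be the randomness of the intermediate length~$m = |\eis(w)|$, which enters in two ways. First, the induction hypothesis only controls~$|\eis^{k}(\eis(w))|/m$ \emph{as~$m \to +\infty$}, so I must know that~$m$ is large with high probability; this is exactly where~$\gamma_1 > 0$ is used, via~$P(1)$, to ensure~$m \geqslant \tfrac{\gamma_1}{2}\, n$ (hence~$\varepsilon'_m \leqslant \varepsilon'_{\lceil \gamma_1 n / 2 \rceil} \to 0$) outside an event of vanishing probability. Second, one must integrate the conditional bound over the law of~$m$ and track how the two error sequences and the event~$\{m \geqslant \tfrac{\gamma_1}{2} n\}$ compound; this is careful but routine bookkeeping. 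The positivity of~$\gamma_1$ and the exact conditional law of~$\eis(w)$ furnished by Lemma~\ref{lem:9} are precisely the two ingredients that keep this compounding under control.
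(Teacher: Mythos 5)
Your proposal is correct and follows essentially the same route as the paper: it peels off one reduction at a time via Lemma~\ref{lem:9} and Proposition~\ref{pro:9}, applies Corollary~\ref{cor:8} at each level, splits on whether the terminal component is a single absorbing state, and multiplies the per-level ratios --- your induction on~$k$ (quantified over all \epri chains) is just the paper's iteration over the levels~$0,\ldots,k-1$ packaged differently, with the degenerate case caught at the top of each step instead of by the paper's ``smallest bad level~$\ell$''. The one caveat is your claim that, conditionally on~$\{|\eis(w)| = m\}$, the word~$\eis(w)$ is \emph{exactly} a length-$m$ trajectory of~$(\ovl{M},\ovl{\mu})$: conditioning on the number of factors biases the law, and the clean statement (which the paper uses implicitly) is that the factor sequence of the \emph{infinite} word is an unconditioned trajectory of~$(\ovl{M},\ovl{\mu})$, from which the finite-word ratios inherit their almost sure limits up to~$O(1)$ boundary terms.
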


\begin{proof}
Let~$\ell$ be the smallest integer,
if any, such that the letters
of the word~$\eis^\ell(w)$
are not generated, from right to left,
by an \epri Markov chain whose terminal
component has size at least two.

If~$\ell \geqslant k$, or if~$\ell$ does not exist,
applying Corollary~\ref{cor:8} to the words~$w, \eis(w), \ldots, \eis^{k-1}(w)$ proves that,
for all~$i \leqslant k-1$,
there exists a positive constant~$\theta_i$ such that
\[
\bbP[|\eis^{i+1}(w)| / |\eis^i(w)| \to
\theta_i] = 1\]
when~$|\eis^i(w)| \to +\infty$.
In that case, the constant ~$\gamma_k = \theta_0 \theta_1 \cdots \theta_{k-1}$
satisfies the requirements of Theorem~\ref{thm:6}.

However, if~$\ell \leqslant k-1$, then~$\eis^\ell(w)$ is
generated by an \epri Markov chain whose
terminal component has size one, i.e., consists
in an absorbing state. In that case,
Corollary~\ref{cor:8} proves that~$|\eis^{\ell+1}(w)| / |\eis^{\ell}(w)| \to 0$ almost surely,
and thus the constant~$\gamma_k = 0$
satisfies the requirements of Theorem~\ref{thm:6}.
\end{proof}

\subsection{Generating letters from left to right}
\label{sec:left-right}

We focus now on the case where the letters
of~$w$ are generated from left to right, i.e.,
the letter~$w_k$ is the~$(k+1)$\textsuperscript{th} element
of a Markov chain~$(\bM,\bmu)$ --- we use a bold-face
version of those notations used in
Section~\ref{sec:right-left}.

The two following phenomena
make generating the letters of~$w$
from left to right harder.
First, whether an integer~$k$ is~$w$-non-decreasing depends on
the letters~$w_\ell$ for~$\ell \geqslant k$, and not
on the letters~$w_\ell$ for~$\ell \leqslant k$.
Second, we defined~$w$ as the prefix of length~$n$
of a right-infinite word~$\overline{w}$.
However, whether a given integer~$k \leqslant n-1$
is~$w$-non-decreasing may depend on~$n$ since, for
instance,~$n-1$ is \emph{never}~$w$-non-decreasing.
We overcome this second issue by
generalising the notion of non-decreasing integer
and of expanded IS-reduction to infinite words,
which allows us to use the following result.

\begin{lemma}
\label{lem:truncation}
Let~$\overline{w}$ be a right-infinite word, let~$n \geqslant 4$ be an integer, and
let~$w$ be a word such that~$n-4 \leqslant |w| \leqslant n+6$ and~$w_{0 \cdots n-5} = \overline{w}_{0 \cdots n-5}$.
Finally, let~$\lambda$ 
be the number of~$\overline{w}$-locally
minimal integers that are smaller than~$n$.
We have~$\lambda-4 \leqslant |\eis(w)|
\leqslant \lambda+6$, and~$\eis(w)_{0 \cdots \lambda-5} =
\eis(\overline{w})_{0 \cdots \lambda-5}$ if~$\lambda \geqslant 4$.
\end{lemma}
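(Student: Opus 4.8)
The plan is to compare the locally minimal integers of $w$ with those of $\overline{w}$, and to show that the two lists coincide except inside a bounded window near position $n$. First I would list the $\overline{w}$-locally minimal integers as $i_0 < i_1 < \cdots$, so that $\eis(\overline{w})_j = \overline{w}_{i_j \cdots i_{j+1}}$ and $\lambda$ is the number of indices $i_j$ that are smaller than $n$. The two structural facts I would rely on are: (a) inside any maximal run of equal letters $w_p = w_{p+1} = \cdots = w_q$, only the first index $p$ can be locally minimal, since local minimality forces $w_{i-1} > w_i$; and (b) by the recursive characterization of non-decreasing integers recalled in Section~\ref{sec:right-left}, whether an index $i$ is $w$-non-decreasing is determined by the letters of $w$ from position $i$ up to the first position $j > i$ with $w_j \neq w_i$ (its \emph{reach}), together with the deciding letter $w_j$. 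Let $a$ denote the start of the maximal run of $\overline{w}$ containing position $n-5$, so that $a \leqslant n-5$ and, when $a \geqslant 1$, $\overline{w}_{a-1} \neq \overline{w}_a = \overline{w}_{a+1} = \cdots = \overline{w}_{n-5}$.

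The core step is a stability claim: every index $i < a$ is $w$-locally minimal if and only if it is $\overline{w}$-locally minimal. Indeed, for such $i$ the reach is at most $a$ (the run of $i$ cannot cross position $a$, where the letter changes), hence at most $n-5$; since $w$ and $\overline{w}$ agree on $w_{0 \cdots n-5}$, both the reach, the deciding letter, and the letter $w_{i-1}$ are the same for $w$ and $\overline{w}$, so the non-decreasing and local-minimality tests return identical answers. Consequently the $w$- and $\overline{w}$-locally minimal integers lying in $\{0,\ldots,a-1\}$ coincide; call them $i_0 < \cdots < i_s$. Because each factor $\overline{w}_{i_j \cdots i_{j+1}}$ with $j \leqslant s-1$ ends at $i_s < a \leqslant n-5$, it lies entirely in the agreement zone (in particular it does not reach the appended sentinel), and $i_j, i_{j+1}$ are consecutive locally minimal integers for $w$ as well; therefore $\eis(w)_j = \eis(\overline{w})_j$ for all $j \leqslant s-1$.

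It then remains to count. Using fact~(a), the $\overline{w}$-locally minimal integers in $\{a,\ldots,n-1\}$ number at most $3$: at most one in the run (namely $a$), none strictly inside it, and at most two among the four positions $n-4,\ldots,n-1$, since no two consecutive integers are locally minimal. Hence $s+1 \geqslant \lambda-3$, i.e.\ $s-1 \geqslant \lambda-5$, which upgrades the previous paragraph to $\eis(w)_{0 \cdots \lambda-5} = \eis(\overline{w})_{0 \cdots \lambda-5}$ whenever $\lambda \geqslant 4$, settling the second assertion. For the length bounds, every $w$-locally minimal integer lies in $\{1,\ldots,|w|-2\}$; those below $a$ are exactly $i_0,\ldots,i_s$, and those in $\{a,\ldots,|w|-2\}$ number at most $6$ (one for $a$, none in the run up to $n-5$, and at most $\lceil (|w|-n+3)/2\rceil \leqslant 5$ in the tail $\{n-4,\ldots,|w|-2\}$ since $|w| \leqslant n+6$). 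Combining with $\lambda-3 \leqslant s+1 \leqslant \lambda$ yields $\lambda-4 \leqslant \lambda-3 \leqslant s+1 \leqslant |\eis(w)| \leqslant (s+1)+6 \leqslant \lambda+6$.

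The main obstacle is precisely the rightward look-ahead of the non-decreasing property: an index $i \leqslant n-5$ whose run reaches the boundary could, a priori, change status between $w$ and $\overline{w}$, and I cannot control it locally. Fact~(a) is what contains the damage, since such a boundary-crossing run contributes at most one locally minimal integer, its start $a$, so only a bounded number of factors near the cut are disturbed. I would dispose of the degenerate cases separately: when $a = 0$ there is no locally minimal integer below $a$ (so $s = -1$ and $\lambda \leqslant 3$, making both assertions vacuous or trivial), and the tail range $\{n-4,\ldots,|w|-2\}$ may be empty when $|w|$ is close to $n-4$, which only improves the estimates above.
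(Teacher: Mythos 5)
Your proof is correct and takes essentially the same approach as the paper's: both compare the $w$- and $\overline{w}$-locally minimal integers, show that they coincide on a prefix, and then bound the discrepancy near position~$n$ using exactly the two facts you isolate --- a run of equal letters contributes at most one locally minimal index (its start), and no two consecutive indices are locally minimal. The only difference is bookkeeping: you cut at the start~$a$ of the run through position~$n-5$, which makes the status-coincidence claim cleanly local via the ``reach'', whereas the paper cuts at~$i_{\lambda-3}$ and handles the boundary-crossing run in a separate step; both routes yield the stated constants.
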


\begin{proof}
Let~$i_0 < i_1 < \ldots < i_{\lambda-1}$
the~$\overline{w}$-locally
minimal integers smaller than~$n$. By construction,
we know that~$i_j + 2 \leqslant i_{j+1}$ for all~$j \leqslant \lambda-2$. This means that~$i_{\lambda-3} \leqslant n-5$, and therefore an integer~$j < i_{\lambda-3}$ is~$\overline{w}$-locally minimal
if and only if
it is also~$w$-locally minimal.
Thus, the~$\lambda-4$ first unimodal factors
of both~$w$ and~$\overline{w}$ are the words~$\overline{w}_{i_j\ldots i_{j+1}}$, where~$0 \leqslant j \leqslant \lambda-5$.
This already
proves that~$|\eis(w)| \geqslant \lambda-4$ and that~$\eis(w)_{0 \cdots \lambda-5} =
\eis(\overline{w})_{0 \cdots \lambda-5}$.

Finally, if an integer~$j \leqslant n-5$ is locally~$w$-minimal but not locally~$\overline{w}$-minimal,
we know
that~$\overline{w}_{j-1} = w_{j-1} > w_j = 
\overline{w}_j$, and
therefore~$j$ is~$w$-non-decreasing but not~$\overline{w}$-non-decreasing. This means that~$w_{j-1} > w_j = w_{j+1} = \ldots =
w_{n-5}$, and therefore there may be at most
one such integer~$j$. 
Furthermore, since no two consecutive
integers may be~$w$-minimal, the interval~$\{n-4,n-3,\ldots,n+5\}$ contains at most five~$w$-locally minimal integers. Hence, there exist
at most six~$w$-locally minimal integers
that do not belong to the set~$\{i_0,i_1,\ldots,i_{\lambda-1}\}$.
This means that~$|\eis(w)| \leqslant \lambda+6$.
\end{proof}

Lemma~\ref{lem:truncation} allows us to approximate~$\eis(w)$ with a prefix of length~$\lambda$ of
the word~$\eis(\overline{w})$, and proves that this
approximation is of excellent quality.
Indeed, if we set~$\lambda_0 = n$, and inductively 
define~$\lambda_{i+1}$ as the number of~$\eis^i(\overline{w})$-minimal integers smaller than~$\lambda_i$, Lemma~\ref{lem:truncation} ensures that~$\lambda_i - 4 \leqslant |\eis^i(w)|
\leqslant \lambda_i + 6$.
Thus, evaluating~$|\eis^i(w)|$ amounts to
evaluating~$\lambda_i$:
this is the task on which we focus below,
which allows us to identify~$w$ with an right-infinite
word, thereby saving us from many technicalities.

The first hurdle we mentioned, which requires being
able to ``guess'' whether a given integer
will be~$w$-non-increasing, is easy to overcome
by proceeding as follows.
When generating a new letter~$a$, the corresponding
position in the word has a given probability
of being~$w$-non-decreasing, which depends only
on~$a$. Thus, we can ``guess'' whether this
position should be~$w$-non-decreasing with the
correct probability, and then stick to our guess.
Hence, once again, we transform
our Markov chain~$(\bM,\bmu)$ into another
Markov chain~$(\overline{\bM},\overline{\bmu})$
that will generate pairs of the form~$(w_i,{\updownarrow}_i)$, where~$w_i$ is the~$(i+1)$\textsuperscript{th} letter of our word~$w$,
whereas~${\updownarrow}_i = \uparrow$ if~$i$ is~$w$-non-decreasing, and~${\updownarrow}_i = \downarrow$ otherwise.
Note that, unlike its variant~$(\overline{M},\overline{\mu})$, this Markov chain never
generates pairs of the form~$(\$,\updownarrow)$,
which means that its state space is simply
a subset of~$\A \times \{\uparrow,\downarrow\}$.

Using this technique allows us to
follow the same lines of proof
as in Section~\ref{sec:right-left}.
Therefore, we will just mention some
milestone constructions and results towards
proving Theorem~\ref{thm:6},
and omit their proofs, which can be found in
Appendix~\ref{sec:a:sec:left}.

Assume here that the terminal component of the \epri Markov
chain~$(\bM,\bmu)$ has size at least two.
Before defining the new Markov chain~$(\overline{\bM},\overline{\bmu})$,
we first define functions~$\bM^\uparrow$
and~$\bM^\downarrow$ by
\[
\bM^\uparrow(x) =
\frac{1}{1 - \bM(x,x)} \sum_{y \colon x < y} \bM(x,y)
\text{ and }
\bM^\downarrow(x) =
\frac{1}{1 - \bM(x,x)} \sum_{y \colon x > y} \bM(x,y)\]
for all~$x \in \A$.
Then, the Markov chain~$(\overline{\bM},\overline{\bmu})$ uses
the set of states
\[\overline{\bS} \eqdef
\{(x,\updownarrow) \in \A \times \{\uparrow,\downarrow\}
\colon \bM^\updownarrow(x) \neq 0\},\]
the initial distribution
defined by~$\overline{\bmu}(x,\updownarrow) =
\bmu(x) \bM^{\updownarrow}(x)$
for all~$(x,\updownarrow) \in \overline{\bS}$,
and the transition matrix defined by
\[\begin{cases}
\displaystyle
\overline{\bM}\big((x,\uparrow),
(y,\Updownarrow)\big) = 
\frac{\bM^\Updownarrow(y)}{\bM^\uparrow(x)} \bM(x,y) 
& \text{if } x < y; \\
\displaystyle
\overline{\bM}\big((x,\downarrow),
(y,\Updownarrow)\big) = 
\frac{\bM^\Updownarrow(y)}{\bM^\downarrow(x)} \bM(x,y) 
& \text{if } x > y; \\
\displaystyle
\overline{\bM}\big((x,\updownarrow),
(y,\Updownarrow)\big) = 
\bM(x,x) & \text{if } x = y \text{ and }
\updownarrow=\Updownarrow; \\
\overline{\bM}\big((x,\updownarrow),(y,\Updownarrow)) = 0
& \text{otherwise}.\end{cases}\]

As expected,
when projecting every pair generated by~$(\overline{\bM},\overline{\bmu})$
onto its first
coordinate, we recover a realisation of
the Markov chain~$(\bM,\bmu)$. Furthermore,
since the word~$w$ is now assumed to be infinite,
the~$k$\textsuperscript{th} pair
generated by~$(\overline{\bM},\overline{\bmu})$
is of the form~$(a,\uparrow)$ if~$k-1$ is~$w$-non-decreasing, or~$(a,\downarrow)$ otherwise,
except if the Markov chain keeps looping around a state~$(a,\uparrow)$, which happens with probability~$0$
since the terminal component has size at least two.
In addition, this new Markov chain is,
unsurprisingly, \epri.

If the terminal component of our Markov chain
contains only one state, say~$z$, we need to adapt our
construction. For all~$x \in \A \setminus \{z\}$,
we have~$\bM(x,x) < 1$, and thus the above construction is
well-defined on such states.
Then, we just merge the two states~$(z,\uparrow)$ and~$(z,\downarrow)$ into a single
\emph{sink} state, say~$(z,\downarrow)$,
and we set~$\overline{\bM}\big((z,\downarrow),(z,\downarrow)\big) = 1$.

Fortunately, the following result does not depend on
the size of the terminal component of the Markov chain.

\begin{myprop}{\ref{pro:7}b}
\label{pro:7B}
Let~$(\bM,\bmu)$ be an \epri Markov chain.
The Markov chain~$(\overline{\bM},\overline{\bmu})$
defined above is \epri.
\end{myprop}

Hence, let us consider the function~$g \colon \overline{\bS} \times \overline{\bS} \mapsto \bbR$
defined by 
\[\begin{cases}
\displaystyle
g\big((x,\downarrow),(y,\uparrow)\big) = 1 &
\text{for all } x, y \in \A; \\
g(u,v) = 0 &
\text{in all other cases.}
\end{cases}\]
Given a realisation~$(w_0,{\updownarrow}_0),(w_1,{\updownarrow}_1),\ldots$
of the Markov chain~$(\overline{\bM},\overline{\bmu})$,
and denoting by~$w = w_0 w_1 \ldots$ the word obtained
by projecting these pairs onto their first coordinate,
an integer~$i \geqslant 1$ is~$w$-locally minimal
if and only if ~${\updownarrow}_{i-1} = \downarrow$ and~${\updownarrow}_i = \uparrow$, i.e., if~$g\big((w_{i-1},{\updownarrow}_{i-1}),
(w_i,{\updownarrow}_i)\big) = 1$.
Thus, using Theorem~\ref{thm:4.16} for
the function~$g$ and Lemma~\ref{lem:truncation}
allows us to prove a special case
of Theorem~\ref{thm:6} for~$k = 1$, which consists in
the following variant of Corollary~\ref{cor:8}.

\begin{mycor}{\ref{cor:8}b}
\label{cor:8B}
If the letters of~$w$ are generated from left to right
by an \epri Markov chain,
there exists a constant~$\gamma_1$ such that~$\bbP[\lambda_1 / \lambda_0 \to \gamma_1] = 1$
when~$\lambda_0 \to +\infty$.
\end{mycor}

We focus below on the case where the Markov chain
has a terminal component of size at least two.
In that case, we show that the letters of~$\eis(w)$ are also generated
from left to right by an \epri Markov chain.
Mimicking Section~\ref{sec:right-left}, we introduce
the function~$\bmm$ defined by
\[\bmm(w_0 w_1 \cdots w_k) =
\bM(w_0,w_1) \bM(w_1,w_2) \cdots \bM(w_{k-1},w_k)\]
for every word~$w_0 w_1 \cdots w_k$ in~$\A^\ast$.
We also define the sets \vspace{-0.5em}
\begin{align*}
\bU^\wedge \eqdef \{w_0 w_1 \cdots w_\ell \in \A^\ast \colon & \bM^\uparrow(w_\ell) > 0 \text{ and } \\
& \exists k \leqslant \ell-1, w_0 \leqslant
\ldots \leqslant w_{k-1} < w_k \geqslant w_{k+1}
\geqslant \ldots \geqslant w_{\ell-1} > w_\ell\} \\
\bV^\wedge \eqdef \{w_0 w_1 \cdots w_\ell \in \A^\ast \colon &
\exists k \leqslant \ell-1, w_0 \leqslant
\ldots \leqslant w_{k-1} \leqslant w_k \geqslant w_{k+1}
\geqslant \ldots \geqslant w_{\ell-1} > w_\ell\}.
\end{align*}

\begin{mylem}{\ref{lem:9}b}
\label{lem:9B}
The letters of the word~$\eis(w)$
are generated from left to right
by the Markov chain~$(\ovl{\bM},\ovl{\bmu})$
with set of states~$\bU^\wedge$,
whose initial distribution is defined by
\[\ovl{\bmu}(w) = \sum_{w' \in \bV^\wedge} \mathbf{1}_{w'_{-1} = w_0} \bmu(w'_0) \bmm(w') \bmm(w)
 \bM^\uparrow(w_{-1}),\]
and whose transition matrix is defined by
\[
\ovl{\bM}(w,w') = \dfrac{\bM^\uparrow(w'_{-1})}{\bM^\uparrow(w_{-1})}
\bmm(w') \mathbf{1}_{w_{-1} = w'_0}.\]
\end{mylem}

\begin{myprop}{\ref{pro:9}b}
\label{pro:9B}
Let~$(\bM,\bmu)$ be an \epri Markov chain
whose terminal component
has size at least two.
The Markov chain~$(\ovl{\bM},\ovl{\bmu})$
is \epri.
\end{myprop}

The above properties allow us to prove
the following result.

\begin{myprop}{\ref{pro:13}b}
\label{pro:13B}
The conclusion of Theorem~\ref{thm:6} holds,
provided that
the letters of~$w$ are generated
by an \epri Markov chain from left to right.
\end{myprop}

\begin{proof}
Let~$\overline{w}$ be the right-infinite word
whose letters are generated, from left to right,
by our Markov chain.
Then, let~$\ell$ be the smallest integer,
if any, such that the letters
of the word~$\eis^\ell(\overline{w})$
are \emph{not} generated, from left to right,
by an \epri Markov chain whose terminal
component has size at least two.

If~$\ell \geqslant k$, or if~$\ell$ does not exist,
applying Corollary~\ref{cor:8B} to the words~$\overline{w}, \eis(\overline{w}), \ldots,
\eis^{k-1}(\overline{w})$ proves that,
for all~$i \leqslant k-1$,
there exists a positive constant~$\theta_i$ such that~$\bbP[\lambda_{i+1} / \lambda_i \to
\theta_i] = 1$ when~$\lambda_i \to +\infty$.
In that case, the constant ~$\gamma_k = \theta_0 \theta_1 \cdots \theta_{k-1}$
satisfies the requirements of Theorem~\ref{thm:6}.

However, if~$\ell \leqslant k-1$,
then~$\eis^\ell(\overline{w})$ is
generated by an \epri Markov chain whose
terminal component has size one. In that case,~$\lambda_{\ell+1} / \lambda_\ell \to 0$ when~$\lambda_\ell \to +\infty$, and therefore
the constant~$\gamma_k = 0$
satisfies the requirements of Theorem~\ref{thm:6}.
\end{proof}

\section{Words with independent and identically
distributed letters}
\label{sec:independent-case}

Theorem~\ref{thm:6} roughly states that,
if the letters of a word~$w$ are generated
(either from left to right or from right to left)
by an \epri Markov chain~$(M,\mu)$,
and provided that~$|w|$ is large enough,
the ratio~$|\is^k(w)| / |w|$ should
be approximately equal to a given constant~$\gamma_k$
depending only on~$k$ and on the Markov chain.

If we are out of luck, the Markov chain~$(M,\mu)$ might generate one unique infinite word of the
form~$w \cdot w \cdot w \cdots$, where~$w$ is one of the
worst-case words provided in Theorem~\ref{thm:1}.
Consequently, and given an integer~$k \geqslant 0$,
it is possible to choose the Markov chain~$(M,\mu)$
in order to have the equality~$\gamma_k = 2^{-k}$. This is indeed a worst case,
given that~$\gamma_{\ell+1} \leqslant \gamma_\ell / 2$
for every Markov chain and every integer~$\ell \geqslant 0$.

A specific context that will shield us from such
bad cases, while being natural, is that of words
whose letters~$w_0, w_1,\ldots,w_{n-1}$ are 
independent and identically distributed random
variables with values in the alphabet~$\A$.
Let~$X$ be their common probability law.
We first recall a result
of~\cite{nicaud2015probabilistic}, which concerns cases
where~$\A$ is finite and~$X$ is the uniform law over~$\A$.

\begin{proposition}
\label{pro:15}
Let~$w$ be a word over a finite alphabet~$\A$, 
whose letters are sampled independently and uniformly
over~$\A$, i.e.,~$\bbP[w_i = a] = 1 / |\A|$ for all
integers~$i \leqslant |w|-1$ and all letters~$a \in \A$.
The constant~$\gamma_1$ of Theorem~\ref{thm:6}
satisfies the equality
\[\gamma_1 =
\frac{1}{3} - \frac{1}{6|\A|}.\]
\end{proposition}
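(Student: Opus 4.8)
The plan is to apply Corollary~\ref{cor:8B} together with Theorem~\ref{thm:4.16}, which tell us that $\gamma_1$ equals the asymptotic frequency of $w$-locally minimal positions, expressed as an expectation under the stationary distribution of the reversed chain $(\overline{\bM},\overline{\bmu})$. When the letters are i.i.d.\ with common law $X$, the underlying chain $(\bM,\bmu)$ is especially simple: $\bM(x,y) = \bbP[X=y]$ for all $x,y$, so $\bM$ does not depend on the source state and the stationary distribution is just $X$ itself. I would first write $p_a = \bbP[X=a]$ and introduce the tail sums $q_a = \bbP[X > a] = \sum_{b > a} p_b$ and the analogous $\bbP[X < a]$. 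Since being $w$-locally minimal at position $i \geqslant 1$ means $w_{i-1} > w_i$ together with $w_i$ being $w$-non-decreasing (i.e.\ $i$ begins a strict ascent, allowing an initial plateau of equal letters), the frequency $\gamma_1$ is the stationary probability of the event ``previous letter strictly larger, current position non-decreasing.''

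The key computation is to evaluate, for a letter $a$, the probability that a position carrying letter $a$ is $w$-non-decreasing. Using the right-to-left recursive characterisation recalled in Section~\ref{sec:right-left} --- position $i$ is non-decreasing iff $w_i < w_{i+1}$, or $w_i = w_{i+1}$ and $i+1$ is non-decreasing --- and the independence of successive letters, this probability $u_a$ satisfies
\[
u_a = \bbP[X > a] + \bbP[X = a]\, u_a,
\]
which gives $u_a = \dfrac{\bbP[X > a]}{1 - \bbP[X = a]} = \dfrac{\bbP[X>a]}{\bbP[X \neq a]}$. Then the frequency of locally minimal positions is obtained by summing, over all $a$, the probability that the current letter is $a$ and non-decreasing and the preceding (right-neighbour) letter is strictly greater. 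Because the letters are i.i.d., the preceding letter is an independent copy, so
\[
\gamma_1 = \sum_{a \in \A} p_a\, u_a\, \bbP[X > a]
= \sum_{a \in \A} \frac{p_a\, \bbP[X>a]^2}{1 - p_a}.
\]
I would then specialise to the uniform law, where $p_a = 1/|\A|$ and $\bbP[X > a]$ ranges over $\{0,1,\ldots,|\A|-1\}/|\A|$ as $a$ runs through $\A$ in increasing order.

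The remaining step is a routine sum evaluation: with $N = |\A|$, the expression becomes
\[
\gamma_1 = \sum_{j=0}^{N-1}
\frac{(1/N)\,(j/N)^2}{1 - 1/N}
= \frac{1}{N^2(N-1)} \sum_{j=0}^{N-1} j^2
= \frac{1}{N^2(N-1)} \cdot \frac{(N-1)N(2N-1)}{6}
= \frac{2N-1}{6N},
\]
and rewriting $\dfrac{2N-1}{6N} = \dfrac{1}{3} - \dfrac{1}{6N}$ yields the claimed value $\tfrac{1}{3} - \tfrac{1}{6|\A|}$. I expect the main obstacle to be purely bookkeeping rather than conceptual: namely, being careful about the treatment of equal consecutive letters (the plateau in the definition of non-decreasing) and about boundary/sentinel effects, so that the self-consistent equation for $u_a$ is correctly derived and the outer sum matches the stationary weighting dictated by Theorem~\ref{thm:4.16}. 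Once $u_a$ is pinned down, everything collapses to the closed-form geometric-type sum above.
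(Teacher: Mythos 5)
Your proof is correct, but note that the paper itself contains no proof of Proposition~\ref{pro:15}: it is recalled from~\cite{nicaud2015probabilistic}, so there is no internal argument to compare against. Your derivation is a valid, self-contained one, and it effectively instantiates the paper's general machinery of Section~\ref{sec:left-right} in the i.i.d.\ case: your quantity $u_a$ is exactly the paper's $\bM^\uparrow(a)$ specialised to $\bM(x,y)=p_y$ (so that the stationary law of the chain is $X$ itself), and your outer sum $\sum_a p_a\, u_a\, \bbP[X>a]$ is precisely the expectation $\bbE_{\overline{\bnu}}[g]$ of the indicator $g$ of a descent followed by a non-decreasing position, computed with the stationary distribution $\overline{\bnu}(x,\updownarrow)=\bnu(x)\bM^{\updownarrow}(x)$ built in the proof of Proposition~\ref{pro:7B}. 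The fixed-point equation $u_a=\bbP[X>a]+p_a u_a$ is legitimate for the right-infinite word, and the boundary effects you mention are indeed negligible (this is exactly what Lemma~\ref{lem:truncation} guarantees), so invoking Corollary~\ref{cor:8B} to identify $\gamma_1$ with this stationary frequency is sound. Two cosmetic remarks: the parenthetical ``right-neighbour'' for the preceding letter $w_{i-1}$ is a terminology slip (the condition you actually use, $w_{i-1}>w_i$ with an independent copy, is the correct one), and $(\overline{\bM},\overline{\bmu})$ is the arrow-augmented chain, not a reversed chain. As a consistency check in your favour, your intermediate general formula
\[
\gamma_1 \;=\; \sum_{a \in \A} \frac{p_a\,\bbP[X>a]^2}{1-p_a}
\]
reproduces the paper's claimed value $\gamma_1 = 9/32$ for the non-uniform law $(p_1,p_2,p_3)=(3/8,1/4,3/8)$ discussed after Theorem~\ref{thm:7}, so your argument actually yields more than the uniform case stated in the proposition.
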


This shows that, in the most simple cases,
the constant~$\gamma$ is bounded from above by~$1/3$,
although~$\gamma$ can be arbitrarily close to~$1/3$
when the cardinality of~$\A$ increases.
We prove below that this upper bound is
\emph{universal}.

\begin{proposition}
\label{pro:6}
Let~$n \geqslant 1$ be an integer, and let~$\A$ be a finite or countably infinite alphabet.
Let~$X$ be a probability law on~$\A$, let
\[\Omega \eqdef \{t \in [0,1] \colon \exists a \in \A
\text{ such that } \bbP[X < a] < t < \bbP[X \leqslant a]\}\]
be a subset of~$[0,1]$ of Lebesgue measure~$1$, and let ~$f : \Omega \mapsto \A$ be
the function such that~$f(t)$ is the letter~$a \in \A$ for which~$\bbP[X < a] < t < \bbP[X \leqslant a]$.
We extend~$f$ to a partial function~$[0,1]^n \mapsto \A^n$
by setting
\[f(u_0 u_1 \cdots u_{n-1}) = f(u_0) f(u_1) \cdots
f(u_{n-1})\] if each letter~$u_i$ belongs to~$\Omega$,
and not defining~$f$ over~$[0,1]^n \setminus \Omega^n$.

For every word~$u \in \Omega^n$, we have~$|\is(u)| \geqslant |\is(f(u))|$.
Furthermore, if the letters~$u_0,u_1,\ldots,$~$u_{n-1}$
are independent and
distributed according to the uniform law~$\bbU$
over~$[0,1]$, they almost surely belong to~$\Omega$,
and then the letters~$f(u_0),f(u_1),\ldots,f(u_{n-1})$
are also independent and distributed
according to the law~$X$.
\end{proposition}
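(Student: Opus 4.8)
The statement splits into two parts: the inequality $|\is(u)| \geqslant |\is(f(u))|$ for every $u \in \Omega^n$, and the distributional claim about i.i.d.\ uniform letters. The plan is to dispatch the distributional claim quickly and devote the bulk of the work to the inequality. The first thing I would record is that $f$ is \emph{non-decreasing}: the intervals $\big(\bbP[X<a],\bbP[X\leqslant a]\big)$ are pairwise disjoint and ordered consistently with the order on $\A$, so $s<t$ forces $f(s)\leqslant f(t)$. For the distributional claim, since $\Omega$ has Lebesgue measure $1$ and the $u_i$ are uniform, a finite union bound shows all $u_i$ lie in $\Omega$ almost surely; then independence of the $u_i$ together with measurability of $f$ makes the $f(u_i)$ independent, and $\bbP[f(u_i)=a]$ equals the length $\bbP[X\leqslant a]-\bbP[X<a]=\bbP[X=a]$ of the interval attached to $a$. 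This is exactly inverse-transform sampling, so the $f(u_i)$ are i.i.d.\ with law $X$; I expect this part to be routine.

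For the inequality, I would first recall that $|\is(w)|$ equals the number of $w$-locally-minimal integers. Writing $v=f(u)$, I would group the positions of $v$ into maximal blocks of equal letters $B_0,\ldots,B_{m-1}$ with letters $a_0,\ldots,a_{m-1}$: within each block all positions share the same type, and a $v$-locally-minimal integer is precisely the start of a block $B_\ell$ that is a strict local minimum of the block-letter sequence, i.e.\ $a_{\ell-1}>a_\ell<a_{\ell+1}$. The crucial observation, which uses monotonicity of $f$ and the disjointness of the intervals, is that across a block boundary where the $v$-letter strictly increases (resp.\ decreases), the word $u$ also strictly increases (resp.\ decreases); consequently, if such a local-minimum block $B_\ell$ occupies positions $p,\ldots,q$, then $u_{p-1}$ and $u_{q+1}$ are both strictly larger than every one of $u_p,\ldots,u_q$.

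I would then build an injection from $v$-locally-minimal integers into $u$-locally-minimal integers. Given a local-minimum block occupying positions $p,\ldots,q$, let $r$ be the \emph{leftmost} position in $\{p,\ldots,q\}$ achieving $\min\{u_p,\ldots,u_q\}$. Leftmost-minimality forces $u_{r-1}>u_r$ (either $r=p$, where $u_{p-1}>u_p$ by the flanking inequality, or $r>p$, where leftmost-minimality gives it directly); and, scanning rightwards, the constant run of value $u_r$ starting at $r$ must terminate with a strict increase, at the latest at position $q+1$ whose value exceeds $u_r$, so $r$ is $u$-non-decreasing. Hence $r$ is $u$-locally-minimal, and it lies inside the block. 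Since distinct local-minimum blocks occupy disjoint sets of positions, the map sending a block to its $r$ is injective, yielding $|\is(u)|\geqslant|\is(v)|$. The main obstacle is exactly the mismatch flagged above: a flat run of $v$ can correspond to an arbitrarily oscillating segment of $u$, so a $v$-locally-minimal integer need \emph{not} itself be $u$-locally-minimal; the argmin construction is what repairs this, and the delicate point is checking that it always produces a genuine locally-minimal integer inside the block and that these integers are distinct across blocks.
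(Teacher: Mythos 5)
Your proof is correct, but it takes a genuinely different route from the paper's. The paper introduces the notion of a $w$-\emph{alternating sequence} of size~$k$ (indices $a_1 < b_1 \leqslant a_2 < b_2 \leqslant \ldots \leqslant a_{2k} < b_{2k}$ with $w_{a_i} > w_{b_i}$ for odd~$i$ and $w_{a_i} < w_{b_i}$ for even~$i$), asserts that $|\is(w)|$ is the largest size of such a sequence, and then concludes in one line: since $f$ is non-decreasing, every strict inequality between letters of $f(u)$ forces the same strict inequality between the corresponding letters of $u$, so every $f(u)$-alternating sequence is $u$-alternating and $|\is(u)| \geqslant |\is(f(u))|$. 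You instead work directly with the definition of locally minimal integers: you decompose $f(u)$ into maximal constant blocks, identify $f(u)$-locally-minimal integers with strict-local-minimum blocks, and inject each such block into a $u$-locally-minimal position by taking the leftmost argmin of $u$ inside the block, checking both the strict descent on its left and the non-decreasing condition on its right. The trade-off is clear: the paper's argument is shorter and isolates a monotone invariant, but it leans on a characterization of $|\is(w)|$ that is left to the reader ("one checks easily"); your injection is longer but fully self-contained, with every verification (flanking inequalities, leftmost-argmin descent, termination of the constant run at or before $q+1$, disjointness across blocks) carried out explicitly. Your treatment of the distributional part coincides with the paper's, which simply calls it immediate.
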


\begin{proof}
First,~$\Omega$ is a disjoint union of countably many
intervals whose lengths~$\bbP[X = a]$ sum up to~$1$,
and thus it has Lebesgue measure~$1$.
The last sentence
of Proposition~\ref{pro:6} is then immediate.
Hence, we focus on proving that~$|\is(u)| \geqslant |\is(f(u))|$ when~$u \in \Omega^n$.

Given a word~$w$, we say that a sequence of integers~$a_1 < b_1 \leqslant a_2 < b_2
\leqslant \ldots \leqslant a_{2k} < b_{2k}$
is~$w$-\emph{alternating of size~$k$} if~$b_{2k} < |w|$,~$w_{a_i} > w_{b_i}$ for all \emph{odd} indices~$i$,
and~$w_{a_i} < w_{b_i}$
for all \emph{even} indices~$i$.
One checks easily that~$|\is(w)|$ is the largest size
of a~$w$-alternating sequence.
Since every~$f(u)$-alternating
sequence is also~$u$-alternating,
Proposition~\ref{pro:6} follows.
\end{proof}

Unfortunately, in general,
the letters of the word~$\is(u)$ are not
independent, and both inequalities
\[|\is^2(u)| < |\is^2(f(u))| \text{ and }
|\is^2(u)| > |\is^2(f(u))|\] may hold,
which prevents us from designing
simple bijection-flavoured variants of
Proposition~\ref{pro:6} for investigating the
length of~$\is^k(f(u))$.
Yet, Proposition~\ref{pro:6}
still leads to the following result.

\begin{theorem}
\label{thm:7}
For every alphabet~$\A$ and every probability law~$X$ on~$\A$, we have~$\gamma_1 \leqslant 1/3$.
\end{theorem}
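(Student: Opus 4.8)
The plan is to reduce the general case of $\gamma_1 \leqslant 1/3$ to the continuous, uniform case, where I can compute the limit ratio directly. The key observation from Proposition~\ref{pro:6} is that if $u$ is a word with i.i.d.\ letters drawn uniformly from $[0,1]$, then $f(u)$ is a word with i.i.d.\ letters drawn from any prescribed law $X$ on $\A$, and crucially $|\is(f(u))| \leqslant |\is(u)|$. Passing to the limit $|w| \to +\infty$ and using Theorem~\ref{thm:6} (which gives almost-sure convergence of $|\is(w)|/|w|$ to the relevant constant in both regimes), this monotonicity transfers to the constants: the $\gamma_1$ associated with the law $X$ is at most the $\gamma_1$ associated with the uniform law $\bbU$ on $[0,1]$. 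So it suffices to prove that $\gamma_1 \leqslant 1/3$ when the letters are i.i.d.\ uniform on $[0,1]$.

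First I would justify that the $\gamma_1$ of Theorem~\ref{thm:6} is well-defined for the uniform law on $[0,1]$, or alternatively avoid the continuous alphabet altogether by using the finite uniform bound from Proposition~\ref{pro:15}. Indeed, Proposition~\ref{pro:15} already tells us that for the uniform law over a finite alphabet of size $q$, we have $\gamma_1 = \tfrac13 - \tfrac{1}{6q} < \tfrac13$. Since the letters of $f(u)$ for $X$ uniform on $\{0,1,\ldots,q-1\}$ are i.i.d.\ uniform, and since for any fixed law $X$ on $\A$ we obtain $|\is(f(u))| \leqslant |\is(u)|$ where $u$ has uniform $[0,1]$ letters, I would approximate the continuous uniform case by finite uniform laws: as $q \to +\infty$, the bound $\tfrac13 - \tfrac{1}{6q}$ tends to $\tfrac13$, giving $\gamma_1 \leqslant \tfrac13$ for uniform $[0,1]$ in the limit.

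The cleanest route is therefore: for an arbitrary law $X$ on $\A$, realise its i.i.d.\ letters as $f(u_0), f(u_1), \ldots$ with $u_i$ i.i.d.\ uniform on $[0,1]$; by Proposition~\ref{pro:6}, $|\is(f(u))| \leqslant |\is(u)|$ for every $u \in \Omega^n$, hence $|\is^1(f(u))|/n \leqslant |\is^1(u)|/n$ for all large $n$ almost surely; taking $n \to +\infty$ and invoking the almost-sure convergence from Theorem~\ref{thm:6} gives $\gamma_1(X) \leqslant \gamma_1(\bbU)$. It then remains to show $\gamma_1(\bbU) \leqslant 1/3$. For this I would discretise: partition $[0,1]$ into $q$ equal buckets, map each $u_i$ to its bucket index to obtain a word $\tilde u$ with i.i.d.\ uniform letters over $\{0,\ldots,q-1\}$, and observe that $|\is(u)| \leqslant |\is(\tilde u)|$ fails in general, so instead I would directly compute the expected density of locally-minimal positions for the uniform continuous law. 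The density of $w$-locally-minimal positions $i$ converges to the probability that three consecutive i.i.d.\ uniform values satisfy $u_{i-1} > u_i$ and $u_i$ begins a strict ascent; a short computation yields limiting density exactly $1/3$ in the continuous case (the ties that lowered it to $\tfrac13-\tfrac{1}{6q}$ in the finite case have vanished).

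The main obstacle I anticipate is the interchange of the almost-sure limit with the pathwise inequality: I must ensure that the inequality $|\is(f(u))| \leqslant |\is(u)|$, which holds deterministically for each sample, survives the passage to the constants $\gamma_1$. This is where Theorem~\ref{thm:6} is essential, as it guarantees that both $|\is(u)|/|u|$ and $|\is(f(u))|/|f(u)|$ converge almost surely to their respective $\gamma_1$ values; since $|f(u)| = |u|$ and the inequality holds on a common probability-one event, the limits inherit it. The secondary technical point is confirming that the alternating-sequence characterisation $|\is(w)| = \max\{\text{size of a } w\text{-alternating sequence}\}$ used in Proposition~\ref{pro:6} applies verbatim to the continuous alphabet $[0,1]$; this requires only that the sentinel and the locally-minimal structure behave identically, which they do since the argument is purely order-theoretic.
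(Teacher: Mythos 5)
Your core ideas coincide with the paper's proof: couple the law $X$ with the uniform law $\bbU$ on $[0,1]$ via Proposition~\ref{pro:6}, and exploit the fact that in the continuous case an interior position is locally minimal exactly when it is the strict minimum of three exchangeable values, which has probability $1/3$. That computation, and the coupling inequality $|\is(f(u))| \leqslant |\is(u)|$, are precisely the paper's two ingredients.

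There is, however, one step in your write-up that does not hold as stated: you invoke Theorem~\ref{thm:6} for the word $u$ itself, to obtain an (almost-sure) limit $\gamma_1(\bbU)$ and then let the pathwise inequality pass to the two limits. Theorem~\ref{thm:6} lives in the Markov-chain framework of Section~\ref{sec:probabilities}, where the state space is required to be \emph{countable}; it says nothing about words over the alphabet $[0,1]$, so $\gamma_1(\bbU)$ is not known to exist within the paper's toolbox. You flag this issue yourself, but neither of your proposed repairs survives: the approximation via Proposition~\ref{pro:15} and the discretisation both produce inequalities in the wrong direction (bucketing only \emph{decreases} the number of locally minimal positions), as you correctly note before abandoning them. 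The paper sidesteps the problem entirely by never taking any limit for $u$: it uses the exact finite-$n$ identity $\bbE[|\is(u)|] = (n-2)/3$, deduces $\bbE[|\is(w)|] \leqslant n/3$ for every $n$, and only then applies Theorem~\ref{thm:6} to $w$ --- whose alphabet $\A$ is countable --- to transfer the expectation bound to $\gamma_1$ (convergence in probability of the bounded ratio $|\is(w)|/|w|$ forces $\gamma_1 \leqslant 1/3$). Your density computation is exactly this expectation computation, so your argument closes once rearranged this way; alternatively one could prove a law of large numbers for the $2$-dependent stationary indicators $\mathbf{1}[u_{i-1} > u_i < u_{i+1}]$ by hand, but that is extra work the expectation route makes unnecessary. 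A last small point: Theorem~\ref{thm:6} gives convergence in probability, not almost-sure convergence (only Corollary~\ref{cor:8} is stated almost surely), though either mode suffices here.
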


\begin{proof}
Let~$u$ and~$w$ be~$n$-letter words whose letters are independent 
random variables following the laws~$\bbU$ and~$X$, as described in the statement of
Proposition~\ref{pro:6}.
Each integer~${i \in \{1,2,\ldots,n-2\}}$
is~$u$-minimal if and only if~$u_i = \min\{u_{i-1},u_i,u_{i+1}\}$, which happens
with probability~$1/3$, while~$0$ and~$n-1$ cannot
be~$u$-minimal.
It follows that
\[\bbE[|\is(w)|] \leqslant
\bbE[|\is(u)|] = (n-2)/3 \leqslant n/3\]
and, thanks to Theorem~\ref{thm:6}, that~$\gamma_1 \leqslant 1/3$.
\end{proof}

In view of Proposition~\ref{pro:15} and
Theorem~\ref{thm:7}, proving that~$\displaystyle \gamma_1 \leqslant
1/3 - 1 / (6|\A|)$ even if~$X$ is
\emph{not} uniform might be tempting.
Unfortunately,
the inequality is invalid when~$|\A| = 3$ and~$(p_1,p_2,p_3) = (3/8,1/4,3/8)$, because in that case~$\displaystyle \gamma_1 = 9/32 > 5/18 =
1/3 - 1/(6|\A|)$.

However, the case~$|\A| = 2$ is still promising.
Indeed, in that case,~$\gamma_1 = p_1 (1-p_1) \leqslant 1/4$,
and the letters of
the word~$\eis(w)$ are independent and identically
distributed, since the only constraints they are subject
to is that they should begin with the letter~$0$
and end with the suffix~$10$. Thus, we can still
use Theorem~\ref{thm:7} to evaluate the ratio~$|\is^2(w)| / |\is(w)|$, thereby
deriving the following result, which suggests
excellent performances of the IS-algorithm.

\begin{proposition}
\label{pro:18}
If~$|\A| = 2$, we have~$\gamma_1 \leqslant 1/4$ and~$\gamma_2 \leqslant 1/12$.
\end{proposition}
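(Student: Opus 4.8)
The plan is to write~$\gamma_2$ as a product of two one-step reduction ratios and bound each factor separately. The first factor is already under control: the discussion preceding the statement records that~$\gamma_1 = p_1(1-p_1) \leqslant 1/4$, where~$p_1 = \bbP[w_i = 0]$. It therefore remains to show that the \emph{second} reduction step contracts by a factor at most~$1/3$, that is, that the ratio~$|\is^2(w)| / |\is(w)|$ converges almost surely to a constant at most~$1/3$; multiplying the two bounds then yields~$\gamma_2 \leqslant 1/12$.

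First I would pin down the exact shape of the unimodal factors of a binary word. Writing~$w$ as an alternation of maximal runs of~$0$'s and~$1$'s, a position is~$w$-locally minimal precisely when it is the first~$0$ of a maximal run of~$0$'s: such a~$0$ is preceded by a~$1$, and, the word being (almost surely) infinite to the right of that position, the run of~$0$'s is almost surely followed by a~$1$, so the position is~$w$-non-decreasing. Consequently each unimodal factor has the form~$0^a 1^b 0$ with~$a, b \geqslant 1$: it begins with the letter~$0$ and ends with the suffix~$10$, the trailing~$0$ being the locally minimal position shared with the next factor. Such a factor is thus entirely determined by the pair~$(a,b)$, namely the length of a maximal~$0$-run together with the length of the maximal~$1$-run immediately following it.

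The key step is to observe that these factors form an i.i.d.\ sequence. In an i.i.d.\ binary sequence the lengths of the maximal runs are mutually independent, and the~$0$-run lengths (resp.\ the~$1$-run lengths) are identically distributed, each geometrically. Since consecutive factors are built from disjoint pairs of runs, the factors~$0^{a_\ell} 1^{b_\ell} 0$ are independent and identically distributed, up to the first and last factors, whose contribution is negligible as~$|w| \to +\infty$. Hence the letters of~$\eis(w)$, and therefore, after the order-preserving relabelling of step~\ref{step:4}, the letters of~$\is(w)$, are i.i.d.\ over a countable alphabet. Because i.i.d.\ letters are a special case of an \epri Markov chain, Theorem~\ref{thm:7} applies verbatim to the word~$\is(w)$ and gives~$\gamma_1(\is(w)) \leqslant 1/3$; by Corollary~\ref{cor:8} this~$\gamma_1(\is(w))$ is exactly the almost-sure limit of~$|\is^2(w)| / |\is(w)|$. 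Combining this with~$|\is(w)| / |w| \to \gamma_1 \leqslant 1/4$ and taking the product of the two (almost surely convergent) ratios yields~$\gamma_2 = \gamma_1 \cdot \gamma_1(\is(w)) \leqslant 1/12$, which is the desired bound.

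The main obstacle is the i.i.d.\ claim itself: one must check carefully that relabelling the infinitely many possible unimodal factors by their lexicographic ranks preserves both independence and identical distribution, and that the two boundary factors at the ends of~$w$ do not perturb the limiting frequency of locally minimal positions in~$\is(w)$. Everything else, namely the run decomposition of a binary word and the multiplicativity of the reduction ratios across the two steps, is routine once this structural fact is in place.
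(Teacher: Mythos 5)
Your proposal is correct and takes essentially the same approach as the paper: the paper likewise observes that~$\gamma_1 = p_1(1-p_1) \leqslant 1/4$, that the letters of~$\eis(w)$ are independent and identically distributed because the unimodal factors of a binary word are constrained only to begin with~$0$ and end with the suffix~$10$, and then applies Theorem~\ref{thm:7} to the second reduction step and multiplies the two ratios to obtain~$\gamma_2 \leqslant 1/12$. Your write-up simply makes explicit the run-length decomposition, the relabelling issue, and the boundary effects that the paper leaves implicit.
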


\section{Bounding the number of function calls}
\label{sec:number-of-calls}

In this last section, we provide a short argument
for proving that, if~$\A$ is finite and 
if the letters of the word~$w$ are generated, either from left to right or from right
to left, by a (non necessarily \epri) Markov chain~$(M,\mu)$, we should expect~$\O(\log(\log(|w|)))$ recursive function calls.
This is the object of the following result, whose formal
proof can be found in Appendix~\ref{sec:a:cor:21}.

\begin{restatable}{theorem}{corB}
\label{cor:21}
Let~$w \in \A^n$ be a word whose letters are generated
by a Markov chain. For all integers~$\ell \geqslant 0$, and provided that~$n$ is large enough, the IS-algorithm
has a probability~$\mathbf{P} \leqslant n^{-2^{\ell}}$
of performing more than
\[2 \log_2(\log_2(n)) + \ell\]
recursive function calls.
\end{restatable}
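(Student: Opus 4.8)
The plan is to control the number $K$ of recursive calls by relating it to the existence of long repeated factors in $w$, and then to bound the probability of such repeats by a union bound whose per-pair collision probability decays geometrically in the factor length. The starting observation is that the algorithm recurses on $\is^k(w)$ precisely when $\is^k(w)$ still has two equal letters, so that $K > k$ forces $\is^k(w)$ to contain a repeated letter. I would first turn such a repetition into one inside $w$ itself, through two elementary inductions on $k$. On the one hand, since no two consecutive integers are locally minimal, each letter of $\is^{k+1}(w)$ is a unimodal factor spanning at least two letters of $\is^k(w)$; unfolding this down to level $0$ shows that the factor of $w$ underlying any letter of $\is^k(w)$ has length at least $2^k$, and that the underlying factors of two \emph{distinct} letters start at $w$-positions differing by at least $2^k$. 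On the other hand, since the reduction is a deterministic function of the underlying string, two equal letters of $\is^k(w)$ must have equal underlying factors of $w$ (the footprints of consecutive letters overlap consistently at their shared boundary, so equality propagates through the concatenation). Combining the two inductions, $K > k$ implies that $w$ contains two \emph{disjoint} occurrences of a common factor of length $2^k$.

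Then I would estimate the probability that $w$ has a repeated disjoint factor of length $L = 2^k$ by a union bound over the at most $n^2$ pairs of start positions, the key quantity being the collision probability $\bbP[w_{s \cdots s+L-1} = w_{s' \cdots s'+L-1}]$ for $s' \geq s + L$. Conditioning on the states of the chain just before the two windows makes them independent, so this probability equals $\sum_x \bbP_v[W_1 = x]\,\bbP_u[W_2 = x]$, which is the $L$-step survival probability on the diagonal of the product chain and is governed by the sub-stochastic \emph{collision matrix} whose $(a,b)$ entry is $M(a,b)^2$. As long as the chain is not a deterministic cycle, some state admits at least two outgoing transitions, the product chain leaves the diagonal with positive frequency, the collision matrix has spectral radius below $1$, and the collision probability is at most $C\beta^L$ for constants $C$ and $\beta \in (0,1)$ depending only on the chain.

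With this bound the computation closes. Writing $m = 2\log_2(\log_2 n) + \ell$ and setting $k = \lfloor m\rfloor$ (so that $2^k \geq 2^m/2$, the rounding costing only a constant factor), we have $2^k \geq \tfrac12\,2^\ell (\log_2 n)^2$, hence the per-pair collision probability is at most $\beta^{\frac12 2^\ell (\log_2 n)^2} = n^{-\frac{b}{2} 2^\ell \log_2 n}$ with $b = -\log_2 \beta > 0$, and the union bound gives $\bbP[K > m] \leq C\, n^{2 - \frac{b}{2} 2^\ell \log_2 n} \leq n^{-2^\ell}$ once $n$ is large enough. The factor $2$ in $2\log_2(\log_2 n)$ is exactly what makes $2^k$ grow like the \emph{square} of $\log_2 n$, so that the collision probability is superpolynomially small and overwhelms both the $n^2$ union-bound factor and the target $n^{-2^\ell}$.

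The main obstacle is the collision bound for an arbitrary finite-state Markov chain, which need not be irreducible or aperiodic and may carry deterministic transitions. I would handle it through the recurrent-class decomposition: a trajectory almost surely enters one recurrent class after a transient with geometric tail. On any recurrent class that is not a single deterministic cycle, the collision matrix has spectral radius below $1$ and the argument above applies, with $\beta$ taken as the maximum over such classes. The genuinely degenerate case — a deterministic cycle — is the delicate point, but also the easy one: it produces an eventually periodic word, whose unimodal factors are eventually all equal, so that $\is(w)$ is eventually constant or monotone and the recursion terminates after a number of calls bounded \emph{independently of $n$}; the claimed inequality then holds trivially once $2\log_2(\log_2 n) + \ell$ exceeds that constant. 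Summing over the finitely many recurrent classes weighted by their entry probabilities, and absorbing the geometrically small contribution of the transient, yields the bound in full generality.
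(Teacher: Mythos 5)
Your main line of argument is essentially the paper's: equal letters of $\is^k(w)$ force two equal, non-overlapping factors of $w$ of length at least $2^k$ (the paper's ``sources''), and a union bound over the at most $n^2$ pairs of windows, with a per-pair collision probability decaying geometrically in the window length, gives the claim once $2^k \gtrsim 2^\ell (\log_2 n)^2$. Your spectral bound via the matrix with entries $M(a,b)^2$ is a legitimate variant of the paper's more elementary estimate (from any state of a non-cyclic terminal class, within $s = |\S|$ steps the chain deviates from any prescribed path with probability at least $\varepsilon^s$, whence a fixed path of length $m$ is followed with probability at most $(1-\varepsilon^s)^{m/s-1}$); both yield the same $C\beta^L$ bound.

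The genuine gap is in the deterministic-cycle case. First, your justification is incorrect: if $w$ is eventually periodic, its unimodal factors are \emph{not} eventually all equal; they are eventually periodic, with period equal to the number of locally minimal positions per period of $w$. For the repeating pattern $0213$, for instance, the unimodal factors alternate between $130$ and $021$, so $\is(w)$ is eventually periodic rather than constant or monotone, and one needs an induction over reduction levels showing that the period (roughly) halves at each step. Second, and more seriously, the recursion depth in this case is \emph{not} bounded independently of $n$ in any usable sense: it grows like $\log_2 T + \log_2 p + O(1)$, where $p$ is the period and $T$ is the random, unbounded length of the transient prefix before the chain enters the cycle. Hence ``the claimed inequality holds trivially once $2\log_2(\log_2 n)+\ell$ exceeds that constant'' is not a valid step: for every $n$ there is a positive-probability event that $T$ is large enough to push the depth past the threshold, and that probability must itself be shown to be of order $n^{-2^\ell}$. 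Carrying this out requires exactly the quantitative lemma to which the paper devotes the first half of its proof: if $v$ is $k$-periodic except at borders of length $b$, then $\eis(v)$ is $k'$-periodic except at borders of length $b'$ with $k' \leqslant k/2$ and $b' \leqslant \max\{b,k\}$, whence the depth is at most $\log_2(\max\{b,k\}) + \log_2(k) + 3$; combining this with the tail bound $\bbP[T > \tau] \leqslant (1-\varepsilon^s)^{\tau/s-1}$ at $\tau \approx 2^\ell (\log_2 n)^2$ closes the case. Your closing phrase about ``absorbing the geometrically small contribution of the transient'' gestures at this, but without the depth-versus-border relation the absorption cannot be executed.
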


\begin{proof}[Proof idea]
Let~${(M,\mu)}$ be the Markov chain that generates the letters of~$w$.
The probability that two independent trajectories of~$M$ (whose initial distributions may differ)
coincide with each other on their~$k$ first steps
decreases exponentially fast with~$k$,
unless they get
trapped into a cycle from which they cannot escape.
However, every letter of the word~$\eis^\ell(w)$ represents
at least~$2^\ell$ letters from~$w$.
Thus, if two such letters coincide, the word~$w$ must
contain two identical subwords of length~$2^\ell$,
an event whose probability decreases
severely once~$2^\ell$ exceeds~$\log(|w|)$.

It remains to treat the case where~$w$ gets trapped into
a cycle from which it cannot escape. Again, the
probability that it would take more than~$k$ steps
to reach that cycle decreases exponentially fast with~$k$,
and, when~$\ell \geqslant \log_2(k)$,
these~$n$ steps (i.e., letters) will all be subsumed
in the same letter of the word~$\eis^\ell(w)$.
However, all the other letters of~$\eis^\ell(w)$
will coincide with each other,
and thus~$\eis^{\ell+1}(w)$ will contain at most one
letter, thereby preventing subsequent recursive calls
to the IS-algorithm.
\end{proof}

This result illustrates the fact
that detecting as soon as
possible special cases in which suffix arrays are easy to
compute (here, observing that the letters of~$w$ are pairwise
distinct) can result in dramatically decreasing the
size of the recursive call stack. However, the notion of
being a \emph{large enough} integer~$n$ heavily depends
on the Markov chain~$(M,\mu)$, as illustrated by the
worst cases studied in Section~\ref{sec:worst-case},
which can be arbitrarily well approximated by Markov chains.

\newpage

\bibliography{is-algo}

\newpage
\appendix

\section{Appendix}
\label{sec:appendix}

\subsection{Proving Lemma~\ref{lem:11}}
\label{sec:a:lem:11}

We focus here on formally proving Lemma~\ref{lem:11},
whose intuitive meaning was already given in
Section~\ref{sec:right-left}.
To that end, we first introduce new variants
of the set~$\U^\wedge$. These are the sets
\begin{align*}
\U^\medbackslash & \eqdef \{w_0 w_1 \cdots w_\ell
\in \A^\ast \cdot (\varepsilon + \$) \colon w_0 \geqslant
\ldots \geqslant w_{\ell-1} > w_\ell\} \\
\U^\medslash & \eqdef \{w_0 w_1 \cdots w_\ell
\in \A^\ast \colon w_0 \leqslant
\ldots \leqslant w_{\ell-1} < w_\ell\}
\end{align*}
of non-increasing (respectively, non-decreasing) words in~$\A^\ast \cdot (\varepsilon + \$)$
whose last two letters differ from each other.
We can now prove the following auxiliary result,
from which we will then deduce Lemma~\ref{lem:11}.

\begin{mylem}{\ref{lem:appendix:lemmaA}\hspace{0.4mm}-1}
\label{lem:10}
For all letters~$x \in \A$, we have
\[
\overline{\nu}(x,\downarrow) =
\sum_{w \in \U^\medbackslash 
\colon x = w_0}
m(w) \overline{\nu}(w_{-1},\uparrow)
\text{ and }
\overline{\nu}(x,\uparrow) =
\sum_{w \in \U^\medslash 
\colon x = w_0}
m(w) \overline{\nu}(w_{-1},\downarrow).\]
\end{mylem}

\begin{proof}
Up to reversing the order~$\leqslant$ on~$\A_\$$,
both equalities are equivalent to each other.
Hence, we focus on proving the left one.
Let~$x$ be some element of~$\X$,
let~$\hat{M}$ the \emph{reverse} transition matrix
of~$\overline{M}$, such as described
in Theorem~\ref{thm:1.22}, and let~$(Y_n)_{n \geqslant 0}$ be the Markov chain
with first element~$Y_0 = x$ and
with transition matrix~$\hat{M}$.
Then, let~$\mathbf{T}$ be the stopping time
defined as the smallest integer~$n \geqslant 1$
such that~$Y_n$ belongs to the set~$\{(y,\uparrow) \colon y \in \X\}$.
Since~$\hat{M}$ is~\epri,
the stopping time~$\mathbf{T}$ is almost surely finite.

For each word~$w \in \U^\medbackslash$ such that~$x = w_0$ and~$w_{-1}^\uparrow \neq \emptyset$,
i.e.,~$\overline{\nu}(w_{-1},\uparrow) \neq 0$,
the Markov chain~$(Y_n)_{n \geqslant 0}$
has a probability
\[
\mathbf{P}_w \eqdef 
\hat{M}\big((w_0,\downarrow),(w_1,\downarrow)\big) 
\hat{M}\big((w_1,\downarrow),(w_2,\downarrow)\big) 
\cdots 
\hat{M}\big((w_{-2},\downarrow),(w_{-1},\uparrow)\big)\]
of starting with the letters~$(w_0,\downarrow), (w_1,\downarrow), \ldots,
(w_{-2},\downarrow),(w_{-1},\uparrow)$,
in which case~$\mathbf{T} = |w|-1$.
Using Theorem~\ref{thm:1.22} and the construction
of~$\overline{M}$, we have
\[\mathbf{P}_w =
\frac{\overline{\nu}(w_{-1},\uparrow)}
{\overline{\nu}(x,\downarrow)}
M(w_1,w_0) M(w_2,w_1) \cdots M(w_{-1},w_{-2})
= \frac{m(w) \overline{\nu}(w_{-1},\uparrow)}{\overline{\nu}(x,\downarrow)}.\]
Conversely, whenever~$\mathbf{T} < +\infty$,
the Markov chain~$(Y_n)_{n \geqslant 0}$ starts with
such a sequence of letters.
Consequently, the probabilities~$\mathbf{P}_w$ sum up to~$1$,
which completes the proof.
\end{proof}

\lemmaA*

\begin{proof}
Let us associate every pair~$(u,v) \in \U^\medslash \times \U^\medbackslash$ such that~$u_{-1} = v_0$ with the word~$w \eqdef u \cdot
v_{1 \cdots} \in \U^\wedge$.
Lemma~\ref{lem:10} then proves that
\[
\overline{\nu}(x,\uparrow) = \sum_{u \in \U^\medslash 
\colon x = u_0} \! 
\left(\sum_{v \in \U^\medbackslash
\colon u_{-1} = v_0}
m(u) m(v) \overline{\nu}(v_{-1},\uparrow)\right) =
\sum_{w \in \U^\wedge \colon x = w_0}
m(w) \overline{\nu}(w_{-1},\uparrow). \qedhere\]
\end{proof}

\subsection{Proving Proposition~\ref{pro:13}b}
\label{sec:a:sec:left}

We focus here on formally proving
Proposition~\ref{pro:13}b, by providing complete proofs
of the results mentioned in
Section~\ref{sec:left-right}. This proofs had first been
omitted because of their similarity to those
of Section~\ref{sec:right-left}. Consequently, we list
below results that were mentioned explicitly in
Section~\ref{sec:left-right} (sometimes adapting
their wording) or were left implicit in
Section~\ref{sec:left-right} but whose variants
had appeared in Section~\ref{sec:right-left}.

\begin{myprop}{\ref{pro:7}b}
Let~$(\bM,\bmu)$ be an \epri Markov chain
whose terminal component
has size at least two.
The Markov chain~$(\overline{\bM},\overline{\bmu})$
defined in Section~\ref{sec:left-right}
is \epri.
\end{myprop}

\begin{proof}
Let~$\bG = (\A,\bE,\bpi)$
be the underlying graph of the Markov chain~$(\bM,\bmu)$, let~$\bX$ be its terminal component,
and let~$\bnu$ be its stationary distribution.
In addition, for all~$x \in \A$, let ~$x^\uparrow = \{y \in \X \colon x < y \text{ and }
(x,y) \in E\}$ and~$x^\downarrow = \{y \in \X \colon x > y \text{ and }
(x,y) \in E\}$.

The distribution ~$\overline{\bnu}$ on~$\overline{\bS}$ defined by~$\overline{\bnu}(x,\updownarrow) =
\bnu(x) \bM^{\updownarrow}(x)$
is a probability distribution, because
\[
\overline{\bnu}(x,\uparrow) +
\overline{\bnu}(x,\downarrow)
= \frac{1}{1-\bM(x,x)} \sum_{y \colon \! x \neq y}
\bM(x,y) \bnu(x) = \bnu(x) \tag{2}\label{eq:2}\]
for all~$x \in \A$. We also deduce from~\eqref{eq:2} that
\begin{align*}
\overline{\bM} \overline{\bnu}(x,\updownarrow) -
\bM(x,x) \overline{\bnu}(x,\updownarrow)
& = \sum_{y \colon \! x < y}
\frac{\bM^\updownarrow(x)}{\bM^\downarrow(y)} \bM(y,x) \overline{\bnu}(y,\downarrow) +
\sum_{y \colon \! x > y}
\frac{\bM^\updownarrow(x)}{\bM^\uparrow(y)} \bM(y,x) \overline{\bnu}(y,\uparrow) \\
& = \bM^\updownarrow(x) \sum_{y \colon \! x \neq y}
\bM(y,x) \bnu(y) \\
& = \bM^\updownarrow(x)
\left(\bM \bnu(x) - \bM(x,x) \bnu(x)\right)
= (1 - \bM(x,x)) \overline{\bnu}(x,\updownarrow),
\end{align*}
i.e., that~$\overline{\bM} \overline{\bnu}(x,\updownarrow) =
\overline{\bnu}(x,\updownarrow)$, for all~$(x,\updownarrow)
\in \A \times \{\uparrow,\downarrow\}$.
This means that~$\overline{\bnu}$ is a stationary distribution of~$(\overline{\bM},\overline{\bmu})$.

This probability distribution is positive
on the set~$\overline{\bX} \eqdef
\{(x,\updownarrow) \in \overline{\bS} \colon
x \in \bX\}$, and zero outside of~$\overline{\bX}$.
Since~$\overline{\bnu}$ is non-zero, it follows that~$\overline{\bX}$ is non-empty.

Now, let~$\overline{\bG}$ be the
underlying graph of~$(\overline{\bM},\overline{\bmu})$.
We shall prove that~$\overline{\bX}$ satisfies the
requirements~(i) and~(iii) of \epri
Markov chains.

Consider two states~$(x,\updownarrow)$ in~$\overline{\bX}$ and~$(z,\Updownarrow)$ in~$\overline{\bS}$.
Let~$y$ and~$t$ be letters in~$x^\updownarrow$ and~$z^\Updownarrow$, respectively.
The graph~$\bG$ contains a finite path from~$z$ to~$y$
whose second vertex is~$t$ and whose second last vertex is~$x$.
Therefore,~$\overline{\bG}$ contains a finite path
from~$(z,\Updownarrow)$ to~$(x,\updownarrow)$,
which shows that~$\overline{\bX}$ satisfies the requirement~(i).

Finally, consider a trajectory~$(\bY_n)_{n \geqslant 0}$ of~$\overline{\bM}$.
Its projection onto the first component is a trajectory
in~$\overline{\bG}$, and almost surely contains a vertex~$x \in \X$, followed by another vertex~$y$.
Thus,~$(\bY_n)_{n \geqslant 0}$ contains the vertex~$(x,\uparrow)$ if~$x < y$, or~$(x,\downarrow)$
if~$x > y$, and in both cases that vertex belongs to~$\overline{\bX}$. This shows that~$\overline{\bX}$ satisfies the requirement~(iii).
\end{proof}

\begin{mylem}{\ref{lem:9}b}
The letters of the word~$\eis(w)$
are generated from left to right
by the Markov chain~$(\ovl{\bM},\ovl{\bmu})$
with set of states~$\bU^\wedge$,
whose initial distribution is defined by
\[\ovl{\bmu}(w) = \sum_{w' \in \bV^\wedge} \mathbf{1}_{w'_{-1} = w_0} \bmu(w'_0) \bmm(w') \bmm(w)
 \bM^\uparrow(w_{-1}),\]
and whose transition matrix is defined by
\[
\ovl{\bM}(w,w') = \dfrac{\bM^\uparrow(w'_{-1})}{\bM^\uparrow(w_{-1})}
\bmm(w') \mathbf{1}_{w_{-1} = w'_0}.\]
\end{mylem}

\begin{proof}
Let~$u^{(1)}, u^{(2)}, \ldots, u^{(k)}$ be unimodal words
such that~$u^{(i)}_{-1} = u^{(i+1)}_0$
for all~$i \leqslant k-1$.
These are the~$k$ leftmost letters of the word~$\eis(w)$ if and only if there exists
a word~$v \in \bV^\wedge$,
two letters~$x,y \in \A$ and an integer~$\ell \geqslant 0$
such that~$v_{-1} = u^{(1)}_0$,~$u^{(k)}_{-1} = x < y$, and~$w$ begins with the prefix~$v \cdot u^{(1)}_{1 \cdots} \cdot u^{(2)}_{1 \cdots} \cdots u^{(k)}_{1 \cdots} \cdot x^\ell \cdot y$.
This happens with probability
\[
\mathbf{P}_{v,x^{\ell-1} \cdot y} \eqdef 
\bmu(v_0) \bmm(v) \bmm(u^{(1)}) \bmm(u^{(2)}) \cdots
\bmm(u^{(k)}) \bM(x,x)^\ell \bM(x,y).\]
Summing these probabilities for all~$v$,~$y$ and~$\ell$, we observe that~$u^{(1)}, u^{(2)}, \ldots, u^{(k)}$
are the left letters of~$\eis(w)$ with probability
\begin{align*}
\mathbf{P} & = \sum_{v \in \bV^\wedge} \mathbf{1}_{v_{-1} = w_0}
 \bmu(v_0) \bmm(v) \bmm\big(u^{(1)}\big) \bmm\big(u^{(2)}\big) \cdots
\bmm\big(u^{(k)}\big) \bM^\uparrow\big(u^{(k)}_{-1}\big) \\
& = \ovl{\bmu}\big(u^{(1)}\big)\ovl{\bM}\big(u^{(1)},u^{(2)}\big)
\ovl{\bM}\big(u^{(2)},u^{(3)}\big) \cdots 
\ovl{\bM}\big(u^{(k-1)},u^{(k)}\big).
\end{align*}

Finally, Corollary~\ref{cor:8B} proves that,
if~$w$ is a right-infinite word whose letters
are generated by~$(\bM,\bmu)$ from left to right,
the word~$\eis(w)$ is almost surely infinite.
It follows that~$\ovl{\bmu}$ is indeed a probability
distribution that~$\ovl{\bM}$ is indeed a transition matrix, i.e.,
that
\[\sum_{w' \in \bU^\wedge} \ovl{\bmu}(w') = 1 \text{ and }
\sum_{w' \in \bU^\wedge} \ovl{\bM}(w,w') = 1\]
for all words~$w \in \bU^\wedge$.
\end{proof}

Then, we adapt Lemma~\ref{lem:10},
which requires introducing variants of the
sets~$\U^\medbackslash$ and~$\U^\medslash$
of Section~\ref{sec:right-left}.
These variants are the sets
\begin{align*}
\bU^\medbackslash & \eqdef \{w_0 w_1 \cdots w_\ell
\in \A^\ast \colon w_0 < w_1 \geqslant w_2 \geqslant
\ldots \geqslant w_{\ell-1}\} \\
\bU^\medslash & \eqdef \{w_0 w_1 \cdots w_\ell
\in \A^\ast \colon w_0 > w_1 \leqslant w_2 \leqslant
\ldots \leqslant w_{\ell-1}\}.
\end{align*}

\begin{mylem}{\ref{lem:10}b}
\label{lem:10B}
For all letters~$x \in \A$, we have
\[
\bnu(x) =
\sum_{w \in \bU^\medbackslash 
\colon x = w_{-1}}
\bnu(w_0) \bmm(w) \text{ and }
\bnu(x) =
\sum_{w \in \bU^\medslash 
\colon x = w_{-1}}
\bnu(w_0) \bmm(w).\]
\end{mylem}

\begin{proof}
Up to reversing the order~$\leqslant$ on~$\A$,
both equalities are equivalent to each other.
Hence, we focus on proving the left one.
Let~$x$ be some element of~$\bX$, let~$\hat{\bM}$ be the \emph{reverse} transition matrix
of~$\bM$, such as described
in Theorem~\ref{thm:1.22}, and let~$(\bY_n)_{n \geqslant 0}$ be the Markov chain
with first element~$\bY_0 = x$ and
with transition matrix~$\hat{\bM}$.
Finally, let~$\mathbf{T}$ be the stopping time
defined as the smallest integer~$n \geqslant 1$
such that~$\bY_n < \bY_{n-1}$.
Since~$\hat{\bM}$ is \epri, ~$\mathbf{T}$ is almost surely finite.

For each word~$w \in \bU^\medbackslash$ such that~$x = w_{-1}$,
the Markov chain~$(\bY_n)_{n \geqslant 0}$
has a probability
\[
\mathbf{P}_w \eqdef 
\hat{\bM}(w_{-1},w_{-2}) \cdots
\hat{\bM}(w_2,w_1) \hat{\bM}(w_1,w_0)\]
of starting with the letters~$w_{-1}, \ldots, w_2, w_1, w_0$,
in which case~$\mathbf{T} = |w|-1$.
Theorem~\ref{thm:1.22} thus proves that
\[
\mathbf{P}_w =
\frac{\bnu(w_0)}
{\bnu(w_{-1})}
\bM(w_0,w_1) \bM(w_1,w_2) \cdots \bM(w_{-2},w_{-1})
= \frac{\bmm(w) \bnu(w_0)}{\bnu(x)}.\]
Conversely, whenever~$\mathbf{T} < 0$,
the Markov chain~$(\bY_n)_{n \geqslant 0}$ starts with
such a sequence of letters.
Consequently, the probabilities~$\mathbf{P}_w$ sum up to~$1$,
which completes the proof.
\end{proof}

Let us now introduce the function~$\bnu^+ \colon \A \to \bbR$ defined by
\[
\bnu^+(x) = \sum_{y \colon x < y} \bnu(y) \bM(y,x)\]
for every letter~$x \in \A$.

\begin{mylem}{\ref{lem:11}b}
\label{lem:11B}
For all letters~$x \in \A$, 
we have
\[
\bnu^+(x) = \sum_{w \in \bU^\wedge 
\colon x = w_{-1}} \! \bnu^+(w_0) \bmm(w).\]
\end{mylem}

\begin{proof}
We associate every pair~$(u,v) \in \U^\medslash \times \U^\medbackslash$ such that~$u_{-1} = v_0$ and~$v_{-1} > x$
with the pair~$(y,w) \eqdef (u_0,u_{1 \cdots} \cdot v_{1 \cdots} \cdot x) \in \A \times \U^\wedge$, which is such that~$y > w_0$. This association is bijective, and thus
Lemma~\ref{lem:10B} proves that
\begin{align*}
\bnu^+(x) & = \sum_{y \colon x < y} \bnu(y) \bM(y,x) =
\sum_{y \colon x < y} \left(\sum_{v \in \bU^\medbackslash \colon v_{-1} = y} \left(\sum_{u \in \bU^\medslash \colon u_{-1} = v_0}  \bnu(u_0) \bmm(u) \bmm(v) \right)\right) \\
& =
\sum_{w \in \bU^\wedge \colon x = w_{-1}}
\bnu^+(w_0) \bmm(w). \qedhere
\end{align*}
\end{proof}
\bigskip

\begin{myprop}{\ref{pro:9}b}
Let~$(\bM,\bmu)$ be an \epri Markov chain
whose terminal component
has size at least two.
The Markov chain~$(\ovl{\bM},\ovl{\bmu})$
is \epri.
\end{myprop}

\begin{proof}
First, let~$\gamma_1$ be the constant of
Corollary~\ref{cor:8B}. Theorem~\ref{thm:4.16}
proves that
\[
\gamma_1 = \sum_{x \in \A} \bnu^{+}(x) \bM^\uparrow(x).\]
Then, consider the distribution~$\ovl{\bnu}$ defined by
\[
\ovl{\bnu}(w) = \frac{1}{\gamma_1}
\bnu^+(w_0) \bmm(w) \bM^\uparrow(w_{-1}).\]
Lemma~\ref{lem:11B} proves that
\[
\sum_{w \in \bU^\wedge} \ovl{\bnu}(w) =
\sum_{y \in \A} \left(\sum_{w \in \bU^\wedge \colon y=w_{-1}}
\ovl{\bnu}(w) \right) =
\frac{1}{\gamma_1} \sum_{y \in \A} \bnu^+(y) \bM^\uparrow(y) = 1,\]
i.e., that~$\ovl{\bnu}$ is a probability
distribution.

Moreover, for every word~$w \in \bU^\wedge$,
Lemma~\ref{lem:11B} proves that
\[
\ovl{\bM} \ovl{\bnu}(w) =
\frac{1}{\gamma_1}
\sum_{w' \in \U^\wedge \colon \! w'_{-1} = w_0}
\bnu^+(w'_0) \bmm(w' \cdot w) \bM^\uparrow(w_{-1})
=
\frac{1}{\gamma_1} \bnu^+(w_0) \bmm(w) \bM^\uparrow(w_{-1})
= \ovl{\bnu}(w).\]
This means that~$\ovl{\bnu}$ is a stationary
probability distribution of~$(\ovl{\bM},\ovl{\bmu})$.

This probability distribution is positive on the set
\[\ovl{\bX} \eqdef
\{w \in \bU^\wedge \cap \bX^\ast
\colon \exists x \in \bX, x > w_0 \text{ and } \bmm(x \cdot w) \neq 0\}\]
and zero outside of that set. Since~$\ovl{\bnu}$ is a
probability distribution, it follows that~$\ovl{\bX} \neq \emptyset$.

Then, let~$\bG$ and~$\ovl{\bG}$ be the respective
underlying graphs of~$(\bM,\bmu)$ and~$(\ovl{\bM},\ovl{\bmu})$. 
We shall prove that~$\ovl{\bX}$ satisfies the
requirements~(i) and~(iii) of \epri
Markov chains.

Hence, consider two words~$w$ and~$w'$ in~$\ovl{\bX}$,
and let us choose letters~$x, y, z, t \in \X$ such that~$x \in (w'_{-1})^\uparrow$,~$w'_0 \in y^\downarrow$,~$z \in w_{-1}^\uparrow$ and~$w_0 \in t^\downarrow$.
The graph~$\bG$ contains a finite path that starts
with the letter~$t$, then the letters of~$w$
(listed from left to right) and then the letter~$z$, and finishes with the letter~$y$,
the letters of~$w'$ (listed from left to right),
and then the letter~$x$.
This path forms a word~$u$ whose leftmost unimodal
factor
is~$w$ and whose second rightmost unimodal factor is~$w'$.
This proves that~$\ovl{\bG}$ contains a path
from~$w$ to~$w'$, i.e., that~$\ovl{\bX}$
satisfies the requirement~(i).

Finally, consider some trajectory~$(\ovl{\bY}_n)_{n \geqslant 0}$ of the Markov chain~$(\ovl{\bM},\ovl{\bmu})$.
Up to removing the first letter of every word (i.e., vertex)~$w \in \bU^\wedge$ encountered
on this trajectory, and then concatenating the resulting
words, we obtain a trajectory~$(\bY_n)_{n \geqslant 0}$
of~$\bM$ (for an initial distribution that may differ from~$\bmu$).
That trajectory almost surely contains
a vertex~$x \in \bX$, and will then keep visiting vertices
in~$\bX$. Thus, our initial trajectory almost surely contains
a word~$\ovl{\bY}_n$ that is a word with a letter~$x \in \bX$,
and all states~$\ovl{\bY}_m$ such that~$m \geqslant n+1$
will then belong to the set~$\bU^\wedge \cap \bX^\ast = \ovl{\bX}$,
thereby showing that~$\ovl{\bX}$ satisfies the
requirement~(iii).
\end{proof}

\subsection{Proving Theorem~\ref{cor:21}}
\label{sec:a:cor:21}

\corB*

\begin{proof}
Given a finite word~$v$
with~$v$-locally minimal integers~$i_0 < i_1 < \ldots < i_{k-1}$,
we abusively set~$i_{k+1} = |v|$ and~$v_{|v|} = \$$,
so that~$\eis(v)_\ell = v_{i_\ell \cdots i_{\ell+1}}$
for all~$\ell \leqslant k-1$.
Then, let the \emph{source} of a word~$v' = \eis(v)_{a \cdots b}$ be the word ~$v_{i_a \cdots i_{b+1}-1}$, which we also denote by~$\src(v')$, and which is a factor of~$v_{1 \cdots}$.
If two factors of~$\eis(v)$
coincide with each other, so do their sources,
and if they do not overlap with each other,
neither do their sources.
Moreover, the word~$\src(v')$ is at least twice
longer than~$v'$.

More generally, the \emph{$\ell$\textsuperscript{th}
source} of a factor~$v'$ of~$\eis^\ell(v)$,
which we denote by~$\src^\ell(v')$, is just~$v'$
itself if~$\ell = 0$, or the~$(\ell-1)$\textsuperscript{th} source of~$\src(v')$
if~$\ell \geqslant 1$.
Thus, if two letters of~$\eis^\ell(v)$ coincide
with each other, so do their ~$\ell$\textsuperscript{th} sources, which are 
non-overlapping factors of~$v_{2^\ell-1 \cdots}$
of length at least~$2^\ell$.
Moreover, since the last letter of~$\eis^\ell(v)$ is
the only one that ends with the character~$\$$,
it cannot coincide with any other letter of~$\eis^\ell(v)$.
Therefore, the~$\ell$\textsuperscript{th} sources of our two
equal letters are in fact factors of the word~$v_{2^\ell-1 \cdots |v|-2^\ell}$.

We say that~$v$ is \emph{$k$-periodic
except at borders of length~$b$}
if~${v_j = v_{j+k}}$ whenever~${b \leqslant j < j+k \leqslant |v|-b}$.
If the factor~$v_{b \cdots |v|-b}$ has exactly one letter,
none of the integers~$b+1,\ldots,|v|-b$
is locally~$v$-minimal, and thus~$|\eis(v)| \leqslant b$, thereby proving that
the word~$\eis^\ell(v)$ cannot exist whenever~$\ell \geqslant \log_2(b)+1$.
This case occurs in particular when~${k = 1}$.

Similarly, if~$|v| \leqslant 2b+3k$, the word~$\eis^\ell(v)$ cannot exist whenever~$\ell \geqslant \log_2(\max\{b,k\})+3$.

If, on the contrary, the factor~$v_{b \cdots |v|-b}$ has at least two letters
and is of length at least~$3k$,
there exists a factor~$\mathbf{f}$
of~$\eis(v)$ whose source
is a word of the form~$v_{j \ldots j+k-1}$
for some~$j$ such
that~${b \leqslant j < j+k \leqslant |v|-b}$.
Let us then write~$v$ as a concatenation of the form~$u \cdot \src(\mathbf{f})^t \cdot u'$ where~$u$ and~$u'$ have length at most~${b+k}$, and~$t$ is a
positive integer.
We can also write~$\eis(v)$ as a word of the form~$\mathbf{a} \cdot \mathbf{f}^t \cdot \mathbf{a}'$
such that~$\src(\mathbf{a})$ is a suffix of~$u$ and~$\src(\mathbf{b}) = u'$. By construction, we have
\[
\mathbf{a}| \leqslant |u| / 2 \leqslant (b+k) / 2 \text{, }
|\mathbf{f}| \leqslant |\src(\mathbf{f})| / 2 = k / 2
\text{ and }
|\mathbf{a}'| \leqslant |u'| / 2 \leqslant (b+k) / 2,\]
which means that~$\eis(v)$ is~$k'$-periodic
except at borders of length~$b'$ for some integers~$k' \leqslant k/2$ and~$b' \leqslant (b+k)/2
\leqslant \max\{b,k\}$.
Thus, an immediate induction on~$k$ proves that
the word~$\eis^\ell(v)$ cannot exist whenever~$\ell \geqslant \log_2(\max\{b,k\}) + \log_2(k) + 3$.

\medskip

Now, let~$G = (\S,E)$ be the underlying graph of
the Markov chain~$(M,\mu)$ that generates the letters of~$w$,
and let~$s = |\S|$ be the number of states of the
Markov chain.
Let~$\X$ (respectively,~$\Y$)
be the set of states~${x \in E}$ that belong
to a cyclic (respectively, non-cyclic) terminal
connected component of~$G$.
Finally, let~$\varepsilon$ be the smallest non-zero
edge weight in~$G$, i.e.,~$\varepsilon = \min\{M(x,y) \colon M(x,y) > 0\}$,
and let~$\eta = - \log_2(1-\varepsilon^s) / s > 0$.

From each state~$x \in E$, there is a path
starting at~$x$ and ending in~$\X \cup \Y$.
Furthermore, the shortest such path is of length
at most~$s$. It follows, for all~$k \geqslant 0$,
that
\[
\bbP[X_{k+s} \in \X \cup \Y \mid X_k = x] \geqslant
\varepsilon^s\]
and, more generally, that
\[
\bbP[X_m \notin \X \cup \Y] \leqslant
(1-\varepsilon^s)^{m/s-1} = 2^{-(m-s)\eta}\]
for all~$m \geqslant 0$.

Similarly, assume that~$\Y \neq \emptyset$.
Consider some state~$x \in \Y$, and let~$y \in \Y$ be a state accessible from~$x$ and with
at least two outgoing edges~$(y,z)$ and~$(y,z')$.
Then, let~$p$ be a path from~$x$ to~$y$.
The shortest such path has length at most~$s-1$.
Therefore, provided that~$X_k = x$ for some
integer~$k \geqslant 0$, the trajectory~$(X_i)_{i \geqslant k}$ has a probability at least~$\varepsilon^s$ of starting with the path~$p$ and then going to~$z$,
and a probability at least~$\varepsilon^s$ of starting
with the path~$p$ and then going to~$z'$. In particular, for each
finite sequence~$\mathbf{q}$ consisting of~$s+1$ states in~$\Y$, we have
\[
\mathbb{P}[(X_i)_{k \leqslant i \leqslant k+s} =
\mathbf{q} \mid X_k] \leqslant 1 - \varepsilon^s\]
and, more generally, if~$\mathbf{q}$ is a sequence
consisting of~$m+1$ states in~$\Y$, we have
\[
\mathbb{P}[(X_i)_{k \leqslant i \leqslant k+m} =
\mathbf{q} \mid X_k] \leqslant (1 - \varepsilon^s)^{m/s-1} = 2^{-(m-s)\eta}.\]

Finally, assume that~$w$ is a word of length~$n \geqslant 2^{16 s^2 (s + 1) + 64 s^2 / \eta}$,
and set
\[u = \log_2(n) / (4s) \text{, }
t = 2 \lfloor \log_2(u) \rfloor + \ell \text{ and }
m = 2^t - 1.\]
Since~$m \geqslant 2^{\ell-2} u^2 - 1$ and~$2^\ell u \geqslant 1$,~we~have
\[
2^{-(m-s)\eta} \leqslant
2^{-(2^{\ell-2} u^2 - s - 1)\eta} \leqslant
2^{-(2^\ell u s (s+1) + 2^{\ell+2} u s / \eta - (s + 1))\eta} \leqslant 2^{-2^{\ell+2} u s} = n^{-2^{\ell+2}}.\]

In conclusion, let us consider several
(non mutually exclusive) events:
\begin{itemize}[itemsep=0pt]
\item the event~$\mathcal{E}_1$, which occurs if~$X_m \notin \X \cup \Y$;
\item the event~$\mathcal{E}_2$, which occurs if~$X_m \in \X$;
\item for all integers~$u$ and~$v$ such that~$m \leqslant u$,~$u+m < v$ and~$v+m < n-m$,
the event~$\mathcal{F}_{u,v}$, which occurs if~$X_m \in \Y$ and~$X_{u+i} = X_{v+i}$
whenever~$0 \leqslant i \leqslant m$.
\end{itemize}

If~$\mathcal{E}_2$ happens, the word~$w$ is~$k$-periodic except at borders of length~$m$,
where~$k \leqslant s$ is the length of the
cycle of~$G$ to which~$X_m$ belongs.
Thus, in that case, the IS-algorithm cannot make
more than
\begin{align*}
\log_2(\max\{s,m\}) + \log_2(s) + 2 & =
\log_2(m) + \log_2(4s) \\
& \leqslant 2 \log_2(u) + \log_2(4s) + \ell \leqslant
2 \log_2(\log_2(n)) + \ell
\end{align*}
recursive function calls.

Then, if the IS-algorithm makes more than~$2 \log_2(\log_2(n)) + \ell \geqslant t$
recursive function calls, two letters of the word~$\eis^t(w)$ must coincide with each other.
This means that two non-overlapping
length-$m$ factors of the word~$w_{m \cdots |w|-m-1}$
must coincide with each other, and therefore that
either~$X_m \notin \Y$ or that one of the events~$\mathcal{F}_{u,v}$ must have occurred.
If~${X_m \notin \Y}$, and since~$\mathcal{E}_2$ may
not have occurred, this means that~$\mathcal{E}_1$ occurred.

Moreover, the events~$\mathcal{E}_1$ and~$\mathcal{F}_{u,v}$ are rare:
our above study proves
that~${\bbP[\mathcal{E}_1] \leqslant n^{-2^{\ell+2}}}$;
then, for all~$u$ and~$v$, the sequence~$(X_i)_{u \leqslant i \leqslant u+m}$ being fixed,
the event~$\mathcal{F}_{u,v}$ also occurs with
probability~$\bbP_{u,v} \leqslant n^{-2^{\ell+2}}$.

In conclusion, the IS-algorithm makes more than~$2 \log_2(n) + \ell$ recursive function calls with
a probability
\[\mathbf{P} \leqslant \bbP[\mathcal{E}_1] +
\sum_{u,v} \bbP[\mathcal{F}_{u,v}]
\leqslant n^2 \times n^{-2^{\ell+2}} \leqslant
n^{-2^{\ell}}. \qedhere\]
\end{proof}
\end{document}